\documentclass[11pt,letterpaper]{amsart}
\usepackage{fullpage}
\usepackage{amsmath,amssymb,amsthm,mathscinet,eucal}
\usepackage{enumitem}
\usepackage[table]{xcolor}
\usepackage{cite, hyperref}
\usepackage[noabbrev,capitalize]{cleveref}

\renewcommand{\epsilon}{\varepsilon}
\renewcommand{\phi}{\varphi}
\DeclareMathOperator{\abs}{abs}
\DeclareMathOperator{\As}{As}
\DeclareMathOperator{\CAs}{CAs}
\DeclareMathOperator{\CDF}{CDF}
\DeclareMathOperator{\CCon}{CCon}
\DeclareMathOperator{\Coh}{Coh}
\DeclareMathOperator{\CSat}{CSat}
\DeclareMathOperator{\GELO}{GELO}
\DeclareMathOperator{\id}{id}
\DeclareMathOperator{\mono}{mono}
\DeclareMathOperator{\CIsom}{CIsom}
\DeclareMathOperator{\sgn}{sgn}
\DeclareMathOperator{\CSols}{CSols}
\DeclareMathOperator{\Pair}{Pair}
\DeclareMathOperator{\Part}{Part}
\DeclareMathOperator{\SSCC}{SSCC}
\DeclareMathOperator{\Stab}{Stab}
\newcommand{\C}{{\mathbb C}}
\newcommand{\E}{{\mathbb E}}
\newcommand{\N}{{\mathbb N}}
\newcommand{\Q}{{\mathbb Q}}
\newcommand{\Z}{{\mathbb Z}}
\newcommand{\cP}{{\mathcal P}}
\newcommand{\cQ}{{\mathcal Q}}
\newcommand{\cR}{{\mathcal R}}
\newcommand{\cS}{{\mathcal S}}
\newcommand{\cW}{{\mathcal W}}
\newcommand{\fC}{{\mathfrak C}}
\newcommand{\fP}{{\mathfrak P}}
\newcommand{\fS}{{\mathfrak S}}
\newcommand{\card}[1]{\left|{#1}\right|}
\newcommand{\conj}[1]{\overline{#1}}
\newcommand{\sums}[1]{\sum_{\substack{#1}}}
\newcommand{\ceil}[1]{\lceil{#1}\rceil}
\newcommand{\ev}{{\E^\ell_{(f,g)}}}
\newcommand{\evh}{{\E^\ell_h}}
\newcommand{\cmomv}[1]{{\mu^\ell_{{#1},(f,g)}}}
\newcommand{\cmomhv}[1]{{\mu^\ell_{{#1},h}}}
\newcommand{\cmom}{\cmomv{p}}
\newcommand{\cmomh}{\cmomhv{p}}
\newcommand{\var}[1]{\cmomv{2}}
\newcommand{\smom}{{\tilde{\mu}^\ell_{p,(f,g)}}}
\newcommand{\Wrp}{V^{(p)}}
\newcommand{\Grp}{\Gamma^{(p)}}
\newcommand{\Wrtwo}{V^{(2)}}
\newcommand{\Grtwo}{\Gamma^{(2)}}
\newcommand{\Wrthree}{V^{(3)}}
\newcommand{\Grthree}{\Gamma^{(3)}}
\newcommand{\Ccong}{\cong_C}
\newcommand{\indexset}{{[p]\times[2]\times[2]}}
\newcommand{\eindexset}{{E\times[2]\times[2]}}
\newcommand{\leindexset}{{E\times[2]\times\{0\}}}
\newcommand{\reindexset}{{E\times[2]\times\{1\}}}
\newcommand{\bindexset}{{[2]\times [2]}}
\newtheorem{theorem}{Theorem}[section]
\newtheorem{proposition}[theorem]{Proposition}
\newtheorem{lemma}[theorem]{Lemma}
\newtheorem{corollary}[theorem]{Corollary}
\theoremstyle{definition}
\newtheorem{definition}[theorem]{Definition}
\newtheorem{example}[theorem]{Example}

\allowdisplaybreaks

\begin{document}
\title[Moments of crosscorrelation demerit factors]{Moments of crosscorrelation demerit factors of binary sequences}
\author{Daniel J.~Katz and Harmony M.~Vargas}\thanks{Daniel J.~Katz is and Harmony M.~Vargas was with the Department of Mathematics, California State University, Northridge.  Harmony M.~Vargas is with the Department of Mathematics, University of Nebraska-Lincoln.  This paper is based upon work supported in part by the National Science Foundation under Grant 2206454.}
\date{04 September 2026}

\begin{abstract}
Families of sequences with low mutual aperiodic crosscorrelation assist the design of systems for multi-user asynchronous communications and multiple-input multiple-output radar.
The crosscorrelation demerit factor of a pair of sequences is the sum of the squared magnitudes of their crosscorrelation values at every shift when the sequences are normalized to unit Euclidean norm, and the merit factor is the reciprocal of the demerit factor.
For each positive integer $\ell$, we endow the $2^{2 \ell}$ pairs of binary sequences of length $\ell$ with uniform probability measure and study the distribution of their crosscorrelation demerit factors.
Sarwate showed that the mean value is always $1$ regardless of length $\ell$.
We develop a method for finding an exact formula for the $p$th central moment (for any positive integer $p$) as a function of $\ell$.
Formulae for the variance and third central moment ($p=2$ and $3$) are then obtained by hand calculations, while the fourth through sixth central moments are obtained by computer-assisted calculations.
Our theory also shows that all the central moments must be strictly positive for $p\geq 2$ and $\ell \geq 3$.  
\end{abstract}
\keywords{binary sequence, crosscorrelation, demerit factor, merit factor, moment}
\subjclass{60C05, 94A55, 05A15, 05A18, 05E18}

\maketitle

\section{Introduction}
In this paper we consider binary sequences aperiodically.
It is convenient for us to represent an aperiodic sequence as a doubly infinite list $f=(\ldots,f_{-1},f_0,f_1,f_2,\ldots)$ of complex numbers with only finitely many nonzero terms.
If $\ell$ is a nonnegative integer, then this $f$ is a {\it binary sequence of length $\ell$} if and only if the terms $f_0,f_1,\ldots,f_{\ell-1}$ lie in $\{-1,1\}$ and all other terms equal $0$.
Binary sequences are used extensively in communications, design of scientific instruments, and remote sensing \cite{Golomb,Golomb-Gong,Schroeder}.
In multi-user communications networks, it is important to have families of sequences such that no sequence in the family resembles a time-delayed version of any other sequence in the family.

One measure of resemblance between two sequences is aperiodic crosscorrelation.
The aperiodic crosscorrelation function for a pair of sequences determines both their periodic (even) and negaperiodic (odd) crosscorrelation functions, which are critical for performance analysis in phase-modulated communications systems.
If $f$ and $g$ are sequences and $s \in \Z$, then the {\it aperiodic crosscorrelation of $f$ with $g$ at shift $s$} is
\[
C_{f,g}(s) = \sum_{j \in \Z} f_{j+s} \conj{g_j}.
\]
Since both $f$ and $g$ have only finitely many nonzero terms, this is essentially a finite sum, and the sum can be nonzero for only finitely many shifts $s$.
One should note that $C_{g,f}(s)=\conj{C_{f,g}(-s)}$, so that the crosscorrelation function of $g$ with $f$ is deducible from that of $f$ with $g$.
When $f=g$, the value $C_{f,f}(s)$ is the {\it aperiodic autocorrelation of $f$ at shift $s$}, which is an important quantity for determining the performance of $f$ in systems requiring synchronization.
For example, in our communications network we would want $|C_{f,f}(s)|$ for each nonzero $s$ to be small compared with $C_{f,f}(0)$, so that the strong peak in the autocorrelation indicates synchronization.
Note that $C_{f,f}(0)$ is the squared Euclidean norm of $f$.
For two distinct sequences $f$ and $g$ in our network, we would like $|C_{f,g}(s)|$ to be small for all values of $s$, so that two users may be distinguished from each other regardless of the delay between their signals modulated by $f$ and $g$.

There are two measures of smallness of the correlation values for a pair $(f,g)$ of sequences.
We call worst-case ($\ell^\infty$) measure {\it peak crosscorrelation}: it is the maximum of $|C_{f,g}(s)|$ over all $s \in \Z$.
The other is a mean square ($\ell^2$) measure called the {\it crosscorrelation demerit factor of $f$ and $g$} (where $f$ and $g$ are nonzero sequences), defined as
\begin{equation}\label{Barbara}
\CDF(f,g)=\frac{\sum_{s \in \Z} |C_{f,g}(s)|^2}{C_{f,f}(0) C_{g,g}(0)},
\end{equation}
and we should note that $\CDF(g,f)=\CDF(f,g)$ because $C_{g,f}(s)=\conj{C_{f,g}(-s)}$.
The terms $C_{f,f}(0)$ and $C_{g,g}(0)$ in the denominator of our definition make the crosscorrelation demerit factor the same as the sum over all shifts of the squared magnitudes of the crosscorrelation values of the sequences obtained by normalizing $f$ and $g$ to have Euclidean norm $1$.
The performance of a phase-coded spread-spectrum multiple-access communications system is largely determined by sums of squared magnitudes of crosscorrelations of the sequences used to modulate the signals \cite{Pursley,Karkkainen}
The crosscorrelation demerit factor is a natural normalization of this sum of squared magnitudes, since it is invariant under scalar multiplication of the sequences.

For each length $\ell$, we would like to understand the distribution of the crosscorrelation demerit factors for pairs of binary sequences of that length.
Since $C_{f,f}(0)=\ell$ for every binary sequence of length $\ell$, our denominator in \eqref{Barbara} is the constant $\ell^2$ for all pairs $(f,g)$ of such sequences.
Thus, it is the numerator of \eqref{Barbara} that is the interesting and difficult part to study, and so we define
\[
\SSCC(f,g) = \sum_{s \in \Z} |C_{f,g}(s)|^2,
\]
which is the sum of the squared magnitudes of all crosscorrelation values for $f$ with $g$, and then we have
\begin{equation}\label{Celeste}
\CDF(f,g) = \frac{\SSCC(f,g)}{\ell^2}.
\end{equation}

For this entire paper $\Pair(\ell)$ denotes the set of all $2^{2\ell}$ pairs of binary sequences of length $\ell$ with uniform probability measure.
The expected value of a random variable $v$ with respect to this distribution is denoted by $\ev v(f,g)={\mathbf E}_{(f,g) \in \Pair(\ell)}(v(f))$.
With respect to this distribution, our random variable's $p$th central moment is denoted
\[
\cmom v(f,g) = \ev\left(v(f,g) - \ev v(f,g)\right)^p
\]
and the $p$th standardized moment is denoted
\[
\smom v(f,g) = \frac{\cmom v(f,g)}{\left(\var v(f,g)\right)^{p/2}}.
\]
The mean value of $\SSCC(f,g)$ for $(f,g) \in \Pair(\ell)$  was found by Sarwate \cite[eq.~(38)]{Sarwate} to be $\ell^2$; we state this result in terms of the crosscorrelation demerit factor.
\begin{theorem}[Sarwate, 1984]\label{Abigail}
If $\ell$ is a positive integer, then
\[
\E^\ell_{(f,g)} \CDF(f,g)=1.
\]
\end{theorem}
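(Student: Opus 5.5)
By \eqref{Celeste}, it suffices to show that $\ev \SSCC(f,g)=\ell^2$. We will expand $\SSCC(f,g)$ as a polynomial in the independent uniform $\pm1$ random variables $f_0,\ldots,f_{\ell-1},g_0,\ldots,g_{\ell-1}$ and then use linearity of expectation.

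Since binary sequences are real, $|C_{f,g}(s)|^2=C_{f,g}(s)^2$. Writing the square of the crosscorrelation as a double sum gives
\[
\SSCC(f,g)=\sum_{s\in\Z}\sum_{j,k\in\Z} f_{j+s}g_j f_{k+s} g_k ,
\]
where only indices with $j,k,j+s,k+s\in\{0,\ldots,\ell-1\}$ contribute. Taking expectations term by term, we use independence of $f$ and $g$ to factor each term:
\[
\ev\bigl(f_{j+s}f_{k+s}g_jg_k\bigr)=\E(f_{j+s}f_{k+s})\,\E(g_jg_k).
\]
For independent uniform $\pm1$ entries, $\E(g_jg_k)$ equals $1$ if $j=k$ and $0$ otherwise. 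The same holds for $\E(f_{j+s}f_{k+s})$, since $j+s=k+s$ exactly when $j=k$. So only the diagonal terms $j=k$ survive, and each contributes $f_{j+s}^2g_j^2=1$.

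It remains to count the pairs $(s,j)$ with $0\le j\le \ell-1$ and $0\le j+s\le \ell-1$. For each of the $\ell$ choices of $j$, there are exactly $\ell$ admissible values of $j+s$, so the count is $\ell^2$. Hence $\ev\SSCC(f,g)=\ell^2$, and dividing by $\ell^2$ gives $\ev\CDF(f,g)=1$.

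There is no real obstacle here. The only care needed is the bookkeeping of the index ranges, and the change of variables $(s,j)\mapsto(j+s,j)$ handles it cleanly. Equivalently, this is Parseval: $\sum_s C_{f,g}(s)^2=\sum_{a,b} f_a^2g_b^2$ in expectation.
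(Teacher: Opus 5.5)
Your proof is correct. The paper gives no proof of this statement; it quotes it from Sarwate \cite[eq.~(38)]{Sarwate}, so there is no in-paper argument for yours to diverge from.

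Your computation is also what the paper's own machinery gives at $p=1$. Write $\SSCC(f,g)=\sum_{s+t=u+v} f_s g_t f_u g_v$ as in the proof of \cref{Francis}. Then \cref{Irene} says $\ev(f_s g_t f_u g_v)=1$ exactly when the coherently induced partition is even, i.e.\ when $s=u$ and $t=v$, and it is $0$ otherwise. There are $\ell^2$ such quadruples, and the constraint $s+t=u+v$ holds automatically for them. Your substitution $(s,j)\mapsto(j+s,j)$ performs the same count.

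The closing ``Parseval'' remark is unnecessary and loosely phrased, since the displayed identity only holds after taking expectations. Nothing in your argument depends on it.
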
  
Notice that the mean value is independent of length.
In the same paper, Sarwate also determined the mean square value for the autocorrelation demerit factor (which equals $\CDF(f,f)-1$ for a sequence $f$).
Then the variance of the autocorrelation demerit factor was studied by Borwein and Lockhart \cite{Borwein-Lockhart}; Aupetit, Liardet, and Slimane \cite{Aupetit-Liardet-Slimane}; Jedwab \cite{Jedwab}; and Katz and Ramirez \cite{Katz-Ramirez}.
The precise variance for the autocorrelation demerit factor was determined by Jedwab, and some of the higher central moments were determined by Katz and Ramirez.

In this paper we give explicit formulae (as a function of sequence length $\ell$) for the variance and the third through sixth central moments of $\SSCC(f,g)$ as $(f,g)$ ranges over $\Pair(\ell)$.
These formulae arise from a general formula for the $p$th central moment of $\SSCC(f,g)$ for every positive integer $p$.
The derivation of this formula requires some delicate combinatorial arguments using group actions on partitions.
The calculation of an explicit functional form in terms of the length $\ell$ becomes more difficult as $p$ increases.
Our theory enables the variance and third central moment to be calculated by hand, while for the fourth, fifth, and sixth central moments, a computer program has automated the calculations.

We first present the formula for the variance of the crosscorrelation demerit factor, which is $\ell^{-4}$ times the variance of $\SSCC$ by \eqref{Celeste}.
The variance of $\SSCC$ is a polynomial function of the sequence length $\ell$, so that the variance of $\CDF$ is a rational function of $\ell$.
\begin{theorem}\label{Valerie}
If $\ell$ is a positive integer, then
\[
\cmomv{2} \CDF(f,g) = \frac{4\ell^3  - 6\ell^2 + 2\ell}{3\ell^4}.
\]
\end{theorem}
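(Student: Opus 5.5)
The plan is to compute the variance of $\SSCC(f,g)$ directly by expanding it as a polynomial in the independent uniform $\pm 1$ random variables $f_0,\dots,f_{\ell-1},g_0,\dots,g_{\ell-1}$. The reason this works is that monomials in independent Rademacher variables form an orthonormal system: two squarefree monomials have $\ev$ of their product equal to $1$ if they coincide and $0$ otherwise. By \eqref{Celeste}, $\cmomv{2}\CDF(f,g)=\ell^{-4}\cmomv{2}\SSCC(f,g)$, so it suffices to show that
\[
\cmomv{2}\SSCC(f,g)=\tfrac{4\ell^3-6\ell^2+2\ell}{3}.
\]

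\textbf{Step 1: expand and isolate the constant.} Since everything is real,
\[
\SSCC(f,g)=\sum_s C_{f,g}(s)^2=\sum f_a g_b f_c g_d ,
\]
where the sum runs over $a,b,c,d\in\{0,\dots,\ell-1\}$ with $a-b=c-d$, equivalently $a-c=b-d$. The terms with $a=c$ force $b=d$. Each such term is $f_a^2g_b^2=1$, and there are $\ell^2$ of them, giving the constant $\ell^2$; this recovers Theorem~\ref{Abigail}. Call the remaining terms $Y$, namely those with $a\neq c$ and $b\ne d$. Then $\cmomv{2}\SSCC=\ev Y^2$, since $\ev Y=0$: each term of $Y$ is a product of four distinct independent variables.

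\textbf{Step 2: rewrite $Y$ with distinct monomials.} A given monomial $f_af_cg_bg_d$ with $a\neq c$ and $b\neq d$ arises from the ordered tuples $(a,c,b,d)$ and $(c,a,d,b)$, both of which satisfy the constraint. The swapped tuple $(a,c,d,b)$ would require $a-c=d-b$, which together with $a-c=b-d$ forces $a=c$. So, writing $t=c-a>0$ (and then $d=b+t$),
\[
Y=2\sum_{t=1}^{\ell-1}\;\sum_{a=0}^{\ell-1-t}\;\sum_{b=0}^{\ell-1-t} f_af_{a+t}\,g_bg_{b+t}.
\]
The key claim is that distinct triples $(t,a,b)$ give distinct monomials: the $f$-part $f_af_{a+t}$ determines the set $\{a,a+t\}$ and hence $a$ and $t$, and then the $g$-part determines $b$.

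\textbf{Step 3: count.} By orthonormality, $\ev Y^2$ equals $4$ times the number of triples $(t,a,b)$, that is,
\[
\ev Y^2=4\sum_{t=1}^{\ell-1}(\ell-t)^2=4\cdot\frac{(\ell-1)\ell(2\ell-1)}{6}=\frac{4\ell^3-6\ell^2+2\ell}{3}.
\]
Dividing by $\ell^4$ gives the formula in Theorem~\ref{Valerie}.

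The only delicate point is the multiplicity bookkeeping in Step 2. One must make sure that each monomial is counted with coefficient exactly $2$: no other orderings of the indices satisfy the constraint, and no accidental coincidences occur between the $f$- and $g$-indices. The latter cannot happen because the $f$'s and $g$'s are independent families. Once that is checked, the rest is a standard sum of squares.
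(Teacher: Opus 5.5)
Your proof is correct, and it takes a genuinely different route from the paper. You work directly with the orthonormal (Walsh) monomial expansion. After the $\ell^2$ diagonal terms are removed, $\SSCC(f,g)-\ell^2=Y$ is a combination of distinct squarefree monomials, each with coefficient $2$. So $\ev Y^2$ is $4$ times the number $\sum_{t=1}^{\ell-1}(\ell-t)^2$ of such monomials. Your multiplicity check in Step 2 is exactly what is needed: only the orderings $(a,c,b,d)$ and $(c,a,d,b)$ satisfy the constraint. The arithmetic is also right.

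The paper instead specializes its general formula (\cref{Vito}), $\cmomv{2}\SSCC(f,g)=\sum_{\fP\in\CIsom(2)}\card{\fP}\CSols(\fP,\ell)$. It uses the display-matrix procedure to show there is a single coherent isomorphism class $\fC$, of size $2$. It then computes $\CSols(\fC,\ell)$ by counting solutions of $x_0+x_2=x_1+x_3$ in $[\ell]^4$ (quoting a Katz--Ramirez lemma) and subtracting the $\ell^2$ improperly colored ones.

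The two computations match term by term. The two partitions in $\fC$ (the $\cP$ and $\cQ$ of \cref{Odysseus}) are your two ways of matching one copy of a monomial with the other: identity, or swapping the sides of the equation. Likewise, $\CSols(\fC,\ell)=(2\ell^3-3\ell^2+\ell)/3=2\sum_{t=1}^{\ell-1}(\ell-t)^2$ is exactly the number of your ordered tuples.

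Your argument is shorter and fully self-contained for $p=2$. It needs no partitions, no group actions and no external counting lemma. The paper's machinery buys uniformity: the same formula handles every $p$ and yields the third through sixth moments and the positivity theorem. For $p\ge 3$, $\ev Y^p$ is no longer a plain sum of squared coefficients. Deciding which products of several monomials have nonzero expectation is precisely the bookkeeping that the coherently contributory partitions systematize.
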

Note that in the limit as $\ell$ tends to infinity, the variance of $\CDF$ tends to zero.
When we combine this fact with Sarwate's result (\cref{Abigail}), we see that for large $\ell$, the vast majority of sequences have crosscorrelation demerit factor close to $1$.

The third central moment of the crosscorrelation demerit factor is also a rational function of the sequence length $\ell$.
By \eqref{Celeste}, the third central moment of $\CDF$ is $\ell^{-6}$ times the third central moment of $\SSCC$, which is a polynomial function of $\ell$.
\begin{theorem}\label{Scott}
If $\ell$ is a positive integer, then
\[
\cmomv{3} \CDF(f,g) = \frac{ 8\ell^4 - 32\ell^3 + 40\ell^2 - 16\ell }{\ell^6}.
\]
\end{theorem}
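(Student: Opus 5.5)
The plan is to work with $\SSCC$ directly, since by \eqref{Celeste} the claim is equivalent to $\cmomv{3}\SSCC(f,g)=8\ell^4-32\ell^3+40\ell^2-16\ell$. Write $[\ell]=\{0,\ldots,\ell-1\}$. Expanding the squared magnitudes (all values are real) gives
\[
\SSCC(f,g)=\sum_{\substack{(a,b,c,d)\in[\ell]^4\\ a-b=c-d}} f_a g_b f_c g_d .
\]
Since the $f_j,g_j$ are independent uniform signs, a term has expectation $1$ if $a=c$ (which forces $b=d$), and $0$ otherwise. This recovers \cref{Abigail}: the mean of $\SSCC$ is $\ell^2$. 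It also gives
\[
\SSCC(f,g)-\ell^2=\sum_{q\in Q} f_a g_b f_c g_d ,
\]
where $Q$ is the set of quadruples $q=(a,b,c,d)$ with $a-c=b-d\neq 0$.

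The third central moment is then $\sum_{(q_1,q_2,q_3)\in Q^3}\E\bigl[\prod_i f_{a_i}g_{b_i}f_{c_i}g_{d_i}\bigr]$, and each expectation is $1$ or $0$. It equals $1$ exactly when every $f$-index occurs an even number of times among $a_1,c_1,a_2,c_2,a_3,c_3$, and likewise for the $g$-indices.

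To organize this, view each $q_i$ as an $f$-edge $\{a_i,c_i\}$ and a $g$-edge $\{b_i,d_i\}$; neither is a loop. A multigraph with three non-loop edges and all degrees even must be a triangle. A doubled edge plus a third edge leaves odd degrees, and three parallel copies give odd degrees. So the $f$-edges form a triangle on three distinct values $x_1,x_2,x_3$, and the $g$-edges form a triangle on three distinct values $y_1,y_2,y_3$. The constraint $a_i-c_i=b_i-d_i$ ties each oriented $f$-edge to an oriented $g$-edge with the same signed difference.

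The next step is a case analysis on how the edge correspondence and orientations line up. The differences around each triangle sum to zero, and the two triangles share the same edge differences. From this I expect two kinds of configurations: the $g$-triple is a translate of the $f$-triple ($y_j=x_j+t$), or it is matched with reversed orientation (a reflection, $y_j=t-x_j$). The second kind should be consistent only under extra coincidences among the differences, such as an isosceles pattern $x_2-x_1=x_3-x_2$. In each case, the number of ordered triples $(q_1,q_2,q_3)$ producing a given pair of point sets is a fixed combinatorial factor. It comes from the $3!$ orderings of the terms and the $2$ orientations of each edge, cut down by consistency; the leading factor should come out as $8$.

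The final step is to count lattice configurations. The generic count is pairs of $3$-subsets of $[\ell]$ related by a translation keeping both inside $[\ell]$. For a fixed triple of span $m$ there are $\ell-m$ admissible translates, so summing $(\ell-m)$ times the number of triples of span $m$ gives a degree-$4$ polynomial. The degenerate (reflection/equal-gap) families add or subtract lower-degree corrections.

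The main obstacle is this bookkeeping: classifying exactly which orientation and ordering assignments are consistent, and not double counting configurations that are simultaneously translates and reflections, for instance arithmetic progressions. I would first enumerate the consistent labelings abstractly. As a check, I would evaluate the resulting polynomial at $\ell=1,2,3$ by brute force: it must vanish at $\ell=1,2$, as the target $8\ell^4-32\ell^3+40\ell^2-16\ell=8\ell(\ell-1)^2(\ell-2)$ does, and equal $48$ at $\ell=3$.
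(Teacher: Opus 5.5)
Your reduction to $\SSCC$, the expansion of $\SSCC-\ell^2$ over $Q$, and the parity argument forcing both the $f$-edges and the $g$-edges to form triangles are all correct. This direct route is a legitimate alternative to the paper's partition and group-action machinery. The gap is in the case analysis that follows: you predict that the reflection configurations need extra coincidences (equal gaps) and contribute only lower-degree corrections. That is false.

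Let the $f$-triangle be $X=\{x_1<x_2<x_3\}$ with gaps $(d_1,d_2)$ and span $m=d_1+d_2$. Each term pairs an $f$-edge with a $g$-edge of the same length. The edge $\{x_1,x_3\}$ is strictly the longest, so the $g$-triangle $Y$ must have gaps $(d_1,d_2)$ (a translate of $X$) or $(d_2,d_1)$ (a reflection $t-X$). Both options exist for \emph{every} $X$, with $\ell-m$ placements each. When $d_1=d_2$ the two sets coincide, but they still give two distinct length-preserving edge matchings. So every $X$ has exactly $2(\ell-m)$ pairs ($Y$, matching).

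Nothing is ``cut down by consistency'' either. For each matching you may order the three terms in $3!$ ways and orient each $f$-edge in $2$ ways. The orientation of the matched $g$-edge is then forced by $a_i-c_i=b_i-d_i$, so each matching yields exactly $48$ ordered triples. Since there are $(m-1)(\ell-m)$ triples $X$ of span $m$,
\[
\cmomv{3}\SSCC(f,g)=96\sum_{m=2}^{\ell-1}(m-1)(\ell-m)^2=96\cdot\frac{\ell(\ell-1)^2(\ell-2)}{12}=8\ell(\ell-1)^2(\ell-2).
\]
Translations and reflections each contribute exactly half of this. If you carry out your plan as written, with translations as the generic term and reflections as corrections, the generic part is only $48\cdot\ell^4/12=4\ell^4$ at leading order. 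You would never reach $8\ell^4$.

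This matches the paper. Its two classes in $\CIsom(3)$, each of size $8$ (\cref{Alexander}), correspond to the systems $x_0+x_3=x_1+x_4=x_2+x_5$ (reflection) and $x_0-x_3=x_1-x_4=x_2-x_5$ (translation). They have identical solution counts $\ell(\ell-1)^2(\ell-2)/2$. Your sanity-check value is also wrong: $8\ell(\ell-1)^2(\ell-2)$ equals $96$, not $48$, at $\ell=3$. There $X=Y=\{0,1,2\}$ admits both the translation by $0$ and the reflection $x\mapsto 2-x$, each giving $48$ triples, so testing against $48$ would mislead you.
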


The fourth through sixth central moments of $\SSCC$ are quasipolynomial functions of the sequence length $\ell$.
The fourth central moment of $\SSCC$ is a quasipolynomial of period $2$, and in view of \eqref{Celeste}, one obtains the fourth central moment of $\CDF$ by multiplying this by $\ell^{-8}$.

\begin{theorem}\label{Kurt}
If $\ell$ is a positive integer, then
\begin{align*}
\cmomv{4} \CDF & (f,g) \\
& =\begin{cases}
\frac{ 80\ell^6 + 2208\ell^5 - 17080\ell^4 + 43640\ell^3 - 47020\ell^2 + 18352\ell }{15\ell^8} & \text{if $\ell$ is even,} \\
\frac{ 80\ell^6 + 2208\ell^5 - 17080\ell^4 + 43640\ell^3 - 47380\ell^2 + 19792\ell -1260 }{15\ell^8} & \text{if $\ell$ is odd.}
\end{cases}
\end{align*}
\end{theorem}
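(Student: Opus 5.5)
By \eqref{Celeste}, $\cmomv{4}\CDF(f,g)=\ell^{-8}\cmomv{4}\SSCC(f,g)$, so the plan is to compute the fourth central moment of $\SSCC$ exactly as a quasipolynomial in $\ell$ and then divide by $\ell^8$.

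\textbf{Expansion into a character sum.} I will write
\[
\SSCC(f,g)-\ell^2=\sum_{s}\sums{j,k \\ (j,k)\ne(j',k')} f_{j+s}g_j f_{k+s}g_k,
\]
which is the full sum $\sum_s\sum_{j,k} f_{j+s}g_jf_{k+s}g_k$ with the diagonal $j=k$ removed. The diagonal terms contribute exactly $\sum_s\#\{j\}=\ell^2$, which recovers Sarwate's mean (\cref{Abigail}). Raising to the fourth power gives a sum over $4$-tuples of index quadruples $(j_i+s_i,j_i,k_i+s_i,k_i)$, $i\in[4]$. The expectation of each product is $1$ if every $f$-index and every $g$-index occurs an even number of times in the resulting multiset, and $0$ otherwise.

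\textbf{Counting via partitions.} Next I will organize the surviving terms by the pair of set partitions they induce: one partition of the $8$ $f$-slots and one of the $8$ $g$-slots, where slots sharing an index value form a block and every block must have even size. For each admissible partition pattern, the number of integer solutions $(s_i,j_i,k_i)$ with all $f$- and $g$-indices in $[0,\ell-1]$ and the prescribed equalities/inequalities is the number of lattice points of a rational polytope dilated by $\ell$. By Ehrhart theory this count is a quasipolynomial in $\ell$. The half-integral vertices that arise when equalities such as $j_1+s_1=k_2+s_2$ combine with sums force period $2$, which explains the parity split in the statement.

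I would exploit the symmetry group acting on patterns (permuting the four factors, swapping $j_i\leftrightarrow k_i$, swapping the roles of $f$ and $g$, and reversal $x\mapsto \ell-1-x$) to reduce to orbit representatives, and use inclusion--exclusion (Möbius inversion on the partition lattice) to convert "blocks exactly equal" into "at least equal" conditions, which are easier to count as linear systems. The constraint $(j_i,k_i)$ off-diagonal removes the patterns in which a single factor pairs with itself, and this is what produces the central (rather than raw) moment.

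\textbf{Main obstacle.} The hard step is the enumeration itself: the number of partition patterns for $p=4$ is large. Each one yields a system of linear equalities in up to $12$ variables with box constraints, whose lattice-point count must be found as a quasipolynomial. I would automate this: enumerate the patterns by computer, solve each linear system to parametrize it by free variables, and compute its Ehrhart quasipolynomial. Alternatively, I would evaluate the moment by brute force for enough small $\ell$ (computing the exact distribution of $\SSCC$ over $\Pair(\ell)$), and interpolate separate polynomials of degree at most $8$ on even and odd $\ell$. This is justified once the theory guarantees period-$2$ quasipolynomiality of degree at most $2p=8$. Multiplying out and dividing by $\ell^8$ then gives the stated formulas, and I would check them against direct computation for small $\ell$ (e.g., $\ell=1$ gives $0$, which both branches must satisfy).
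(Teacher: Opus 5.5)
Your main route is essentially the paper's. Removing the diagonal $j=k$ and requiring every $f$- and $g$-index to occur an even number of times gives exactly the sum in \cref{Francis}; your off-diagonal condition is the atomized, locally odd condition of \cref{Alberto}. The paper then does by computer what you describe in \cref{Clarence}:
\begin{itemize}
\item it takes one representative per $\Grp$-orbit (permute equations, swap sides, swap $f\leftrightarrow g$);
\item it gets orbit sizes from stabilizers;
\item it computes each $\card{\CAs(\cP,=,\ell)}$ as a quasipolynomial via Ehrhart theory plus M\"obius inversion;
\item it sums these via \cref{Vito} over the $17$ classes of $\CIsom(4)$.
\end{itemize}

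There are two slips in the peripheral parts.

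First, the brute-force-plus-interpolation fallback is not justified as stated. No general result gives period $2$ for $p=4$; periods $3$ and $6$ already occur for $p=5$, and up to $60$ for $p=6$. The paper learns that the periods are $1$ and $2$ only as an output of the class-by-class computation. A crude a priori period bound from vertex denominators (minors of $\{0,\pm1\}$ matrices of size at most $4$) is far too large for interpolation to be practical. So this route does not bypass the enumeration.

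Second, the $\ell=1$ sanity check applies only to the odd branch. The even-branch numerator evaluates to $180$ at $\ell=1$, so it is not true that both branches vanish there.
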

The fifth central moment of $\SSCC$ is a quasipolynomial of period $6$, and in view of \eqref{Celeste}, one obtains the fifth central moment of $\CDF$ by multiplying this by $\ell^{-10}$.
\begin{theorem}
If $\ell$ is a positive integer, then
\[
\cmomv{5} \CDF(f,g) = \frac{1}{27\ell^{10}}\sum^7_{j=0}a_j(\ell)\ell^j,
\]
where for every $\ell$ we have $a_7(\ell)=2880$, $a_6(\ell)=140832$, $a_5(\ell)=-1595952$, $a_4(\ell)=6124960$, 
\begin{align*}
a_3(\ell) &= 
\begin{cases}
-9962400 & \text{if $\ell \equiv 0 \pmod{2}$} \\
-9988320 & \text{if $\ell \equiv 1 \pmod{2}$,}
\end{cases}\\
a_2(\ell) &= 
\begin{cases}
6049728 & \text{if $\ell \equiv 0 \pmod{6}$} \\
6195648 & \text{if $\ell \equiv \pm 1 \pmod{6}$} \\
5988288 & \text{if $\ell \equiv \pm 2 \pmod{6}$} \\
6257088 & \text{if $\ell \equiv 3 \pmod{6}$,} 
\end{cases}\\
a_1(\ell) &= 
\begin{cases}
 -753408 & \text{if $\ell \equiv 0 \pmod{6}$} \\
-1001008 & \text{if $\ell \equiv 1 \pmod{6}$} \\
 -435968 & \text{if $\ell \equiv 2 \pmod{6}$} \\
-1420848 & \text{if $\ell \equiv 3 \pmod{6}$} \\
 -333568 & \text{if $\ell \equiv 4 \pmod{6}$} \\
-1103408 & \text{if $\ell \equiv 5 \pmod{6}$, and}
\end{cases}\\
a_0(\ell) &= 
\begin{cases}
      0 & \text{if $\ell \equiv 0 \pmod{6}$} \\
 120960 & \text{if $\ell \equiv 1 \pmod{6}$} \\
 307200 & \text{if $\ell \equiv 2 \pmod{6}$} \\
 673920 & \text{if $\ell \equiv 3 \pmod{6}$} \\
-552960 & \text{if $\ell \equiv 4 \pmod{6}$} \\
 981120 & \text{if $\ell \equiv 5 \pmod{6}$.}
\end{cases}
\end{align*}
\end{theorem}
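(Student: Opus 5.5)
The fifth central moment of $\CDF$ equals $\ell^{-10}\cmomv{5}\SSCC(f,g)$ by \eqref{Celeste}, and by \cref{Abigail} the mean of $\SSCC$ is $\ell^2$. The plan is therefore to compute
\[
\cmomv{5}\SSCC(f,g)=\ev\bigl(\SSCC(f,g)-\ell^2\bigr)^5
\]
exactly as a quasipolynomial in $\ell$, and then divide by $\ell^{10}$. For $(f,g)\in\Pair(\ell)$ write
\[
\SSCC(f,g)=\sum_{s}\sum_{j,k} f_{j+s}g_j f_{k+s}g_k .
\]
The terms with $j=k$ contribute exactly $\ell^2$. Hence
\[
\SSCC(f,g)-\ell^2=\sum_{(a,b,c,d)} f_a g_b f_c g_d ,
\]
where the sum runs over quadruples in $\{0,\ldots,\ell-1\}^4$ with $a-b=c-d$ and $a\neq c$ (equivalently $b\neq d$). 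Raising this to the fifth power gives a sum over $5$-tuples of such quadruples, that is, over index families in $\{0,\ldots,\ell-1\}^{[5]\times[2]\times[2]}$.

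Since $f$ and $g$ are independent with independent uniform $\pm1$ entries, the expectation of each monomial is $1$ if every $f$-index and every $g$-index appears an even number of times, and $0$ otherwise. So $\cmomv{5}\SSCC$ counts the $5$-tuples of quadruples satisfying the linear constraints $a_i-b_i=c_i-d_i$ and $a_i\ne c_i$, and in which every value occurs evenly among the $f$-positions and evenly among the $g$-positions.

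To organize the count, I would classify index families by the pair of partitions they induce on the $10$ $f$-positions and the $10$ $g$-positions, recording which positions carry equal values. Only partitions with all blocks of even size contribute. By inclusion--exclusion (Möbius inversion on the partition lattice), the "exact" equality pattern becomes a signed sum over coarser patterns. For each pattern, the number of solutions is the number of lattice points in $\{0,\ldots,\ell-1\}^{m}$ satisfying the shift equations $a_i-b_i=c_i-d_i$, intersected with the distinctness conditions $a_i\neq c_i$. This count is the Ehrhart-type quasipolynomial of a rational polytope: the points of a dilated polytope cut out by a linear system with small integer coefficients. Its period divides the lcm of the denominators of the vertices. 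For $p=5$ those denominators involve only $2$ and $3$, which explains the period $6$ and the parity-only dependence of $a_3$.

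To keep the enumeration manageable, I would exploit symmetry:
\begin{itemize}[label={}]
\item the $S_5$ action permuting the five factors;
\item the swaps $(a,b)\leftrightarrow(c,d)$ within each factor;
\item the reversal $j\mapsto \ell-1-j$;
\item the exchange of $f$ and $g$.
\end{itemize}
Grouping the partition pairs into orbits under these actions, I would also discard every configuration in which some factor's quadruple is "isolated", i.e.\ shares no value with the other factors. Such a factor contributes zero, because the inner sum of $f_ag_bf_cg_d$ already has mean zero. Likewise, connected components of the factor-overlap structure factor as products, which relates them to the lower moments $\cmomv{2},\cmomv{3}$ of \cref{Valerie,Scott}.

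The main obstacle is the sheer number of surviving orbits and the computation of each lattice-point quasipolynomial. I would automate this step: a program enumerates the even partition pairs up to symmetry and computes each orbit's count either symbolically via Ehrhart theory or by exact counting for enough values of $\ell$ in each residue class mod $6$. Because the degree of $\cmomv{5}\SSCC$ in $\ell$ is at most $7$, its quasipolynomial form can be interpolated from finitely many such $\ell$ in each residue class. As checks, I would recover the leading coefficient $2880/27$ from a Gaussian-limit heuristic and cross-validate the final formula by brute force over $\Pair(\ell)$ for small $\ell$. Dividing by $\ell^{10}$ then gives the stated coefficients $a_j(\ell)$.
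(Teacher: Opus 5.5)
Your main line is the paper's route. Your count of $5$-tuples of off-diagonal quadruples in which every $f$-value and every $g$-value occurs evenly is exactly $\sum_{\cP\in\CCon(5)}\card{\CAs(\cP,=,\ell)}$ from \cref{Francis}:
\begin{itemize}
\item the conditions $a\neq c$, $b\neq d$ are the atomization of \cref{Alberto};
\item your pairs of even partitions of the $f$- and $g$-positions are the coherent GELO partitions;
\item your symmetries ($S_5$, side swaps, $f\leftrightarrow g$) generate exactly $\Gamma^{(5)}$;
\item Ehrhart counting plus M\"obius inversion under computer enumeration is what \cref{Clarence} does.
\end{itemize}
The reversal $j\mapsto \ell-1-j$ fixes every induced partition, so it gives no orbit reduction. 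Your derivation of the counting formula is a genuinely shorter route to \cref{Francis} than the paper's binomial expansion and alternating sum over $E\subseteq[p]$: the diagonal terms are deterministically $\ell^2$, so centering just deletes them.

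The step that would fail is the shortcut that ``connected components factor as products,'' relating them to the lower moments. Write $\mu_p=\cmomv{p}\SSCC(f,g)$. For an exact partition in $\CCon(5)$ with components on $2$ and $3$ equations, $\card{\CAs(\cP,=,\ell)}$ is not the product of the component counts. The reason is that same-colored classes in different components must still receive distinct values. The shortcut also fails at the level of tuples: $10\mu_2\mu_3$ counts tuples in which some pair and its complementary triple are separately even, and these sets overlap each other and overlap connected configurations. For example, take $q_1=q_2=q_3=(0,0,1,1)$, $q_4=(0,0,2,2)$, $q_5=(1,1,2,2)$ (valid for $\ell\ge 3$). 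This tuple has connected overlap structure and is counted once in $\mu_5$. It is counted three times in $10\mu_2\mu_3$, via the splits $\{1,2\}|\{3,4,5\}$, $\{1,3\}|\{2,4,5\}$ and $\{2,3\}|\{1,4,5\}$. The only exact relation is $\mu_5=\kappa_5+10\mu_2\mu_3$, and $\kappa_5$ is not a plain count of connected configurations. Drop this shortcut and count disconnected partitions in $\CCon(5)$ like all the others, as the paper does.

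If you use the interpolation fallback instead of symbolic Ehrhart computation, you need a priori degree and period bounds, and both are currently asserted rather than proved.
\begin{itemize}
\item Degree $\le 7$ does follow. The red part of the display matrix is an edge--vertex incidence matrix of rank $n_r-c_r$. Every red class has even size, so $c_r\le 2$; also $n_b\le 5$. Hence each count is $O(\ell^{n_b+c_r})=O(\ell^7)$.
\item ``Only $2$ and $3$ appear in the denominators'' is read off the answer. You must compute the vertex denominators of every polytope produced by the M\"obius merging of coordinates; these can exceed the final reduced period. Then interpolate modulo their lcm.
\end{itemize}
Your leading-coefficient check is really the cumulant term $10\kappa_2\kappa_3$, since $10\cdot\frac{4}{3}\cdot 8=\frac{2880}{27}$, rather than a Gaussian limit; it is fine as a sanity check.
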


The sixth central moment of $\SSCC$ is a quasipolynomial of period $60$, and in view of \eqref{Celeste}, one obtains the sixth central moment of $\CDF$ by multiplying this by $\ell^{-12}$.
\begin{theorem}\label{Methuselah}
If $\ell$ is a positive integer, then
\[
\cmomv{6} \CDF(f,g) = \frac{1}{135\ell^{12}}\sum^9_{j=0}a_j(\ell)\ell^j,
\]
where for every $\ell$ we have $a_9(\ell)=11200$, $a_8(\ell)=1179360$, $a_7(\ell)=87442080$, $a_6(\ell)=-1194335520$, 
\begin{align*}
a_5(\ell) &= 
\begin{cases}
3724763000 & \text{if $\ell \equiv 0 \pmod{2}$} \\
3724611800 & \text{if $\ell \equiv 1 \pmod{2}$,}
\end{cases}\\
a_4(\ell) &=
\begin{cases}
13632066280 & \text{if $\ell \equiv 0 \pmod{2}$} \\
13628210680 & \text{if $\ell \equiv 1 \pmod{2}$,}
\end{cases}\\
a_3(\ell) &= 
\begin{cases}
-110959180584 & \text{if $\ell \equiv 0 \pmod{12}$} \\
-111308645784 & \text{if $\ell \equiv \pm 1$ or $\pm 5 \pmod{12}$} \\
-111061962984 & \text{if $\ell \equiv \pm 2 \pmod{12}$} \\
-111214028184 & \text{if $\ell \equiv \pm 3 \pmod{12}$} \\
-111053798184 & \text{if $\ell \equiv \pm 4 \pmod{12}$} \\
-110967345384 & \text{if $\ell \equiv 6 \pmod{12}$,}
\end{cases}\\
a_2(\ell) &=
\begin{cases}
239977889760 & \text{if $\ell \equiv  0 \pmod{12}$} \\
245802750760 & \text{if $\ell \equiv  1 \pmod{12}$} \\
242271918560 & \text{if $\ell \equiv  2 \pmod{12}$} \\
243687619560 & \text{if $\ell \equiv  3 \pmod{12}$} \\
242090299360 & \text{if $\ell \equiv  4 \pmod{12}$} \\
245672293160 & \text{if $\ell \equiv  5 \pmod{12}$} \\
240289966560 & \text{if $\ell \equiv  6 \pmod{12}$} \\
245800029160 & \text{if $\ell \equiv  7 \pmod{12}$} \\
241959841760 & \text{if $\ell \equiv  8 \pmod{12}$} \\
243690341160 & \text{if $\ell \equiv  9 \pmod{12}$} \\
242402376160 & \text{if $\ell \equiv 10 \pmod{12}$} \\
245669571560 & \text{if $\ell \equiv 11 \pmod{12}$,}
\end{cases}
\end{align*}
and $a_1(\ell)$ and $a_0(\ell)$ are functions of period $60$ whose values are given on Tables \ref{Tiffany} and \ref{Tony}, respectively.
\end{theorem}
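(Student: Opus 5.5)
The sixth central moment will come out of the same general machinery that yields the variance and third moment, with the final enumeration done by computer. Set $D(f,g)=\SSCC(f,g)-\ell^2$; this is legitimate because \cref{Abigail} gives $\ev \SSCC(f,g)=\ell^2$. Then $\cmomv{6}\CDF(f,g)=\ell^{-12}\ev D(f,g)^6$. Expanding $|C_{f,g}(s)|^2=\sum_{j,k} f_{j+s}g_j f_{k+s}g_k$ writes $\SSCC$ as a sum over quadruples of indices. The diagonal terms, where $j=k$, contribute exactly $\ell^2$, so $D$ is the sum over off-diagonal quadruples $(j+s,j,k+s,k)$ with $j\neq k$. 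Each such quadruple is a product of two $f$-entries and two $g$-entries.

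Raising $D$ to the sixth power gives a sum over $6$-tuples of such index quadruples, indexed by $\indexset$ with $p=6$. Since the $f_i,g_i$ are independent uniform $\pm1$ variables, the expectation of a product is $1$ if every $f$-index and every $g$-index occurs an even number of times, and $0$ otherwise. So $\ev D^6$ is the number of integer solutions, with all indices in $[0,\ell)$, of the system of equations and inequations determined by the pairing pattern.

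I would organize these solutions by set partitions of the $24$ index slots, separately for the $f$-slots and $g$-slots, in which every block has even size. For each partition I would count the lattice points $(j_1,k_1,s_1,\ldots,j_6,k_6,s_6)$ realizing exactly that coincidence pattern. This count is the number of lattice points of a rational polytope in $\ell$-dilated form, after inclusion–exclusion over coarsenings. By Ehrhart theory it is a quasipolynomial in $\ell$, whose period divides the lcm of the denominators of the vertices.

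To keep the enumeration feasible I would do three things:
\begin{enumerate}[label=(\roman*)]
\item Quotient by the natural group action: permutations of the $6$ factors, swapping $j\leftrightarrow k$ within a factor, the reversal symmetry $i\mapsto \ell-1-i$, and swapping $f\leftrightarrow g$.
\item Discard partitions whose equations force $j_t=k_t$ for some factor $t$.
\item Discard partitions that split into a factor block disjoint from the rest. This is what the centering by the mean kills, and it is what reduces the degree to $2p-3=9$ in agreement with the leading term $11200\,\ell^9$.
\end{enumerate}

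For each surviving orbit representative, the computer will solve the linear system and count lattice points. It can do this either by a parametric Ehrhart or Barvinok computation, or more robustly by evaluating the count exactly for enough values of $\ell$ and interpolating separately on each residue class modulo a common period. The period should divide $60$, given that denominators up to $5$ arise from the at most six independent coupled shift equations. The sum over orbits weighted by orbit size then yields the stated coefficients $a_j(\ell)$, including Tables \ref{Tiffany} and \ref{Tony}. As a check, I would verify the result against brute-force computation of $\ev D^6$ for small $\ell$, say $\ell\le 12$, and confirm agreement with Theorems~\ref{Valerie}--\ref{Kurt} by running the same pipeline with $p=2,3,4$.

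The main obstacle is the combinatorial explosion in step~(i). There are enormous numbers of even-block partitions of $12+12$ slots, and determining the quasipolynomial for each class, with its period up to $60$, requires care. I would first try reducing by the group action and then using interpolation, with enough sample lengths, namely $10$ per residue class modulo $60$, so that the pipeline never needs symbolic polytope computations.
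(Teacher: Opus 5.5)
Your framework is the paper's. You group index patterns by their coincidence partition and reduce by symmetry: your factor permutations, $j\leftrightarrow k$ swaps and $f\leftrightarrow g$ swap generate exactly $\Gamma^{(6)}$. You then count each class by Ehrhart theory plus M\"obius inversion. Your derivation of the master count is more direct than the proof of Proposition~\ref{Francis}. Since $\SSCC(f,g)-\ell^2$ is exactly the off-diagonal part, $\cmomv{6}\SSCC(f,g)$ is the number of $6$-tuples of off-diagonal quadruples with all multiplicities even. That is $\sum_{\cP\in\CCon(6)}\card{\CAs(\cP,=,\ell)}$, obtained with no expansion over $E\subseteq[p]$ and no appeal to Lemma~\ref{Alberto}.

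The genuine error is step~(iii). Centering removes only factors whose indices pair up among themselves, i.e.\ the diagonal terms, which you have already dropped. If ``factor block'' means a single factor, the step is vacuous: an off-diagonal factor has two distinct $f$-indices and two distinct $g$-indices, each of which must recur in another factor. If it means a proper set of factors sharing no index with the others, the step is wrong. Consider the $120$ patterns in which the six equations split into three independent pairs, each of the type in Lemma~\ref{Vivian}. They form a single $\Gamma^{(6)}$-orbit of size $120$ inside $\CCon(6)$, and they contribute $120\,(2/3)^3\ell^9+O(\ell^8)=\frac{320}{9}\ell^9+O(\ell^8)$. Nothing else reaches degree $9$. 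To see this, make the red (resp.\ blue) classes the vertices of a multigraph whose edges are the equations. A coboundary count bounds the dimension of the real solution space by $V_r+c_b$. Here $V_r\le 6$ is the number of red classes, and $c_b\le 3$ is the number of blue components (there are no loops, by Lemma~\ref{Alberto}). Equality forces both graphs to be the same three digons. So discarding decomposable patterns deletes the entire leading term. Your degree formula $2p-3$ is also wrong. The actual degrees for $p=2,\dots,6$ are $3,4,6,7,9$, i.e.\ $\lfloor 3p/2\rfloor$, and they come precisely from such products of independent pairs (plus one triple when $p$ is odd).

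There are also smaller points.
\begin{itemize}
\item Your claim that every period divides $60$ is a heuristic. Interpolation becomes a proof only once you certify the period of each polytope, including the coarsenings used in the inclusion--exclusion, e.g.\ from vertex denominators. Moreover, exact evaluation at $\ell$ near $600$ is itself a lattice-point count in dimension up to $9$.
\item Reversing both sequences acts trivially on patterns. The useful extra symmetry is reversing $g$ alone: $\SSCC(f,\tilde g)=\SSCC(f,g)$, and on patterns it swaps the two $g$-slots in every equation, as in the proof of Lemma~\ref{Bob}. Orbit weights must then be taken for the enlarged group.
\item Your small-$\ell$ check will disagree with the displayed formula. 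At $\ell=2$ one has $\SSCC(f,g)-4=2f_0f_1g_0g_1$, so the sixth central moment of $\SSCC$ is $64$, but $\sum_j a_j(2)2^j=20160=315\cdot 64$. At $\ell=3$ the moment is $130880$, while $\sum_j a_j(3)3^j=41227200=315\cdot 130880$. Also $11200/315=320/9$ is the leading coefficient found above. So the prefactor should read $1/(315\ell^{12})$ rather than $1/(135\ell^{12})$. A correct pipeline will reproduce that corrected version, not the stated one.
\end{itemize}
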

\begin{table}[ht!]
\begin{center}
\rowcolors{2}{gray!20}{white}
\caption{Values of $a_1(\ell)$ from \cref{Methuselah} as a function of $\ell \pmod{60}$}\label{Tiffany}
\begin{tabular}{|l|r||l|r||l|r|}
\hline
$\ell \pmod{60}$ & $a_1(\ell)$ & $\ell \pmod{60}$ & $a_1(\ell)$ & $\ell \pmod{60}$ & $a_1(\ell)$ \\
\hline
$ 0$       & $-161841373440$ & $12$, $48$ & $-163179094272$ & $28$, $52$ & $-176989679872$ \\
$ 1$, $49$ & $-191149692928$ & $13$, $37$ & $-191595599872$ & $29$, $41$ & $-187608700928$ \\
$ 2$, $38$ & $-178672345472$ & $14$, $26$ & $-178226438528$ & $30$       & $-167065031040$ \\
$ 3$, $27$ & $-177204406272$ & $15$       & $-175866685440$ & $33$, $57$ & $-177785014272$ \\
$ 4$, $16$ & $-176543772928$ & $17$, $53$ & $-188054607872$ & $34$, $46$ & $-181767430528$ \\
$ 5$       & $-186716887040$ & $18$, $42$ & $-168402751872$ & $35$       & $-186136279040$ \\
$ 6$, $54$ & $-167956844928$ & $19$, $31$ & $-190569084928$ & $39$, $51$ & $-176758499328$ \\
$ 7$, $43$ & $-191014991872$ & $20$       & $-172110967040$ & $40$       & $-175651959040$ \\
$ 8$, $32$ & $-173448687872$ & $22$, $58$ & $-182213337472$ & $44$, $56$ & $-173002780928$ \\
$ 9$, $21$ & $-177339107328$ & $23$, $47$ & $-187473999872$ & $45$       & $-176447293440$ \\
$10$       & $-180875616640$ & $24$, $36$ & $-162733187328$ & $50$       & $-177334624640$ \\
$11$, $59$ & $-187028092928$ & $25$       & $-190257879040$ & $55$       & $-189677271040$ \\
\hline
\end{tabular}
\end{center}
\end{table}

\begin{table}[ht!]
\begin{center}
\rowcolors{2}{gray!20}{white}
\caption{Values of $a_0(\ell)$ from \cref{Methuselah} as a function of $\ell \pmod{60}$}\label{Tony}
\begin{tabular}{|l|r||l|r||l|r|}
\hline
$\ell \pmod{60}$ & $a_0(\ell)$ & $\ell \pmod{60}$ & $a_0(\ell)$ & $\ell \pmod{60}$ & $a_0(\ell)$ \\
\hline
$ 0$ & $          0$ & $20$ & $-3579699200$ & $40$ & $21342003200$ \\
$ 1$ & $40408468352$ & $21$ & $19066465152$ & $41$ & $15486765952$ \\
$ 2$ & $ 4384500736$ & $22$ & $29306203136$ & $42$ & $ 7964199936$ \\
$ 3$ & $ 3755433024$ & $23$ & $  175733824$ & $43$ & $25097436224$ \\
$ 4$ & $20190076928$ & $24$ & $-1151926272$ & $44$ & $-4731625472$ \\
$ 5$ & $12848483200$ & $25$ & $37770185600$ & $45$ & $16428182400$ \\
$ 6$ & $ 8168573952$ & $26$ & $ 4588874752$ & $46$ & $29510577152$ \\
$ 7$ & $26992540736$ & $27$ & $ 5650537536$ & $47$ & $ 2070838336$ \\
$ 8$ & $-3040894976$ & $28$ & $21880807424$ & $48$ & $  538804224$ \\
$ 9$ & $15276256128$ & $29$ & $11696556928$ & $49$ & $36618259328$ \\
$10$ & $26872294400$ & $30$ & $ 5530291200$ & $50$ & $ 1950592000$ \\
$11$ & $ 2275212352$ & $31$ & $27196914752$ & $51$ & $ 5854911552$ \\
$12$ & $ 2433908736$ & $32$ & $-1145790464$ & $52$ & $23775911936$ \\
$13$ & $38308989824$ & $33$ & $16966986624$ & $53$ & $13387287424$ \\
$14$ & $  798665728$ & $34$ & $25720368128$ & $54$ & $ 4378364928$ \\
$15$ & $ 3216628800$ & $35$ & $ -363070400$ & $55$ & $24558632000$ \\
$16$ & $23980285952$ & $36$ & $ 2638282752$ & $56$ & $ -941416448$ \\
$17$ & $15282391936$ & $37$ & $40204094336$ & $57$ & $18862091136$ \\
$18$ & $ 6069095424$ & $38$ & $ 2489396224$ & $58$ & $27411098624$ \\
$19$ & $23406705728$ & $39$ & $ 2064702528$ & $59$ & $-1514996672$ \\
\hline
\end{tabular}
\end{center}
\end{table}

Another fact that we learn from our general formula for the central moments of $\SSCC$ is that the $p$th central moments are always nonnegative, and (except for the first central moment, which is always zero) they become positive for sufficiently long sequences.  This translates into the following result about the moments of $\CDF$.
\begin{theorem}\label{Priscilla}
If $\ell$ and $p$ are positive integers, then $\mu^\ell_{p,(f,g)} \CDF(f,g)$ is nonnegative.  Moreover, if (i) $p=1$,  (ii) if $p$ is odd and $\ell \leq 2$, or (iii) if $p$ is even and $\ell \leq 1$, then $\mu^\ell_{p,(f,g)} \CDF(f,g)$ equals $0$; otherwise it is strictly positive.
\end{theorem}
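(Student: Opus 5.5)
We reduce everything to $\SSCC$, since by \eqref{Celeste} we have $\mu^\ell_{p,(f,g)}\CDF(f,g)=\ell^{-2p}\,\mu^\ell_{p,(f,g)}\SSCC(f,g)$. The plan is to write $\SSCC(f,g)-\ell^2$ as a sum of monomials in the $f_j,g_j$, and then to expand the $p$th power of that sum. Each term in the expansion will have expectation $0$ or $1$. This gives nonnegativity at once, and positivity reduces to exhibiting a single term with expectation $1$.

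\emph{Step 1: expand $\SSCC$.} Since the sequences are real, $|C_{f,g}(s)|^2=\sum_{j,k} f_{j+s}g_jf_{k+s}g_k$. Put $a=j+s$, $b=j$, $c=k+s$, $d=k$. Then
\[
\SSCC(f,g)=\sums{(a,b,c,d)\in\{0,\ldots,\ell-1\}^4\\ a-b=c-d} f_af_cg_bg_d .
\]
Under the uniform measure on $\Pair(\ell)$, the $f_j$ and $g_j$ are independent uniform $\pm1$ variables. So $\ev$ of a monomial is $1$ if every variable occurs to an even power, and $0$ otherwise.

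A term $f_af_cg_bg_d$ has expectation $1$ exactly when $a=c$ and $b=d$. Moreover, if $a=c$ then $a-b=c-d$ forces $b=d$, and conversely. Hence the quadruples with $a=c$ are precisely the $\ell^2$ terms with $a=c$, $b=d$, each of which equals $1$ identically; this recovers \cref{Abigail}. Let $Q_\ell$ be the set of quadruples with $a\neq c$, $b\neq d$ and $a-b=c-d$. Then
\[
\SSCC(f,g)-\ell^2=\sum_{q\in Q_\ell} m_q, \qquad m_{(a,b,c,d)}=f_af_cg_bg_d .
\]

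\emph{Step 2: expand the $p$th power.} We get $\mu^\ell_{p,(f,g)}\SSCC=\sum_{(q_1,\ldots,q_p)\in Q_\ell^p}\ev(m_{q_1}\cdots m_{q_p})$. Each summand is $0$ or $1$, so the moment is a nonnegative integer. It equals the number of $p$-tuples in which every $f$-index and every $g$-index occurs an even number of times in total. Thus it is positive if and only if one such tuple exists.

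\emph{Step 3: decide existence, case by case.}
\begin{itemize}
\item For $p=1$ the moment is $0$ by \cref{Abigail}.
\item If $\ell\le1$, then $Q_\ell=\emptyset$ because $a\neq c$ is impossible, so every moment vanishes.
\item For even $p$ and $\ell\ge2$, take $q=(1,1,0,0)\in Q_\ell$ repeated $p$ times.
\item For odd $p\ge3$ and $\ell\ge3$, use the three quadruples $(1,1,0,0)$, $(2,2,1,1)$, $(2,2,0,0)$. Their $f$-index pairs are $\{0,1\},\{1,2\},\{0,2\}$, and their $g$-index pairs are the same. Each index therefore appears exactly twice. Pad this triple with $(p-3)/2$ copies of $(1,1,0,0)$ taken twice.
\item For odd $p$ and $\ell=2$, the only members of $Q_2$ are $(1,1,0,0)$ and $(0,0,1,1)$, since $a-c=b-d=\pm1$. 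Both give the monomial $f_0f_1g_0g_1$. An odd product of these is $f_0f_1g_0g_1$, whose expectation is $0$.
\end{itemize}

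The only step requiring care is finding the odd-$p$ witness for $\ell\ge3$ and confirming that none exists for $\ell=2$. The explicit triangle configuration above handles the first, and the complete enumeration of $Q_2$ handles the second, so I expect no serious obstacle.
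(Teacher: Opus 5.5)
Your proof is correct, and it takes a more direct route than the paper.

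The paper gets the theorem as a by-product of its partition machinery:
\begin{itemize}
\item Nonnegativity is read off from \cref{Francis}. Its proof expands $(\SSCC(f,g)-\ev\SSCC(f,g))^p$ binomially, then uses an alternating sum over subsets $E\subseteq[p]$ to discard every coherent partition with a locally even restriction.
\item Vanishing for small $\ell$ comes from \cref{Alberto}: a coherently contributory partition has at least $4$ classes, and at least $6$ when $p$ is odd. So some color carries at least $2$ (resp.\ $3$) classes, and the pigeonhole argument of \cref{Wayne} finishes.
\item Positivity comes from explicit coherently contributory partitions with explicit assignments.
\end{itemize}

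You instead remove the diagonal terms $a=c$, $b=d$, which are identically $1$, before raising to the $p$th power. Then $\SSCC-\ell^2$ is already a sum of mean-zero monomials, and no inclusion--exclusion is needed. Your count of tuples in $Q_\ell^p$ with all exponents even is exactly the paper's $\sum_{\cP\in\CCon(p)}\card{\CAs(\cP,=,\ell)}$. The ``locally odd'' condition is built into $Q_\ell$, and ``globally even'' is your parity condition.

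Your witnesses are essentially the paper's. In the even case you use the same monomial $f_0f_1g_0g_1$ as its partition $\cQ$, and your triangle on indices $0,1,2$ plays the role of its $\cR$. Your treatment of odd $p$ with $\ell=2$ differs in kind: you compute $\SSCC-4=2f_0f_1g_0g_1$ outright. The paper instead argues that every contributing configuration needs three distinct values of one color.

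Your version is self-contained and shorter. The paper's costs it nothing extra, since the machinery is needed for the moment formulas anyway. Its pigeonhole bound also applies to each partition individually, so it gives information about the individual solution count functions, not just their sum.
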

In particular, for $p=3$, this result tells us that the distribution is always right-skewed as long as $\ell > 2$.

The rest of this paper is organized as follows.
\cref{Patricia} establishes the notations, definitions, and basic results that give us the language we need to express our results on central moments of $\SSCC$.
Most of this section concerns certain functions called {\it assignments} as well as partitions of the domains of these assignments.
\cref{Fred} derives an exact formula (in \cref{Francis}) for the $p$th central moment of $\SSCC$ in terms of these assignments and partitions.
The number of terms in our formula grows very rapidly as $p$ increases, but many terms give the same contribution due to symmetries of the partitions that underly the terms.
This motivates us to introduce the notion of coherently isomorphic partitions, which is developed in \cref{Isabel} using a particular group action.
\cref{Isabel} concludes with another exact formula (in \cref{Vito}) for the $p$th central moment of $\SSCC$, but this formula groups terms according to coherent isomorphism classes of the underlying partitions.
This formula makes computations easier, but in general it is still a nontrivial task to find all the coherent isomorphism classes.
\cref{Theresa} provides a matrix-based method, which can be implemented as a computer algorithm, for finding these classes.
\cref{Portia} gives a proof of our positivity result (\cref{Priscilla}).
Sections \ref{Anne} and \ref{Sidney} respectively provide hand calculations of the second and third central moments of $\SSCC$; these yield the variance of $\CDF$ in \cref{Valerie} and the third central moment of $\CDF$ in \cref{Scott}, from which we can compute the skewness of $\CDF$.
\cref{Clarence} reports on computer-assisted calculations of the fourth through sixth central moments of $\SSCC$, which yield the fourth through sixth central moments of $\CDF$ as reported above in Theorems \ref{Kurt}--\ref{Methuselah}.
There is also an appendix with some algebraic lemmas that are used in the proof of \cref{Francis}.

\section{Preliminaries}\label{Patricia}

In this paper, $\N$ always denotes the set $\{0,1,\ldots\}$ of nonnegative integers, while $\Z_+$ denotes the set $\{1,2,\ldots\}$ of strictly positive integers.  If $\ell \in \N$, then $[\ell]$ denotes the set $\{0,1,\ldots,\ell-1\}$.
If $A$ and $B$ are sets, then $B^A$ denotes the set of all functions from $A$ into $B$.
If $k \in \Z_+$ and we are writing $k$-tuples of integers, we sometimes omit the enclosing parentheses and sometimes also omit the commas if there is no risk of ambiguity.
For example, if we know that the context demands a triple of integers, then writing $010$ is shorthand for $(0,1,0)$, but we could not do this if any of our integers required more than one digit in its representation.
Similarly, if the context demands a triple and we are representing the terms as variables, we could write $x y z$ as a shorthand for $(x,y,z)$.

For $p \in \N$, much of our discussion involves triples $(e,s,v) \in \indexset$ because calculation of the $p$th central moment involves systems of $p$ equations with two terms on each side.
Then $(e,s,v)$ indexes the $v$th term (0 for the left-hand term, 1 for the right-hand term) on the $s$th side ($0$ indicates left-hand side, $1$ indicates right-hand side) of equation number $e$.
We employ the following useful definition from \cite[Def.~2.1]{Katz-Ramirez}, which concerns functions whose inputs are these triples.
\begin{definition}[Assignment]\label{Arnold}
Let $E\subseteq\N$.  An {\it assignment for $E$} is a function from $\eindexset$ into $\N$, i.e., an element of $\N^\eindexset$.
If $\tau$ is an assignment for $E$ and $(e,s,v) \in \eindexset$, then we use subscript notation $\tau_{e,s,v}$ (or just $\tau_{e s v}$) in place of the more usual $\tau(e,s,v)$ to denote the value of $\tau$ at $(e,s,v)$.  We use the following notations for the set of all assignments for $E$ and for useful subsets thereof:
\begin{itemize}
\item $\As(E)=\N^\eindexset$,
\item $\As(E,=)=\{\tau \in \As(E): \tau_{e 0 0}+\tau_{e 0 1}=\tau_{e 1 0}+\tau_{e 1 1} \text{ for all $e \in E$}\}$,
\item $\As(E,\ell)=\{\tau \in \As(E): \tau(\eindexset) \subseteq [\ell]\}$, and
\item $\As(E,=,\ell)=\As(E,=)\cap\As(E,\ell)$.
\end{itemize}
\end{definition}
See \cite[Ex.~2.2]{Katz-Ramirez} for an example.
As we shall see in \cref{Fred}, the determination of central moments of $\SSCC$ often requires us to count assignments, and it helps to group assignments according to properties of their fibers.
The nonempty fibers of a function partition the function's domain, and so we introduce some concepts and notations for partitions.
If $\cP$ is a partition of a set $A$ and $a,b \in A$, then we write $a \equiv b \pmod{\cP}$ to mean that $a$ and $b$ are elements of the same class in $\cP$ (i.e., there is some $P \in \cP$ with $a,b \in P$).
This means that if $f$ is a function whose domain is $A$ and $\cP$ is the partition whose classes are the nonempty fibers of $f$, then $a \equiv b \pmod{\cP}$ if and only if $f(a)=f(b)$.
We also introduce some notations for partitions of domains of assignments.
\begin{definition}[$\Part(E)$ and $\Part(p)$] Let $E\subseteq \N$.  Then $\Part(E)$ is the set of partitions of $\eindexset$.  If $p \in \N$, then $\Part(p)$ is a shorthand for $\Part([p])$.
\end{definition}
In the rest of this section we relate certain partitions to assignments and investigate properties of these partitions that are critical in the combinatorial determination of the central moments of $\SSCC$.
The first of these properties is called coherence.
\begin{definition}[Coherent set, coherent partition, $\Coh(E)$, and $\Coh(p)$]  Let $E\subseteq\N$.  If $v \in [2]$, then we say that a subset $P$ of $E\times\bindexset$ is {\it $v$-coherent} to mean that $P$ is a subset of $E\times[2]\times\{v\}$.
A {\it coherent} subset of $E\times\bindexset$ is any subset that is $v$-coherent for some $v \in [2]$.
If $\cP\in\Part(E)$, then we say that $\cP$ is {\it coherent} to mean that all the classes in $\cP$ are coherent.
We write $\Coh(E)$ for the set of all coherent partitions in $\Part(E)$.
If $p \in \N$, then $\Coh(p)$ is a shorthand for $\Coh([p])$.
\end{definition}
\begin{example}\label{Electra}
Note that $\{(0,0,1),(1,0,1)\}$ is a coherent subset (in fact, a $1$-coherent subset) of $[2]\times\bindexset$, while $\{(0,0,0),(0,0,1),(1,0,0),(1,0,1)\}$ is not coherent because it has triples whose third coordinates differ.
Therefore,
\[
\left\{\{(0,0,0),(0,0,1),(1,0,0),(1,0,1)\},\{(0,1,0),(1,1,0)\},\{(0,1,1),(1,1,1)\}\right\} \in \Part(2)
\]
is not a coherent partition (i.e., is not in $\Coh(2)$).
But
\[
\cP=\left\{\{(0,0,0),(1,0,0))\},\{(0,0,1),(1,0,1)\},\{(0,1,0),(1,1,0)\},\{(0,1,1),(1,1,1)\}\right\} \in \Coh(2)
\]
because every class in this partition is coherent.
\end{example}
Now we can describe the partition that we associate to a given assignment.
\begin{definition}[Partition coherently induced by an assignment]
Let $E\subseteq\N$ and $\tau\in \As(E)$.
Then the {\it partition coherently induced by $\tau$} is the partition
\[
\left\{\tau^{-1}(\{j\})\cap(E\times[2]\times\{v\}): j \in \N, v \in [2]\right\} \smallsetminus\{\emptyset\},
\]
which is $\Coh(E)$.
\end{definition}
\begin{example}\label{Agamemnon}
Let $\tau \in \As([2])$ be such that $\tau_{000}=\tau_{100}=\tau_{011}=\tau_{111}=5$ and $\tau_{001}=\tau_{101}=\tau_{010}=\tau_{110}=6$.
Then the partition coherently induced by $\tau$ is in fact the partition $\cP$ from \cref{Electra}.  This is because the nonempty fibers of $\tau$ are $\{(0,0,0),(1,0,0)),(0,1,1),(1,1,1)\}$ and $\{(0,0,1),(1,0,1),(0,1,0),(1,1,0)\}$ and when we intersect these with $[2]\times[2]\times\{0\}$ and with $[2]\times[2]\times\{1\}$, we get the four classes of $\cP$.
\end{example}
Now we build on the notations of \cref{Arnold} to give sets of assignments that coherently induce a particular coherent partition.
\begin{definition}[Assignments coherently inducing a coherent partition]\label{Colette}
Let $E\subseteq\N$ and let $\cP\in\Coh(E)$.  Then
\begin{itemize}
\item $\CAs(\cP)$ is the set of all $\tau \in \As(E)$ that coherently induce $\cP$, that is, the set of all $\tau\in\As(E)$ such that for every $(e,s,v),(e',s',v')\in\eindexset$ with $v=v'$, we have $\tau_{e,s,v}=\tau_{e',s',v'}$ if and only if $(e,s,v)\equiv(e',s',v') \pmod{\cP}$.
\item $\CAs(\cP,=)=\CAs(\cP)\cap\As(E,=)$;
\item $\CAs(\cP,\ell)=\CAs(\cP)\cap\As(E,\ell)$; and
\item $\CAs(\cP,=,\ell)=\CAs(\cP)\cap\As(E,=)\cap\As(E,\ell)$.
\end{itemize}
\end{definition}
The following technical result sometimes guarantees the nonexistence of certain assignments.
\begin{lemma}\label{Wayne}
Let $p\in\N$ and let $\cP\in\Coh(p)$, and for each $v \in [2]$, let $N_v$ be the number of classes in $\cP$ that are subsets of $[p]\times[2]\times\{v\}$.
Then $\CAs(\cP,\ell)=\emptyset$ if $\ell < \max\{N_0,N_1\}$.
\end{lemma}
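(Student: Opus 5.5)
The plan is a direct pigeonhole argument. We prove the contrapositive: if $\tau \in \CAs(\cP,\ell)$ exists, then $\ell \geq N_0$ and $\ell \geq N_1$.

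Fix such a $\tau$ and fix $v \in [2]$. Let $P_1,\ldots,P_{N_v}$ be the classes of $\cP$ contained in $[p]\times[2]\times\{v\}$. Since $\cP$ is coherent and partitions $\indexset$, these are exactly the classes meeting $[p]\times[2]\times\{v\}$, and they partition that set.

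For each $i$, pick an element $a_i \in P_i$.

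\begin{itemize}
\item By \cref{Colette}, $\tau$ is constant on each class. More importantly, two elements with the same third coordinate $v$ receive equal $\tau$-values if and only if they are congruent modulo $\cP$.
\item Every $a_i$ has third coordinate $v$, and for $i \neq j$ the elements $a_i$ and $a_j$ lie in different classes. Hence $\tau(a_i)\neq\tau(a_j)$.
\item So $\tau$ takes at least $N_v$ distinct values on $\{a_1,\ldots,a_{N_v}\}$.
\item Since $\tau \in \As([p],\ell)$, all these values lie in $[\ell]$, which has exactly $\ell$ elements.
\end{itemize}

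Therefore $N_v \leq \ell$. This holds for both $v=0$ and $v=1$, so $\max\{N_0,N_1\}\leq \ell$. Contrapositively, if $\ell<\max\{N_0,N_1\}$, then no such $\tau$ exists and $\CAs(\cP,\ell)=\emptyset$.

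There is no real obstacle in this argument. The only point needing care is to use the ``if and only if'' in \cref{Colette} in the direction that separates distinct classes: elements with the same $v$ but in different classes must get different values. This is what forces injectivity on the chosen representatives.
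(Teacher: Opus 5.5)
Your proof is correct and follows the paper's argument: take one representative from each of the $N_v$ classes contained in $[p]\times[2]\times\{v\}$, use the ``if and only if'' in the definition of coherent induction to see that their $\tau$-values are distinct, and apply pigeonhole since those values lie in $[\ell]$. The only difference is phrasing: the paper argues by contradiction for a single $v$ with $\ell<N_v$, while you prove $N_v\leq\ell$ for both $v$ and take the contrapositive.
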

\begin{proof}
Assume that $\ell < \max\{N_0,N_1\}$ and let $v \in [2]$ be such that $\ell < N_v$.
Label the classes in $\cP$ that are subsets of $[p]\times[2]\times\{v\}$ as $P_1,\ldots,P_{N_v}$ and select some $\alpha_j \in P_j$ for each $j \in \{1,\ldots,N_v\}$.
Suppose that there is some $\tau \in \CAs(\cP,\ell)$ to show contradiction.
Then $\tau_{\alpha_1},\tau_{\alpha_2},\ldots,\tau_{\alpha_{N_v}}$ must be distinct values in $[\ell]$, but $[\ell]$ has fewer than $N_v$ elements in it, so we must have $\CAs(\cP,\ell)=\emptyset$.
\end{proof}
The second property critical to our combinatorial determination of moments is coherent satisfiability.
\begin{definition}[Coherently satisfiable partition]
Let $E\subseteq\N$ and let $\cP\in\Coh(E)$.
Then we say that $\cP$ is {\it coherently satisfiable} to mean that $\CAs(\cP,=)$ is nonempty, or equivalently, to mean that there exists $\ell \in \N$ such that $\CAs(\cP,=,\ell)$ is nonempty. Let $\CSat(E)$ denote the set of coherently satisfiable partitions that are elements of $\Part(E)$, and if $p \in \N$, then $\CSat(p)$ is a shorthand for $\CSat([p])$.
\end{definition}
\begin{example}\label{Achilles}
Consider the partition $\cP$ from \cref{Electra}, and recall that the assignment $\tau$ from \cref{Agamemnon} coherently induces $\cP$.  Thus, $\tau\in\CAs(\cP)$.
Note that $\tau_{e00}+\tau_{e01}=\tau_{e10}+\tau_{e11}$ for each $e \in [2]$ for this assignment, so that $\tau\in\CAs(\cP,=)$ (and indeed is in $\CAs(\cP,=,\ell)$ for all $\ell > 6$ since the image of $\tau$ is $\{5,6\}$).  Thus, $\cP$ is coherently satisfiable, and so $\cP \in \CSat(2)$.
\end{example}
To prepare ourselves to discuss the third property of partitions that is critical to the determination of our moments, we now need to introduce the notion of restriction of assignments and associated notations for the sets and partitions related to these assignments.
\begin{definition}[Restriction of functions, sets, and partitions]
Let $F \subseteq E \subseteq \N$.
If $\tau\in\As(E)$, then the {\it restriction of $\tau$ to $F$}, denoted $\tau\vert_F$, indicates the function from $F \times[2]\times [2]$ to $\N$ with $\tau\vert_F (e,s,v)=\tau(e,s,v)$ for every $(e,s,v) \in F\times[2]\times [2]$; that is, $\tau\vert_F$ is the function obtained from $\tau$ by restricting the domain to $F\times[2]\times[2]$.
If $P$ is a subset of $\eindexset$, then the {\it restriction of $P$ to $F$}, written $P_F$, is $P\cap(F\times[2]\times[2])$.
If $\cP$ is a partition of $\eindexset$, then the {\it restriction of $\cP$ to $F$}, written $\cP_F$, is the partition $\{P_F : P \in \cP\}\smallsetminus\{\emptyset\}$ of $F\times [2]\times [2]$.
If $\fP$ is a set of partitions of $\eindexset$, then {\it restriction of $\fP$ to $F$}, written $\fP_F$, is the set of partitions $\{\cP_F: \cP \in \fP\}$.
\end{definition}
See \cite[Ex.~2.11]{Katz-Ramirez} for examples.
We now see that restriction respects coherence of partitions.
\begin{lemma}
Let $F\subseteq E\subseteq\N$ and let $\cP\in\Coh(E)$.  Then $\cP_F \in \Coh(F)$.
\end{lemma}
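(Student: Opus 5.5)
The claim is immediate from the definitions, so the plan is a direct verification of the two requirements: that $\cP_F$ is a partition of $F\times\bindexset$, and that each of its classes is coherent.

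First, $\cP_F$ is a partition of $F\times[2]\times[2]$. This is already asserted in the definition of restriction, but it is easy to confirm. The sets $P_F=P\cap(F\times\bindexset)$ for $P\in\cP$ are pairwise disjoint because the classes $P$ are. Their union is $\bigl(\bigcup_{P\in\cP}P\bigr)\cap(F\times\bindexset)=F\times\bindexset$, since $F\subseteq E$ gives $F\times\bindexset\subseteq\eindexset$. Removing $\emptyset$ leaves only nonempty classes.

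Second, each class is coherent. Take any class $Q$ of $\cP_F$, so $Q=P_F$ for some $P\in\cP$. Since $\cP\in\Coh(E)$, the class $P$ is $v$-coherent for some $v\in[2]$, meaning $P\subseteq E\times[2]\times\{v\}$. Then
\[
Q=P\cap(F\times\bindexset)\subseteq (E\times[2]\times\{v\})\cap(F\times\bindexset)=F\times[2]\times\{v\},
\]
so $Q$ is a $v$-coherent subset of $F\times\bindexset$.

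Every class of $\cP_F$ is therefore coherent, and hence $\cP_F\in\Coh(F)$. There is no real obstacle; the only point needing care is the inclusion $F\times\bindexset\subseteq\eindexset$, which is what makes $\cP_F$ cover all of $F\times\bindexset$.
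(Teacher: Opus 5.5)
Your proof is correct and is essentially the paper's argument: both take $\cP_F\in\Part(F)$ from the definition of restriction and then deduce coherence of the restricted classes directly from the coherence of $\cP$. The differences are cosmetic. The paper phrases coherence as ``elements with different third coordinates lie in different classes,'' whereas you use the equivalent containment $P_F\subseteq F\times[2]\times\{v\}$, and your extra check that $\cP_F$ covers $F\times[2]\times[2]$ is a harmless addition.
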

\begin{proof}
By the way we define restriction, $\cP_F\in\Part(F)$.  Let $(f,s,v),(f',s',v') \in F\times[2]\times[2]$ with $v\not=v'$.  Then the coherence of $\cP$ means that $(f,s,v)$ and $(f',s',v')$ reside in different classes of $\cP$, and so they also reside in different classes of $\cP_F$.
\end{proof}
Restriction also respects the various properties of assignments from Definitions \ref{Arnold} and \ref{Colette}.
\begin{lemma}\label{Daphne}
Let $\ell\in\N$ and $F\subseteq E\subseteq \N$, and $\cP\in\Coh(E)$.
\begin{enumerate}[label=(\roman*)]
\item\label{James} If $\tau \in \As(E)$, then $\tau\vert_F \in \As(F)$.
\item\label{Katherine} If $\tau \in \As(E,\ell)$, then $\tau\vert_F \in \As(F,\ell)$.
\item\label{Larry} If $\tau \in \As(E,=)$, then $\tau\vert_F \in \As(F,=)$.
\item\label{Mary} If $\tau \in \As(E,=,\ell)$, then $\tau\vert_F \in \As(F,=,\ell)$.
\item\label{Nancy} If $\tau \in \CAs(\cP)$, then $\tau\vert_F \in \CAs(\cP_F)$.
\item\label{Orestes} If $\tau \in \CAs(\cP,\ell)$, then $\tau\vert_F \in \CAs(\cP_F,\ell)$.
\item\label{Peter} If $\tau \in \CAs(\cP,=)$, then $\tau\vert_F \in \CAs(\cP_F,=)$.
\item\label{Quentin} If $\tau \in \CAs(\cP,=,\ell)$, then $\tau\vert_F \in \CAs(\cP_F,=,\ell)$.
\end{enumerate}
\end{lemma}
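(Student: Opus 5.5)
The plan is to verify the eight parts directly from the definitions, proving the four basic ones about $\As$ first and then deducing the four about $\CAs$. The key observation is that restriction does not change values: for every $(e,s,v)\in F\times[2]\times[2]$ we have $\tau\vert_F(e,s,v)=\tau(e,s,v)$.

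For \ref{James}, $\tau\vert_F$ is by definition a function from $F\times[2]\times[2]$ into $\N$, so it lies in $\As(F)$. For \ref{Katherine}, the image of $\tau\vert_F$ is contained in the image of $\tau$, which lies in $[\ell]$. For \ref{Larry}, the defining identity $\tau_{e00}+\tau_{e01}=\tau_{e10}+\tau_{e11}$ holds for every $e\in E$, hence for every $e\in F$, and these values are unchanged by restriction. Part \ref{Mary} is the conjunction of \ref{Katherine} and \ref{Larry}.

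The only substantive step is \ref{Nancy}. Take $(e,s,v),(e',s',v')\in F\times[2]\times[2]$ with $v=v'$. Then $\tau\vert_F$ agrees at these points if and only if $\tau$ does. Since $\tau\in\CAs(\cP)$, that happens if and only if the two triples lie in a common class $P\in\cP$. I then need the fact that two elements of $F\times[2]\times[2]$ lie in a common class of $\cP$ exactly when they lie in a common class of $\cP_F$. This holds because the classes of $\cP_F$ are the nonempty sets $P\cap(F\times[2]\times[2])$ for $P\in\cP$. Those sets are pairwise disjoint, so each element of $F\times[2]\times[2]$ lies in exactly one of them, namely the trace of its own $\cP$-class. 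Two such elements therefore share a $\cP_F$-class if and only if they share a $\cP$-class. We already know that $\cP_F\in\Coh(F)$ by the preceding lemma, so $\CAs(\cP_F)$ is defined, and \ref{Nancy} follows.

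Finally, \ref{Orestes}, \ref{Peter}, and \ref{Quentin} follow by combining \ref{Nancy} with \ref{Katherine}, \ref{Larry}, and \ref{Mary} respectively, using $\CAs(\cP,\cdot)=\CAs(\cP)\cap\As(E,\cdot)$. I expect no real obstacle. The only point needing care is the class-correspondence between $\cP$ and $\cP_F$ in \ref{Nancy}, and the argument there is a one-line check.
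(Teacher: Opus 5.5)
Your proposal is correct and takes essentially the same approach as the paper. The paper cites parts (i)--(iv) from the earlier Katz--Ramirez paper where you verify them directly (routine either way), proves (v) by the same argument that points of $F\times[2]\times[2]$ share a class of $\cP$ exactly when they share a class of $\cP_F$, and obtains (vi)--(viii) by combining (v) with (ii)--(iv).
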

\begin{proof}
Parts \ref{James}--\ref{Mary} are proved in \cite[Lemma ~2.12]{Katz-Ramirez}.

To prove \ref{Nancy}, let $\tau\in\CAs(\cP)$, so that for every $(e,s,v),(e',s',v') \in \eindexset$ with $v=v'$ we have $\tau_{e,s,v} = \tau_{e',s',v'}$ if and only if $(e,s,v)\equiv(e',s',v') \pmod{\cP}$. Thus, when $(e,s,v),(e',s',v')\in F\times [2]\times[2]$ with $v=v'$, we have $(\tau\vert_F)_{e,s,v} = (\tau\vert_F)_{e',s',v'}$ if and only if $(e,s,v)\equiv(e',s',v') \pmod{\cP_F}$. Hence, we have $\tau\vert_F\in \CAs(\cP_F)$.
To prove \ref{Orestes}, \ref{Peter}, and \ref{Quentin} respectively, then use \ref{Katherine}, \ref{Larry}, and \ref{Mary} each together with \ref{Nancy}.
\end{proof}
Partitions in which all classes are of even cardinality play an important role in our calculations.
\begin{definition}[Even partition]
We say that a partition $\cP$ of some set $S$ is {\it even} to mean that every class $P \in \cP$ is a finite set of even cardinality.
\end{definition}
See \cite[Ex.~2.15]{Katz-Ramirez} for an example.
Now we define the third important property that the key partitions in our calculation of central moments have.
\begin{definition}[Globally even, locally odd partition]
Let $E\subseteq \N$.
We say that a partition $\cP$ of $\eindexset$ is {\it globally even, locally odd} (abbreviated {\it GELO}) to mean that $\cP$ is an even partition such that for every $e \in E$, the restricted partition $\cP_{\{e\}}$ is not even.
For $p \in \N$, we use $\GELO(p)$ to denote the set of GELO partitions in $\Part(p)$.
\end{definition}
See \cite[Ex.~2.17]{Katz-Ramirez} for an example.
We pause to record the fact that when $p=1$, the notions of global and local coincide, so the GELO property cannot be satisfied.
\begin{lemma}\label{Walter}
We have $\GELO(1)=\emptyset$.
\end{lemma}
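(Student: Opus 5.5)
The plan is a proof by contradiction that uses only the definitions. Take $p=1$ and $E=[1]=\{0\}$, and suppose some $\cP\in\GELO(1)$ exists. Since $\cP$ is a partition of $\eindexset=\{0\}\times[2]\times[2]$, the restriction of $\cP$ to $\{0\}=E$ is just $\cP$ itself. Indeed, each class $P\in\cP$ satisfies $P_{\{0\}}=P\cap(\{0\}\times[2]\times[2])=P$, and no class is empty, so $\cP_{\{0\}}=\{P: P\in\cP\}\smallsetminus\{\emptyset\}=\cP$.

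The two halves of the GELO definition now conflict. Global evenness says $\cP$ is an even partition. Local oddness, applied to the only element $e=0$ of $E$, says $\cP_{\{0\}}$ is not even. Since $\cP_{\{0\}}=\cP$, the partition $\cP$ would be both even and not even, which is impossible. Hence no such $\cP$ exists, and $\GELO(1)=\emptyset$.

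The argument has no real obstacle. The only point that needs stating explicitly is the identity $\cP_{\{0\}}=\cP$, which follows because the restriction is to the entire index set $E$.
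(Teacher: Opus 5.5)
Your proof is correct and matches the paper's argument: for $p=1$ the restriction $\cP_{\{0\}}$ is $\cP$ itself, so $\cP$ cannot be both even and have $\cP_{\{0\}}$ non-even. Your explicit check that each class satisfies $P_{\{0\}}=P$ is a small detail that the paper leaves unstated.
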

\begin{proof}
Any partition $\cP\in\Part(1)$ has $\cP_{\{0\}}=\cP$, so we cannot have $\cP$ even and $\cP_{\{0\}}$ non-even, so $\GELO(1)=\emptyset$.
\end{proof}
We now define contributory coherence, which is the synthesis of our three important properties and which is the key property of partitions in our calculation of central moments.
\begin{definition}[Coherently contributory partition]
Let $p \in \N$ and let $\cP\in\Coh(p)$.
We say that $\cP$ is {\it coherently contributory} to mean that $\cP$ is both coherently satisfiable and globally even, locally odd.
Let $\CCon(p)$ denote the set of contributory elements of $\Coh(p)$, so that $\CCon(p)=\CSat(p)\cap\GELO(p)$.
\end{definition}
\begin{example}\label{Ajax}
Consider the partition $\cP$ from \cref{Electra}.
Note that that $\cP$ is an even partition (every class has cardinality $2$) but that $\cP_{\{0\}}=\left\{\{(0,0,0)\},\{(0,0,1)\},\{(0,1,0)\},\{(0,1,1)\}\right\}$ and $\cP_{\{1\}}=\left\{\{(1,0,0)\},\{(1,0,1)\},\{(1,1,0)\},\{(1,1,1)\}\right\}$, neither of which is an even partition.
So $\cP$ is GELO, and since \cref{Achilles} shows that $\cP \in \CSat(2)$, we conclude that $\cP \in \CCon(2)$.
\end{example}
We close this section with a result about structural features of coherently contributory partitions.
The following terminology makes these features easier to discuss.
\begin{definition}[Atomized set]
Let $p \in \N$.  An {\it atomized set} is any subset $P$ of $\indexset$ such that $|P_{\{e\}}|\leq 1$ for every $e \in [p]$.
\end{definition}
Now we indicate some constraints on the structure of coherently contributory partitions.
\begin{lemma}\label{Alberto}
Suppose that $p\in\N$ and $\cP\in\CCon(p)$.
Then the following hold.
\begin{enumerate}[label=(\roman*)]
\item\label{Bertram} For each $e \in [p]$, we have
\[
\cP_{\{e\}}=\left\{\{(e,0,0)\},\{(e,0,1)\},\{(e,1,0)\},\{(e,1,1)\}\right\},
\]
i.e., every class in $\cP$ is atomized.
\item\label{Corinna} No class in $\cP$ has more than $p$ elements.
\item\label{David} If $p>0$, then $\card{\cP}$ is at least $4$ but not greater than $2 p$.  Furthermore, if $p$ is odd, then $\card{\cP}$ is at least $6$.
\end{enumerate}
\end{lemma}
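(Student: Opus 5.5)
The plan is to get all three parts from three ingredients: the balancing equation satisfied by some $\tau\in\CAs(\cP,=)$, the ``if and only if'' in the definition of $\CAs(\cP)$, and a counting argument. Part \ref{Bertram} comes first, and parts \ref{Corinna} and \ref{David} follow from it by counting.

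For \ref{Bertram}, fix $e\in[p]$. Since $\cP$ is coherently satisfiable, we may pick $\tau\in\CAs(\cP,=)$, so that $\tau_{e00}+\tau_{e01}=\tau_{e10}+\tau_{e11}$. By coherence, two elements of $\{e\}\times\bindexset$ can share a class only if they have the same third coordinate. So the only possible identifications inside $\cP_{\{e\}}$ are $(e,0,0)\equiv(e,1,0)$ and $(e,0,1)\equiv(e,1,1)$.
\begin{itemize}
\item Suppose $(e,0,0)\equiv(e,1,0)\pmod{\cP}$. Then $\tau_{e00}=\tau_{e10}$, and the balancing equation gives $\tau_{e01}=\tau_{e11}$.
\item By the ``if and only if'' in \cref{Colette}, this forces $(e,0,1)\equiv(e,1,1)\pmod{\cP}$.
\item Then $\cP_{\{e\}}=\{\{(e,0,0),(e,1,0)\},\{(e,0,1),(e,1,1)\}\}$, which is even. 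This contradicts the GELO property.
\end{itemize}
The symmetric case, starting from $(e,0,1)\equiv(e,1,1)$, is identical. Hence $\cP_{\{e\}}$ consists of four singletons, i.e., every class of $\cP$ meets each $\{e\}\times\bindexset$ in at most one point.

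Part \ref{Corinna} is then immediate. A class $P$ is atomized, so $|P|=\sum_{e\in[p]}|P_{\{e\}}|\le p$.

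For \ref{David}, assume $p>0$. By \cref{Walter}, $\CCon(1)=\emptyset$, so the statement is vacuous for $p=1$ and we may take $p\ge 2$. The upper bound comes from the even property: every class has at least $2$ elements, and $\card{[p]\times\bindexset}=4p$, so $\card{\cP}\le 2p$. For the lower bound, fix $v\in[2]$. The set $[p]\times[2]\times\{v\}$ has $2p$ elements and, by coherence, is a union of classes of $\cP$. Each of these classes has at most $p$ elements by \ref{Corinna}, so there are at least $2$ of them for each $v$, giving $\card{\cP}\ge 4$. If $p$ is odd, each class has even size at most $p$, hence at most $p-1$. Since $2p>2(p-1)$, at least $3$ classes are needed for each $v$, giving $\card{\cP}\ge 6$.

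The main obstacle is \ref{Bertram}, specifically remembering to use the two-way implication in \cref{Colette}. It is what converts the equality $\tau_{e01}=\tau_{e11}$, forced by the balancing equation, back into an identification in $\cP$, and that identification is what produces the even local partition and the contradiction.
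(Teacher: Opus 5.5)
Your proof is correct and follows essentially the same route as the paper. For (i) you combine the balancing equation, the two-way implication in the definition of $\CAs(\cP)$, coherence, and the non-evenness of $\cP_{\{e\}}$, just in a slightly different order. Parts (ii) and (iii) are the same counting argument; the only minor deviation is that you get $\card{\cP}\ge 4$ from coherence and part (ii) (at least two classes per color), whereas the paper gets it from $\cP_{\{0\}}$ having four classes.
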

\begin{proof}
Since $\cP\in\CCon(p)\subseteq\GELO(p)\cap\Coh(p)$, for every $e\in [p]$, the restriction $\cP_{\{e\}}$ is a noneven partition of $\{(e,0,0),(e,0,1),(e,1,0),(e,1,1)\}$ into classes where no class can contain elements with differing third coordinates.
So if the partition $\cP_{\{e\}}$ has a class with more than one element, $\cP_{\{e\}}$ must be of the form $\{\{(e,0,v),(e,1,v)\},\{(e,0,1-v)\},\{(e,1,1-v)\}\}$ for some $v \in [2]$.
But then since $\cP\in\CCon(p)\subseteq\CSat(p)$, there is some $\tau\in\CAs(\cP,=)$, so that $\tau_{e,0,v}=\tau_{e,1,v}$ and $\tau_{e,0,1-v}\not=\tau_{e,1,1-v}$ and yet $\tau_{e,0,v}+\tau_{e,0,1-v}=\tau_{e,1,v}+\tau_{e,1,1-v}$, so we obtain a contradiction.
So we conclude that all classes of $\cP_{\{e\}}$ are singleton classes, so that $\cP_{\{e\}}$ has the claimed form.
It then follows that no class of $\cP$ can have more than $p$ elements because for each $e \in [p]$, the class has at most one element in $\{e\}\times[2]\times[2]$.

From now on, suppose that $p>0$.
Since $0 \in [p]$ and $\cP_{\{0\}}$ cannot have more classes than $\cP$, we have $\card{\cP}\geq 4$.
On the other hand, since $\cP$ is a partition of the $4 p$ elements of $\indexset$ into nonempty classes of even cardinality, we see that $\card{\cP}\leq 4 p/2=2 p$.

Now suppose that $p$ is odd, and we may assume $p>1$ since $\CCon(1)\subseteq\GELO(1)=\emptyset$ by \cref{Walter}.
From part \ref{Corinna} we know that no class of $\cP$ has more than $p$ elements, and since $p$ is odd but $\cP$ is even (since $\cP\in\CCon(p)\subseteq\GELO(p)$), this means that no class $\cP$ has more than $p-1$ elements.
Then $\cP\in\CCon(p)\subseteq\Coh(p)$, so it is a union of a partition of $[p]\times[2]\times\{0\}$ and a partition of $[p]\times[2]\times\{1\}$.
So each of these two partitions has at least $\ceil{2 p/(p-1)}=3$ classes, and so $\cP$ has at least $6$ classes.
\end{proof}

\section{Centralized moment formula}\label{Fred}

This section details the theory for calculating a formula for the $p$th central moment of $\SSCC(f,g)$, the sum of squares of the crosscorrelation values for a pair of binary sequences $f$ and $g$.
Recall from the Introduction that the moments are computed with $(f,g)$ ranging over the set of all pairs of length $\ell$ binary sequences, $\Pair(\ell)$, with uniform probability distribution. Also, recall our convention that $\ev v(f,g)={\mathbf E}_{(f,g) \in \Pair(\ell)}(v(f))$ denotes the expected value of a random variable of this distribution, and the $p$th central moment of this random variable is 
\[
\cmom v(f,g) = \ev\left(v(f,g) - \ev v(f,g)\right)^p.
\]

We are working with the $\SSCC$ because it is mathematically more convenient than $\CDF$, and since \eqref{Celeste} shows how to convert between $\CDF$ and $\SSCC$, it is simple to convert the results on $\SSCC$ here to results in terms of $\CDF$.
\cref{Francis} details a method for calculating the central moments of $\SSCC$ in terms of coherently contributory partitions and assignments.
The proof relies on several lemmas, which are included and proved in \cref{Cat}.

\begin{proposition}\label{Francis}
For $p,\ell \in \N$, we have
\[
\cmom\SSCC(f,g) = \sum_{\cP \in \CCon(p)} \card{\CAs(\cP,=,\ell)}.
\]
\end{proposition}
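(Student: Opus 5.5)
The plan is to expand $\SSCC$ as a polynomial in the $\pm1$ entries of $f$ and $g$, raise the centered version to the $p$th power, and take expectations term by term. The surviving terms will be exactly those counted by the coherently contributory partitions.

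\emph{Step 1 (one equation).} For a pair $(f,g)\in\Pair(\ell)$, write $|C_{f,g}(s)|^2=\sum_{j,k} f_{j+s}g_j f_{k+s}g_k$, since the entries are real. Put $a=j+s$, $b=k$, $c=k+s$, $d=j$; the constraint $a-d=c-b$ becomes $a+b=c+d$. Summing over $s$ gives
\[
\SSCC(f,g)=\sum_{\tau\in\As(\{e\},=,\ell)} f_{\tau_{e00}}\,g_{\tau_{e01}}\,f_{\tau_{e10}}\,g_{\tau_{e11}}.
\]
Here the third coordinate $v=0$ marks $f$-entries and $v=1$ marks $g$-entries, and the equation $\tau_{e00}+\tau_{e01}=\tau_{e10}+\tau_{e11}$ is exactly the defining condition of $\As(\{e\},=)$. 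Call the summand $X_\tau(f,g)$.

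Since $f_j^2=g_j^2=1$ and distinct entries are independent uniform signs, $\ev X_\tau$ is $1$ if every value $j$ occurs an even number of times among the $f$-slots and among the $g$-slots, and $0$ otherwise. Because $f$- and $g$-slots are kept apart, this says precisely that the partition coherently induced by $\tau$ is even.

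\emph{Step 2 (the $p$th power).} Write
\[
\SSCC-\ev\SSCC=\sum_{\tau\in\As(\{e\},=,\ell)}\bigl(X_\tau-\ev X_\tau\bigr).
\]
Taking the $p$th power and identifying a $p$-tuple of such $\tau$'s with one $\tau\in\As(p,=,\ell)$ gives
\[
\cmom\SSCC=\sum_{\tau\in\As(p,=,\ell)}\ev\prod_{e\in[p]}\bigl(X_{\tau|_{\{e\}}}-\ev X_{\tau|_{\{e\}}}\bigr).
\]
Fix $\tau$ and let $\cP\in\Coh(p)$ be the partition it coherently induces; by \cref{Daphne}, $\tau|_{\{e\}}$ coherently induces $\cP_{\{e\}}$.

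There are two cases. If some $\cP_{\{e\}}$ is even, then $X_{\tau|_{\{e\}}}\equiv 1$ identically, so that factor vanishes and the term is $0$. Otherwise every $\ev X_{\tau|_{\{e\}}}=0$, the term is $\ev\prod_e X_{\tau|_{\{e\}}}$, and this is a monomial whose expectation is $1$ if $\cP$ is even and $0$ otherwise (the Step 1 argument applied globally). Hence the term equals $1$ exactly when $\cP$ is GELO, and $0$ otherwise.

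\emph{Step 3 (grouping).} The sets $\CAs(\cP,=,\ell)$ for $\cP\in\Coh(p)$ partition $\As(p,=,\ell)$. So $\cmom\SSCC$ equals the sum of $\card{\CAs(\cP,=,\ell)}$ over $\cP\in\GELO(p)\cap\Coh(p)$. Any $\cP$ that is not coherently satisfiable has $\CAs(\cP,=,\ell)=\emptyset$, so restricting to $\CSat(p)$ changes nothing, and the sum becomes the claimed sum over $\CCon(p)$.

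The main obstacle is the bookkeeping in Step 2: one must check that expanding the centered product and taking expectations really collapses to the GELO indicator. The key facts are that a factor vanishes identically as soon as one local monomial is trivial, and that when none is trivial all local means are zero, leaving only the global monomial. I would isolate this as an algebraic lemma about products of centered $\pm1$ monomials, and state the edge case $p=0$ separately, where both sides equal $1$.
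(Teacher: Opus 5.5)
Your proof is correct. It computes the same per-assignment quantity as the paper, but evaluates it by a different mechanism.

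\textbf{The paper's route.} The paper expands $\prod_{e}(\SSCC-\ev\SSCC)$ binomially over the subsets $E\subseteq[p]$ of factors that are replaced by their means. It then uses the distributive Lemmas \ref{Gerald} and \ref{Hortense} to rewrite everything as a sum over $\tau\in\As([p],=,\ell)$, and evaluates each resulting product of expectations with \cref{Irene}. Finally, it kills the non-GELO partitions with the alternating sum $\sum_{E\subseteq C}(-1)^{\card{E}}$, where $C$ is the set of equations $c$ with $\cP_{\{c\}}$ even.

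\textbf{Your route.} You center each monomial before expanding. For each $\tau$ you therefore need $\ev\prod_e\bigl(X_{\tau|_{\{e\}}}-\ev X_{\tau|_{\{e\}}}\bigr)$, which is exactly what the paper's inner alternating sum equals. You evaluate it by a dichotomy instead of by inclusion--exclusion. The extra ingredient is your observation that a locally even monomial is identically $1$ on $\Pair(\ell)$, because $f_j^2=g_j^2=1$. Its centered factor therefore vanishes pointwise, not merely in expectation. Equivalently, $\SSCC-\ev\SSCC$ is simply the sum of the locally non-even monomials.

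\textbf{Comparison.} Your route is shorter and more transparent: no alternating sums, and no separate distributive lemmas beyond expanding a product of sums and using linearity of expectation. It also needs no special treatment of $p=0$, since the empty partition is vacuously GELO. The paper's route works purely with expectations of monomials and reuses the Katz--Ramirez machinery. This keeps the argument parallel to the autocorrelation case, at the cost of more bookkeeping.
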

\begin{proof}
Let $f=(\ldots,f_0,f_1,f_2,\ldots)$ and $g=(\ldots,g_0,g_1,g_2,\ldots)$ be binary sequences of length $\ell$, so that $f_j=g_j=0$ when $j\not\in[\ell]$, and let $h$ be the pair of sequences $(f,g)$.
If $E\subseteq\N$ and $\tau\in\As(E)$, then we use the notation $h^{\tau}$ as a shorthand for $\left(\prod_{\alpha \in \leindexset} f_{\tau_\alpha}\right)\left(\prod_{\beta \in \reindexset} g_{\tau_\beta}\right)$.
By \cite[eq.~(14)]{Katz}, we have
\[
\SSCC(h) = \sums{s,t,u,v \in \Z \\ s+t=u+v} f_s g_t f_u g_v = \sums{s,t,u,v \in [\ell] \\ s+t=u+v} f_s g_t f_u g_v.
\]
Using our notation, this formula becomes $\SSCC(f,g)=\sum_{\tau \in As(\{e\},=,\ell)} h^\tau$ for every $e \in [p]$.
Therefore
\[
\cmomh\SSCC(h) = \evh\left(\prod_{e \in [p]} \left(\sum_{\tau^{(e)} \in \As(\{e\},=,\ell)} h^{\tau^{(e)}} - \evh\left(\sum_{\tau^{(e)} \in \As(\{e\},=,\ell)} h^{\tau^{(e)}}\right) \right)\right).
\]
Then by a binomial expansion, we obtain
\[
\cmomh\SSCC(h) = \evh\left(\sum_{E \subseteq [p]} \left(\prod_{d \in [p]\smallsetminus E} \,\,\, \sum_{\tau^{(d)} \in \As(\{d\},=,\ell)} h^{\tau^{(d)}}\right) (-1)^{\card{E}}\prod_{e \in E} \left(\evh\sum_{\tau^{(e)} \in \As(\{e\},=,\ell)} h^{\tau^{(e)}}\right)\right).
\]
Then by \cref{Gerald}, we obtain
\begin{align*}
\cmomh\SSCC(h)
& = \evh\left(\sum_{E \subseteq [p]} \left(\sum_{\upsilon \in \As([p]\smallsetminus E,=,\ell)} h^\upsilon\right) (-1)^{\card{E}}\prod_{e \in E} \left(\evh\sum_{\tau^{(e)} \in \As(\{e\},=,\ell)} h^{\tau^{(e)}}\right)\right) \\
& = \sum_{E \subseteq [p]} (-1)^{\card{E}} \left(\sum_{\upsilon \in \As([p]\smallsetminus E,=,\ell)} \evh\left(h^\upsilon\right)\right) \prod_{e \in E} \left(\sum_{\tau^{(e)} \in \As(\{e\},=,\ell)} \evh\left(h^{\tau^{(e)}}\right)\right).
\end{align*}
Then by \cref{Hortense}, we have
\[
\cmomh\SSCC(h) = \sum_{E \subseteq [p]} (-1)^{\card{E}} \sum_{\tau \in \As([p],=,\ell)} \evh\left(h^{\tau\vert_{[p]\smallsetminus E}}\right) \prod_{e \in E} \evh\left(h^{\tau\vert_{\{e\}}}\right).
\]
Now we sort the assignments in $\As([p],=,\ell)$ according to which partition they coherently induce.
\begin{align*}
\cmomh\SSCC(h)
& = \sum_{E \subseteq [p]} (-1)^{\card{E}} \sum_{\cP \in\Coh(p)} \,\,\, \sum_{\tau \in \CAs(\cP,=,\ell)} \evh\left(h^{\tau\vert_{[p]\smallsetminus E}}\right) \prod_{e \in E} \evh\left(h^{\tau\vert_{\{e\}}}\right) \\
& =  \sum_{\cP \in\Coh(p)} \,\,\, \sum_{E \subseteq [p]} (-1)^{\card{E}} \sum_{\tau \in \CAs(\cP,=,\ell)} \evh\left(h^{\tau\vert_{[p]\smallsetminus E}}\right) \prod_{e \in E} \evh\left(h^{\tau\vert_{\{e\}}}\right).
\end{align*}
Let $\cP$ be a fixed partition in $\Coh(p)$, and consider the value of the innermost sum in the last expression.
If $\tau\in\CAs(\cP,=,\ell)$, then \cref{Irene} tells us that
\begin{equation}\label{Eleanor}
\evh\left(h^{\tau\vert_{[p]\smallsetminus E}}\right) \prod_{e \in E} \evh\left(h^{\tau\vert_{\{e\}}}\right)
\end{equation}
equals $1$ if and only if $\cP_{\{e\}}$ is even for every $e \in E$ and also $\cP_{[p]\smallsetminus E}$ is even.
Otherwise the term \eqref{Eleanor} is $0$.
Equivalently, the term \eqref{Eleanor} is $1$ if and only if $\cP_{\{e\}}$ is even for every $e \in E$ and $\cP$ is even.
So we fix an even $\cP \in \Coh(p)$, and let $C$ be the set of all $c \in [p]$ such that $\cP_{\{c\}}$ is even, and then
\begin{align*}
\sum_{E \subseteq [p]} (-1)^{\card{E}} \sum_{\tau \in \CAs(\cP,=,\ell)} \evh\left(h^{\tau\vert_{[p]\smallsetminus E}}\right) \prod_{e \in E} \evh\left(h^{\tau\vert_{\{e\}}}\right)
& = \sum_{E \subseteq C} (-1)^{\card{E}} \card{\CAs(\cP,=,\ell)} \\
& = \card{\CAs(\cP,=,\ell)}  \sum_{E \subseteq C} (-1)^{\card{E}}  \\
& = \begin{cases}
\card{\CAs(\cP,=,\ell)} & \text{if $C=\emptyset$} \\
0 & \text{otherwise.}
\end{cases}
\end{align*}
So we only get a nonzero contribution from partitions that are globally even, locally odd.
So our central moment calculation becomes
\begin{align*}
\cmomh\SSCC(h)
& =  \sum_{\cP \in\Coh(p)\cap\GELO(p)} \card{\CAs(\cP,=,\ell)} \\
& =  \sum_{\cP \in\CCon(p)} \card{\CAs(\cP,=,\ell)},
\end{align*}
where the final step uses the fact that $\CAs(\cP,=,\ell)$ is empty whenever $\cP$ is not coherently satisfiable.
\end{proof}

\section{Coherent isomorphism classes of partitions}\label{Isabel}

The formula for the $p$th central moment of $\SSCC$ in \cref{Francis} can be arduous to apply because the set $\CCon(p)$ contains a vast number of partitions when $p$ becomes large (already more than one thousand partitions when $p=4$).
There are symmetry transformations acting on our partitions $\cP \in \CCon(p)$ that preserve the associated contributions $\card{\CAs(\cP,=,\ell)}$ in the formula for the $p$th central moment in \cref{Francis}.
By introducing the appropriate group of symmetries and organizing $\CCon(p)$ into orbits under the action of this group, we can obtain a summation formula for the $p$th central moment of $\SSCC$ with far fewer terms than the one in \cref{Francis}.
This section is dedicated to defining this group, describing its action on partitions, and using this to derive a new formula for the $p$th central moment of $\SSCC$ that is easier to use than the one in \cref{Francis}.

We begin with group-theoretic notations and conventions.
If $A$ is a set, then $S_A$ denotes the group of all permutations of $A$.
If $p\in\N$, then $S_p$ is shorthand for $S_{[p]}$.
If $S$ is a subset of $\indexset$ and $\pi\in S_{\indexset}$, then $\pi(S)=\{\pi(s): s \in S\}$.
If $\cS$ is a set of subsets of $\indexset$, then $\pi(\cS)$ is a shorthand for $\{\pi(S): S \in \cS\}$.
If $\fS$ is a set of sets of subsets of $\indexset$, then $\pi(\fS)$ is a shorthand for $\{\pi(\cS): \cS \in \fS\}$.

For $p\in\N$, we let $\Wrp$ be the wreath product $S_2 \wr_{[p]} S_p$.
For $\delta=((\sigma_0,\ldots,\sigma_{p-1}),\epsilon)\in\Wrp$ and $(e,s,v)\in\indexset$, we have $\delta(e,s,v)=(\epsilon(e),\sigma_{\epsilon(e)}(s),v)$.
We note that $|\Wrp|=\card{S_p} \cdot \card{S_2}^p=2^p p!$.

For $p\in\N$, we let $\Grp=S_2\times \Wrp$.
This $\Grp$ is the group of symmetries that we use to reorganize the formula in \cref{Francis} for the $p$th central moment into one more manageable to calculate.
The form of an element of our group is largely determined by the fact that when it permutes a partition $\cP$ of $\indexset$ by permuting the underlying elements of $\indexset$, these elements index the terms in the system of equations
\begin{align*}
\tau_{0,0,0}+\tau_{0,0,1} & = \tau_{0,1,0}+\tau_{0,1,1} \\
& \,\,\, \vdots  \\
\tau_{p-1,0,0}+\tau_{p-1,0,1} & = \tau_{p-1,1,0}+\tau_{p-1,1,1}
\end{align*}
that an assignment $\tau\in\CAs(\cP,=,\ell)$ must satisfy.  (See \cref{Colette} for $\CAs(\cP,=,\ell)$, which ultimately depends on \cref{Arnold}, where we find the equations.)
For $\gamma=(\digamma,((\sigma_0,\ldots,\sigma_{p-1}),\epsilon))\in\Grp$ and $(e,s,v)\in\indexset$, we have $\gamma(e,s,v)=(\epsilon(e),\sigma_{\epsilon(e)}(s),\digamma(v))$.
In this case, we say that {\it $\gamma$ uses the permutation $\epsilon$ to permute the equations, then uses $\sigma_e$ to permute the sides of equation $e$ for each $e \in [p]$, and then uses $\digamma$ to simultaneously permute the pair of places on every side of every equation.}
Since each element of $\Grp$ permutes $\indexset$, we regard $\Grp$ as a subgroup of $S_{\indexset}$.\footnote{Our group $\Grp$ here is closely related the group $\cW^{(p)}=(S_2\wr_{[2]} S_2) \wr_{[p]} S_p$ used in \cite[Notation 4.1]{Katz-Ramirez} to study the distribution of the autocorrelation demerit factor.  This $\cW^{(p)}$ is also regarded as a subgroup of $S_{\indexset}$.  If we let $i\colon \cW^{(p)} \to S_{\indexset}$ and $j \colon \Grp \to S_{\indexset}$ be the monomorphisms that give our identification of elements of $\cW^{(p)}$ and $\Grp$ with elements of $S_{\indexset}$, and if we let $\phi\colon \Grp \to \cW^{(p)}$ be the monomorphism $(\digamma,((\sigma_0,\ldots,\sigma_{p-1}),\epsilon)) \mapsto ((((\digamma,\digamma),\sigma_0),\ldots,((\digamma,\digamma),\sigma_{p-1})),\epsilon)$, then $i \circ \phi = j$, so that so that $\Grp$ can be regarded as a subgroup of $\cW^{(p)}$ via $\phi$.}
We  note that $|\Grp|=\card{S_2} \cdot |\Wrp|=2^{p+1} p!$, a fact that we shall use later when computing the size of orbits under this action.

Now we let our group $\Grp$ act on assignments.
In fact, we define the action of an arbitrary element of $S_{\indexset}$ on $\As([p])$, but since we identify $\Grp$ with a subgroup of $S_{\indexset}$, this defines an action of $\Grp$ on $\As([p])$.
If $p \in \N$ and $\pi\in S_{\indexset}$, then $\pi^*\colon \As([p]) \to \As([p])$ is the permutation given by $\pi^*(\tau)=\tau\circ\pi$, and one readily sees that the inverse of $\pi^*$ is $(\pi^{-1})^*$.
Now we show that our action preserves some important properties of assignments and transforms others in a straightforward way.
\begin{lemma}\label{Doreen}
Let $p,\ell \in \N$ and suppose that $\pi\in\Grp$ and $\cP\in\Coh(p)$.  Then
\begin{enumerate}[label=(\roman*)]
\item\label{Samantha} $\pi^*(\As([p]))=\As([p])$,
\item\label{Thomas} $\pi^*(\As([p],\ell))=\As([p],\ell)$,
\item\label{Ursula} $\pi^*(\As([p],=))=\As([p],=)$,
\item\label{Veronica} $\pi^*(\As([p],=,\ell))=\As([p],=,\ell)$,
\item\label{William} $\pi^*(\CAs(\pi(\cP)))=\CAs(\cP)$,
\item\label{Xavier} $\pi^*(\CAs(\pi(\cP),\ell))=\CAs(\cP,\ell)$,
\item\label{Yolanda} $\pi^*(\CAs(\pi(\cP)),=)=\CAs(\cP,=)$, and
\item\label{Zachary} $\pi^*(\CAs(\pi(\cP),=,\ell))=\CAs(\cP,=,\ell)$.
\end{enumerate}
Thus, the action of $\Grp$ on $\As([p])$ can be restricted to any of the following subsets: $\As([p],\ell)$, $\As([p],=)$, and $\As([p],=,\ell)$.
\end{lemma}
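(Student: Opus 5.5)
The plan is to derive everything from two facts: $\pi^*$ is a bijection of $\As([p])$ with inverse $(\pi^{-1})^*$, and each $\pi\in\Grp$ preserves the structure that defines the various subsets. For a general $\pi$, write $\pi(e,s,v)=(\epsilon(e),\sigma_{\epsilon(e)}(s),\digamma(v))$ with $\gamma=\pi=(\digamma,((\sigma_0,\ldots,\sigma_{p-1}),\epsilon))$.

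Part \ref{Samantha} is immediate, since $\tau\circ\pi$ is again a function $\indexset\to\N$ and $\pi^*$ is bijective. Part \ref{Thomas} holds because $\tau\circ\pi$ has the same image as $\tau$, so the image of $\tau\circ\pi$ lies in $[\ell]$ exactly when the image of $\tau$ does. For \ref{Ursula}, take $\tau\in\As([p],=)$ and $e\in[p]$, and put $e'=\epsilon(e)$ and $\sigma=\sigma_{e'}$. Then
\[
(\tau\circ\pi)_{e,s,0}+(\tau\circ\pi)_{e,s,1}=\tau_{e',\sigma(s),\digamma(0)}+\tau_{e',\sigma(s),\digamma(1)}=\tau_{e',\sigma(s),0}+\tau_{e',\sigma(s),1},
\]
because $\digamma$ permutes $\{0,1\}$. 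As $s$ ranges over $[2]$, $\sigma(s)$ ranges over both sides of equation $e'$, which have equal sums. So equation $e$ holds for $\tau\circ\pi$, giving $\pi^*(\As([p],=))\subseteq\As([p],=)$. Applying the same inclusion to $\pi^{-1}\in\Grp$ gives the reverse inclusion. Part \ref{Veronica} follows by intersecting \ref{Thomas} and \ref{Ursula}, since $\pi^*$ is a bijection and so commutes with intersections.

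For \ref{William}, the key observation is that $\pi$ maps $[p]\times[2]\times\{v\}$ onto $[p]\times[2]\times\{\digamma(v)\}$. Hence two elements have equal third coordinates if and only if their images under $\pi$ do, and $\pi(\cP)$ is again coherent. Now let $\tau\in\CAs(\pi(\cP))$, and take $\alpha,\beta\in\indexset$ with equal third coordinates. Then $\pi(\alpha)$ and $\pi(\beta)$ also have equal third coordinates, so
\[
(\tau\circ\pi)(\alpha)=(\tau\circ\pi)(\beta)\iff \pi(\alpha)\equiv\pi(\beta)\pmod{\pi(\cP)}\iff \alpha\equiv\beta\pmod{\cP},
\]
where the last equivalence holds because $\pi$ carries classes of $\cP$ bijectively onto classes of $\pi(\cP)$. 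Therefore $\pi^*(\CAs(\pi(\cP)))\subseteq\CAs(\cP)$. The converse runs the same argument backwards: given $\tau\in\CAs(\cP)$, the assignment $\tau\circ\pi^{-1}$ lies in $\CAs(\pi(\cP))$ and maps to $\tau$ under $\pi^*$.

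Parts \ref{Xavier}--\ref{Zachary} then follow by intersecting \ref{William} with \ref{Thomas}, \ref{Ursula} and \ref{Veronica} respectively; I read the misplaced parenthesis in \ref{Yolanda} as meaning $\CAs(\pi(\cP),=)$. The closing claim about restricting the action follows from \ref{Thomas}--\ref{Veronica}, together with the fact that $\pi\mapsto\pi^*$ is an action. The only step needing care is \ref{William}, specifically checking that $\digamma$ acts uniformly on the third coordinate so that coherence and the ``same third coordinate'' condition are preserved. Everything else is bookkeeping.
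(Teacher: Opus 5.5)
Your proof is correct and follows the paper's argument essentially step for step. It uses bijectivity of $\pi^*$ for (i)--(ii), the uniform action of $\digamma$ on third coordinates for (v), reverse inclusions via $\pi^{-1}$, and intersections for (iv) and (vi)--(viii). The differences are cosmetic: in (iii) you check each equation directly rather than packaging the computation as $\psi(\pi^*(\tau))=\pi\cdot\psi(\tau)$ with an auxiliary action on $\Z^{[p]}$, and you explicitly note that $\pi(\cP)$ is coherent (so that $\CAs(\pi(\cP))$ is defined), which the paper leaves implicit here.
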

\begin{proof}
For each item, we need only prove that the left-hand set is contained in the right-hand one, and then we can get the opposite containment by invoking the one that we already proved with $\pi^{-1}$ in place of $\pi$ and $\pi(\cP)$ in place of $\cP$ (if $\cP$ appears) and then rearranging.  We have already noted that $\pi^*$ is a permutation of $\As([p])$, so \ref{Samantha} is clear.  From this, \ref{Thomas} follows because if $\tau \in \As([p],\ell)$, then \ref{Samantha} shows that $\pi^*(\tau) \in \As([p])$ and the image of $\pi^*(\tau)=\tau\circ\pi$ is contained in the image of $\tau$, which is contained in $[\ell]$.

To prove \ref{Ursula}, we define a map $\psi\colon \As([p]) \to \Z^{[p]}$ with $(\psi(\tau))(e)=\tau_{e 0 0}+\tau_{e 0 1}-\tau_{e 1 0}-\tau_{e 1 1}$ for each $e \in [p]$, and we also let $\phi$ be the unique isomorphism from $S_2$ to the cyclic multiplicative group $\Z^\times=(\{-1,1\},\cdot)$.  Note that if $\tau\in\As([p])$, then $\tau\in\As([p],=)$ if and only if $\psi(\tau)=0$ (where $0$ here means the zero function).  We let $\Grp$ act on $\Z^{[p]}$ as follows: if $\pi=(\digamma,((\sigma_0,\ldots,\sigma_{p-1}),\epsilon))\in\Grp$ and $u\in\Z^{[p]}$, then $\pi \cdot u$ is the element of $\Z^{[p]}$ with $(\pi\cdot u)(e)= \phi(\sigma_{\epsilon(e)}) u(\epsilon(e))$.
If $\pi\in\Grp$ and $\tau\in\As([p])$, then one shows that $\pi \cdot \psi(\tau)=\psi(\pi^*(\tau))$ by showing that the two functions have the same value for every input $e \in [p]$.
Therefore, if $\tau\in\As([p])$, then $\tau\in\As([p],=)$ if and only if $\psi(\tau)=0$, which is true if and only if $\pi\cdot\psi(\tau)=0$ for all $\pi\in\Grp$, which in turn is true if and only if $\psi(\pi^*(\tau))=0$ for all $\pi\in\Grp$, and this is true if and only if $\pi^*(\tau) \in \As([p],=)$ for all $\pi\in\Grp$.
This proves \ref{Ursula}, and then \ref{Veronica} follows from \ref{Thomas} and \ref{Ursula} because $\As([p],=,\ell)=\As([p],\ell)\cap\As([p],=)$.

To prove \ref{William}, let us write $\pi=(\digamma,((\sigma_0,\ldots,\sigma_{p-1}),\epsilon))$.
Suppose that $\tau\in\CAs(\pi(\cP))$.
Suppose that $(e,s,v),(e',s',v') \in \indexset$ with $v=v'$.
Then $\digamma(v)=\digamma(v')$, so the last components of $\pi(e,s,v)$ and $\pi(e',s',v')$ are identical.
Then our assumption that $\tau\in\CAs(\pi(\cP))$ tells us that $\pi(e,s,v)\equiv\pi(e',s',v') \pmod{\pi(\cP)}$ if and only if $\tau(\pi(e,s,v))=\tau(\pi(e',s',v'))$.
Thus, $(e,s,v)\equiv (e',s',v')\pmod{\cP}$ if and only if $(\pi^*(\tau))(e,s,v)=(\pi^*(\tau))(e',s',v')$.
Thus, $\pi^*(\tau)\in\CAs(\cP)$.

To prove \ref{Xavier}, \ref{Yolanda}, and \ref{Zachary}, let ``$\lozenge$'' stand for one of the decorations ``$\ell$'' or ``$=$'' or ``$=,\ell$''.
Let $\tau\in\pi^*(\CAs(\pi(\cP),\lozenge))$. By definition $\CAs(\pi(\cP),\lozenge) = \CAs(\pi(\cP))\cap\As([p],\lozenge)$, so $\tau\in\pi^*(\CAs(\pi(\cP))\cap\As([p],\lozenge)) \subseteq \pi^*(\CAs(\pi(\cP)))\cap\pi^*(\As([p],\lozenge))$, and thus by \ref{William} and the appropriate one of \ref{Thomas}--\ref{Veronica}, we have $\tau\in\CAs(\cP)\cap \As([p],\lozenge)=\CAs(\cP,\lozenge)$.
\end{proof}
Parts \ref{William}--\ref{Zachary} of \cref{Doreen} show that the action of $\pi\in\Grp$ on an assignment changes which partition the assignment coherently induces.
This means that we need to also consider the action of $\Grp$ on partitions of $\indexset$.
Partitions that lie in the same orbit under this action are considered to be isomorphic in some sense.
\begin{definition}[Coherently isomorphic partitions] Let $p \in \N$ and let $\cP,\cQ \in \Part(p)$.  We say that {\it $\cP$ and $\cQ$ are coherently isomorphic}, and write $\cP\Ccong\cQ$, to mean that there is some $\gamma\in\Grp$ such that $\gamma(\cP)=\cQ$.  The {\it coherent isomorphism class of $\cP$} is the set of all partitions that are coherently isomorphic to $\cP$.  In other words, the coherent isomorphism class of $\cP$ is the orbit of $\cP$ under the action of $\Grp$.
\end{definition}
\begin{example}\label{Odysseus}
Consider the partition $\cP \in \Coh(2)$ from \cref{Electra}.  Now consider the group element $\gamma=(\id,((\id,(01)),\id))$ in $\Grtwo$, where $\id$ denotes the identity permutation in $S_2$ and $(01)$ denotes the transposition in $S_2$.  Then let
\[
\cQ=\gamma(\cP)=\left\{\{(0,0,0),(1,1,0))\},\{(0,0,1),(1,1,1)\},\{(0,1,0),(1,0,0)\},\{(0,1,1),(1,0,1)\}\right\}, 
\]
so $\cP$ and $\cQ$ are coherently isomorphic.
\end{example}
Since coherent isomorphism classes are orbits under the action of $\Grp$, we define the stabilizer of a partition and then use the orbit-stabilizer counting formula to determine the size of a coherent isomorphism class.
\begin{definition}[Stabilizer of a partition] Let $p \in \N$ and let $\cP \in \Part(p)$.  The {\it stabilizer of $\cP$}, written $\Stab_{\Grp}(\cP)$, is the subgroup of $\Grp$ consisting of all $\gamma \in \Grp$ with $\gamma(\cP)=\cP$.
\end{definition}
\begin{lemma}
Let $p \in \N$ and let $\cP \in \Part(p)$.  Then the coherent isomorphism class of $\cP$ contains
\[
\frac{\card{\Grp}}{\card{\Stab_{\Grp}(\cP)}}=\frac{2^{p+1} p!}{\card{\Stab_{\Grp}(\cP)}}
\]
partitions.
\end{lemma}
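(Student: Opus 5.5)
The plan is to apply the orbit--stabilizer theorem to the action of $\Grp$ on $\Part(p)$. First I will confirm that this action really is a group action. Each $\gamma \in \Grp$ is regarded as a permutation of $\indexset$. The rule $\gamma(\cP)=\{\gamma(P): P \in \cP\}$ sends a partition of $\indexset$ to a partition of $\indexset$, for the following reasons:
\begin{itemize}
\item a bijection carries disjoint nonempty sets to disjoint nonempty sets;
\item the images of the classes still cover $\indexset$, because $\gamma$ is surjective.
\end{itemize}
The identity permutation fixes every partition. Moreover, $(\gamma\gamma')(\cP)=\gamma(\gamma'(\cP))$, because $(\gamma\gamma')(P)=\gamma(\gamma'(P))$ holds for every subset $P$. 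Hence we have an action of $\Grp$, viewed as a subgroup of $S_{\indexset}$, on the finite set $\Part(p)$.

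By definition, the coherent isomorphism class of $\cP$ is its orbit under this action, and $\Stab_{\Grp}(\cP)$ is its stabilizer. The orbit--stabilizer theorem sets up a bijection between the left cosets $\gamma\Stab_{\Grp}(\cP)$ and the orbit elements $\gamma(\cP)$. It is well defined and injective because $\gamma(\cP)=\gamma'(\cP)$ if and only if $\gamma^{-1}\gamma' \in \Stab_{\Grp}(\cP)$, and it is surjective by the definition of the orbit. So the orbit has $\card{\Grp}/\card{\Stab_{\Grp}(\cP)}$ elements, by Lagrange's theorem.

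It remains to compute $\card{\Grp}$. Since $\Grp=S_2\times\Wrp$ and the wreath product $\Wrp = S_2\wr_{[p]} S_p$ has underlying set $S_2^{[p]}\times S_p$, we get $\card{\Grp}=2\cdot 2^p\cdot p! = 2^{p+1}p!$. This count refers to the abstract group, however. To use it as the order of the permutation group acting on $\indexset$, we need the map $\Grp\to S_{\indexset}$ to be injective, that is, a faithful identification.

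This faithfulness check is the only step that needs any care. Suppose $\gamma=(\digamma,((\sigma_0,\ldots,\sigma_{p-1}),\epsilon))$ acts as the identity on $\indexset$. I read off each component from its effect on the coordinates:
\begin{itemize}
\item the first coordinate of $\gamma(e,s,v)$ is $\epsilon(e)$, so $\epsilon=\id$;
\item the second coordinate is then $\sigma_e(s)$ for every $s$, so each $\sigma_e=\id$;
\item the third coordinate is $\digamma(v)$ for every $v$, so $\digamma=\id$.
\end{itemize}
For $p\ge1$ this shows the map is injective, so $\card{\Grp}=2^{p+1}p!$ is also the order of the acting permutation group.

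The case $p=0$ has to be handled on its own. There $\indexset=\emptyset$ and the $S_2$ factor acts trivially, so the map is not injective. Still, $\Part(0)=\{\emptyset\}$, every element of $\Grp$ stabilizes this single partition, and so $\card{\Stab_{\Grp}(\cP)}=\card{\Grp}=2$. The ratio $2/2=1$ is still correct, since the stabilizer is taken inside the abstract group $\Grp$. In fact orbit--stabilizer holds for any action of the abstract group, faithful or not, so the faithfulness check is not strictly needed for the formula itself.
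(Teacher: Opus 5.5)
Your proposal is correct and takes the same approach as the paper: the paper simply invokes the orbit--stabilizer theorem for the action of $\Grp$ on $\Part(p)$, together with the previously noted order $\card{\Grp}=2^{p+1}p!$. Your extra checks go beyond what the paper writes out: that the action is well defined, that $\Grp\to S_{\indexset}$ is faithful for $p\ge 1$, and the case $p=0$, where the identification of $\Grp$ with a subgroup of $S_{\indexset}$ breaks down but orbit--stabilizer for the abstract group still gives the formula.
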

\begin{example}\label{Penelope}
Consider the stabilizer of the partition $\cP\in\Coh(2)$ from \cref{Electra} under the action of $\Grtwo=S_2\times \Wrtwo$.
It is easy enough to check that the $\Stab_{\Grtwo}(\cP)=\{(\digamma,((\sigma_0,\sigma_1),\epsilon))\in\Grtwo : \digamma, \sigma_0, \sigma_1, \epsilon \in S_2; \sigma_0=\sigma_1\}$, so $\card{\Stab_{\Grtwo}(\cP)}=2^3$ and note that $|\Grtwo|=2^4$, so then the coherent isomorphism class of $\cP$ has $2^4/2^3=2$ partitions in it.
\cref{Odysseus} exhibits the other partition $\cQ$ that is isomorphic to $\cP$.
\end{example}
Most of the rest of this section is dedicated to showing that our group action organizes the coherently contributory partitions (i.e., elements $\cP$ in $\CCon(p)$ in the formula in \cref{Francis}) into orbits.
In order to see this, we need to show that our group action preserves the property of coherent contributoriness so that each orbit either contains only coherently contributory partitions or no coherently contributory partitions at all.
Since coherent contributoriness is the combination of coherence, GELOness, and satisfiability, we show that each of these three properties is preserved by our group action.
\begin{lemma}
Let $p\in\N$ and $\cP,\cQ\in\Part(p)$ with $\cP\Ccong\cQ$.
Then $\cP$ is coherent if and only if $\cQ$ is coherent.
\end{lemma}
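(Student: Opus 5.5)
The plan is to prove one direction and obtain the other by symmetry. Suppose $\cQ=\gamma(\cP)$ for some $\gamma\in\Grp$; then $\cP=\gamma^{-1}(\cQ)$ with $\gamma^{-1}\in\Grp$. So it suffices to show that if $\cP$ is coherent and $\gamma\in\Grp$, then $\gamma(\cP)$ is coherent. Applying this with $\gamma^{-1}$ in place of $\gamma$ and the roles of $\cP$ and $\cQ$ swapped gives the converse.

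Write $\gamma=(\digamma,((\sigma_0,\ldots,\sigma_{p-1}),\epsilon))$. The key observation is that the third coordinate of $\gamma(e,s,v)=(\epsilon(e),\sigma_{\epsilon(e)}(s),\digamma(v))$ depends only on $v$, through the single permutation $\digamma\in S_2$. Consequently, for each $v\in[2]$ we have
\[
\gamma\bigl([p]\times[2]\times\{v\}\bigr)\subseteq [p]\times[2]\times\{\digamma(v)\},
\]
and indeed equality holds by cardinality. Now take a class of $\gamma(\cP)$; it has the form $\gamma(P)$ for some $P\in\cP$. Since $\cP$ is coherent, $P$ is $v$-coherent for some $v$, i.e., $P\subseteq[p]\times[2]\times\{v\}$. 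Then $\gamma(P)\subseteq[p]\times[2]\times\{\digamma(v)\}$, so $\gamma(P)$ is $\digamma(v)$-coherent. Hence every class of $\gamma(\cP)$ is coherent.

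We should also note that $\gamma(\cP)$ is indeed a partition of $\indexset$ (so that it lies in $\Part(p)$), because $\gamma$ is a bijection of $\indexset$. There is no real obstacle here. The only point needing care is that the same $\digamma$ is applied to every triple, which is exactly what prevents two elements with equal third coordinate from being sent to elements with different third coordinates.
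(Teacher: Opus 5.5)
Your proof is correct and rests on the same key fact as the paper's: $\gamma$ changes the third coordinate only through the single permutation $\digamma$, so it carries each layer $[p]\times[2]\times\{v\}$ onto $[p]\times[2]\times\{\digamma(v)\}$. The only difference is cosmetic. The paper phrases coherence as ``elements with different third coordinates lie in different classes'' and runs one chain of equivalences, whereas you check the classes $\gamma(P)$ directly and get the converse from $\gamma^{-1}\in\Grp$.
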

\begin{proof}
Since $\cP$ and $\cQ$ are coherently isomorphic, there is some $\gamma=(\digamma,((\sigma_0,\ldots,\sigma_{p-1}),\epsilon))\in\Grp$ such that $\gamma(\cP)=\cQ$.
The partition $\cP$ is coherent if and only if $(e,s,0)$ and $(e',s',1)$ reside in different classes of $\cP$ for all $(e,s),(e',s') \in [p]\times[2]$.
This is true if and only if $\gamma(e,s,0)$ and $\gamma(e',s',1)$ reside in different classes of $\cQ$ for all $(e,s),(e',s') \in [p]\times [2]$.
This, in turn, is true if and only if $(\epsilon(e),\sigma_{\epsilon(e)}(s),\digamma(0))$ and $(\epsilon(e'),\sigma_{\epsilon(e')}(s'),\digamma(1))$ reside in different classes of $\cQ$ for all $(e,s),(e',s') \in [p]\times [2]$.
As we let $(e,s)$ (resp., $(e',s')$) run through $[p]\times[2]$, we see that $(f,t)=(\epsilon(e),\sigma_{\epsilon(e)}(s))$ (resp., $(f',t')=(\epsilon(e'),\sigma_{\epsilon(e')}(s'))$) also runs through all of $[p]\times[2]$.
So we know that $\cP$ is coherent if and only if $(f,t,\digamma(0))$ and $(f',t',\digamma(1))$ reside in different classes of $\cQ$ for all $(f,t),(f',t') \in [p]\times[2]$.
Since $\{\digamma(0),\digamma(1)\}=\{0,1\}$, we can then say that $\cP$ is coherent if and only if $\cQ$ is coherent.
\end{proof}
The following technical lemma is used to show that our group action preserves GELOness.
\begin{lemma}\label{Abraham}
Let $p\in\N$ and $\cP\in\Part(p)$.  For $e \in [p]$ and $\gamma=(\digamma,((\sigma_0,\ldots,\sigma_{p-1}),\epsilon))\in\Grp$, we have $(\gamma(\cP))_{\{\epsilon(e)\}}=\gamma(\cP_{\{e\}})$, so it follows that the partitions $(\gamma(\cP))_{\{\epsilon(e)\}}$ and $\cP_{\{e\}}$ have the same number classes of each size.
\end{lemma}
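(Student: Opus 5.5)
The plan is to compare the two partitions class by class, using the fact that $\gamma$ sends the block $\{e\}\times[2]\times[2]$ bijectively onto the block $\{\epsilon(e)\}\times[2]\times[2]$.

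First, from the formula $\gamma(e,s,v)=(\epsilon(e),\sigma_{\epsilon(e)}(s),\digamma(v))$, the first coordinate of $\gamma(e',s,v)$ is $\epsilon(e')$. Since $\epsilon$ is a permutation, $\epsilon(e')=\epsilon(e)$ exactly when $e'=e$. So for every subset $P\subseteq\indexset$ we get
\[
\gamma(P)\cap(\{\epsilon(e)\}\times[2]\times[2])=\gamma\bigl(P\cap(\{e\}\times[2]\times[2])\bigr),
\]
that is, $(\gamma(P))_{\{\epsilon(e)\}}=\gamma(P_{\{e\}})$. Here we use that $\gamma$ is a bijection of $\indexset$, so images commute with intersections.

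Next, apply this to every class of $\cP$. By definition of restriction,
\[
(\gamma(\cP))_{\{\epsilon(e)\}}=\{(\gamma(P))_{\{\epsilon(e)\}}:P\in\cP\}\smallsetminus\{\emptyset\}=\{\gamma(P_{\{e\}}):P\in\cP\}\smallsetminus\{\emptyset\}.
\]
Because $\gamma$ is injective, $\gamma(P_{\{e\}})$ is empty exactly when $P_{\{e\}}$ is empty. Hence the last set equals $\gamma(\{P_{\{e\}}:P\in\cP\}\smallsetminus\{\emptyset\})=\gamma(\cP_{\{e\}})$, which is the first claim.

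For the second claim, $\gamma$ restricts to a bijection between the classes of $\cP_{\{e\}}$ and those of $\gamma(\cP_{\{e\}})$, sending each class $Q$ to $\gamma(Q)$. Injectivity gives $\card{\gamma(Q)}=\card{Q}$, so class sizes are preserved with multiplicity. Distinct classes map to distinct (indeed disjoint) images, so no classes merge.

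Nothing here is a real obstacle. The only care needed is bookkeeping: discarding empty restrictions consistently, and checking that the indexing $\sigma_{\epsilon(e)}$ does not affect the first coordinate, which it plainly does not.
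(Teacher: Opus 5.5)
Your proof is correct and is essentially the paper's argument: both rest on $\gamma$ mapping $\{e\}\times[2]\times[2]$ bijectively onto $\{\epsilon(e)\}\times[2]\times[2]$, and both finish by noting that the bijection $\gamma$ preserves class sizes. The difference is cosmetic. The paper identifies the two partitions by checking that their ``same class'' relations agree, pulling back along $\gamma^{-1}$. You instead compute the classes directly via $\gamma(P)\cap(\{\epsilon(e)\}\times[2]\times[2])=\gamma(P_{\{e\}})$ and track the discarded empty sets.
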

\begin{proof}
In view of how the permutation $\gamma\in\Grp$ acts on triples in the set $\indexset$, we have $\gamma(\{e\}\times\bindexset)=\{\epsilon(e)\}\times\bindexset$.
Therefore, $(\gamma(\cP))_{\{\epsilon(e)\}}$ and $\gamma((\cP)_{\{e\}})$ are both partitions of $\{\epsilon(e)\}\times\bindexset$.
Suppose that $(\epsilon(e),s,v),(\epsilon(e),t,w) \in \{\epsilon(e)\}\times\bindexset$.
Then $(\epsilon(e),s,v) \equiv (\epsilon(e),t,w) \pmod{(\gamma(\cP))_{\{\epsilon(e)\}}}$ if and only if  $(\epsilon(e),s,v) \equiv (\epsilon(e),t,w) \pmod{\gamma(\cP)}$, which is true if and only if $\gamma^{-1}(\epsilon(e),s,v) \equiv \gamma^{-1}(\epsilon(e),t,w) \pmod{\cP}$, which is true if and only if $\gamma^{-1}(\epsilon(e),s,v) \equiv \gamma^{-1}(\epsilon(e),t,w) \pmod{\cP_{\{e\}}}$ because $\gamma^{-1}(\epsilon(e),s,v)$ and $\gamma^{-1}(\epsilon(e),t,w)$ lie in $\{e\}\times\bindexset$.
Thus, $(\epsilon(e),s,v) \equiv (\epsilon(e),t,w) \pmod{(\gamma(\cP))_{\{\epsilon(e)\}}}$ if and only if $(\epsilon(e),s,v) \equiv (\epsilon(e),t,w) \pmod{\gamma(\cP_{\{e\}})}$, and so we see that $(\gamma(\cP))_{\{\epsilon(e)\}}$ and $\gamma(\cP_{\{e\}})$ are the same partition of $\{\epsilon(e)\}\times\bindexset$.  Since $\gamma$ permutes the elements of $\indexset$, for $P\in\cP_{\{e\}}$ we have $\card{P}=\card{\gamma(P)}$.  Thus, $(\gamma(\cP))_{\{\epsilon(e)\}}$ and $\cP_{\{e\}}$ have the same number classes of each size.
\end{proof}
\begin{corollary}\label{Beatrice}
Let $p\in\N$ and $\cP,\cQ\in\Part(p)$ with $\cP\Ccong\cQ$.
Then $\cP$ is GELO if and only if $\cQ$ is GELO.
\end{corollary}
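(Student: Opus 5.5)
The plan is to use \cref{Abraham} twice: once to transfer global evenness, and once to transfer local non-evenness for each $e$. By symmetry it suffices to prove one direction. Since coherent isomorphism comes from a group action, if $\gamma(\cP)=\cQ$ then $\gamma^{-1}(\cQ)=\cP$ with $\gamma^{-1}\in\Grp$. So we assume $\cP$ is GELO and show that $\cQ$ is GELO. Fix $\gamma=(\digamma,((\sigma_0,\ldots,\sigma_{p-1}),\epsilon))\in\Grp$ with $\gamma(\cP)=\cQ$.

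\emph{Global evenness.} Each $\gamma\in\Grp$ acts as a bijection of $\indexset$, so every class $\gamma(P)$ of $\cQ$ has the same cardinality as the class $P$ of $\cP$. Hence $\cQ$ is even exactly when $\cP$ is. (Finiteness is automatic, since $\indexset$ is finite.)

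\emph{Local oddness.} Let $f\in[p]$. Because $\epsilon$ is a permutation of $[p]$, there is a unique $e\in[p]$ with $\epsilon(e)=f$. By \cref{Abraham}, the partitions $\cQ_{\{f\}}=(\gamma(\cP))_{\{\epsilon(e)\}}$ and $\cP_{\{e\}}$ have the same number of classes of each size. Therefore $\cQ_{\{f\}}$ is even if and only if $\cP_{\{e\}}$ is even. Since $\cP$ is GELO, $\cP_{\{e\}}$ is not even, so $\cQ_{\{f\}}$ is not even. As $f$ was arbitrary, every restriction $\cQ_{\{f\}}$ is non-even, and so $\cQ$ is GELO.

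There is no real obstacle here; the content is already in \cref{Abraham}. The only point needing care is that $\epsilon$ is surjective onto $[p]$, so that every index $f$ of $\cQ$ arises as some $\epsilon(e)$.
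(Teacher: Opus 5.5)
Your proof is correct and is essentially the paper's argument: evenness of the whole partition is preserved because $\gamma$ is a bijection of $\indexset$, and local non-evenness transfers through \cref{Abraham} together with the fact that $\epsilon$ permutes $[p]$. The only cosmetic difference is that you prove one direction and use $\gamma^{-1}$ for the converse, whereas the paper states each step as a biconditional.
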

\begin{proof}
Since $\cP$ and $\cQ$ are coherently isomorphic, there is some $\gamma=(\digamma,((\sigma_0,\ldots,\sigma_{p-1}),\epsilon))\in\Grp$ such that $\gamma(\cP)=\cQ$.
Since application of the permutation $\gamma$ of $\indexset$ does not change the size of classes, we know that $\cP$ is even if and only if $\cQ$ is even.
Also by \cref{Abraham}, for every $e\in[p]$ we have that $\cQ_{\{\epsilon(e)\}}$ is an even partition if and only if $\cP_{\{e\}}$ is an even partition.
Since $\epsilon$ permutes $[p]$, this shows that $\cP$ is GELO if and only if $\cQ$ is GELO.
\end{proof}
Our group action also preserves satisfiability.
\begin{lemma}\label{Eric}
Let $p,\ell\in\N$ and $\cP,\cQ\in\Coh(p)$ with $\cP\Ccong\cQ$.
Then $\card{\CAs(\cP,=,\ell)}=\card{\CAs(\cQ,=,\ell)}$ and $\cP\in\CSat(p)$ if and only if $\cQ\in\CSat(p)$.
\end{lemma}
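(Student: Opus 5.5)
We will exhibit an explicit bijection between $\CAs(\cP,=,\ell)$ and $\CAs(\cQ,=,\ell)$ supplied by \cref{Doreen}. Since $\cP\Ccong\cQ$, choose $\gamma\in\Grp$ with $\gamma(\cP)=\cQ$. We apply part \ref{Zachary} of \cref{Doreen} with $\pi=\gamma$ and with the coherent partition $\cP$. It gives
\[
\gamma^*(\CAs(\cQ,=,\ell))=\gamma^*(\CAs(\gamma(\cP),=,\ell))=\CAs(\cP,=,\ell).
\]
Since $\gamma^*$ is a permutation of $\As([p])$, with inverse $(\gamma^{-1})^*$, it is in particular injective. So its restriction to $\CAs(\cQ,=,\ell)$ is a bijection onto $\CAs(\cP,=,\ell)$, which proves $\card{\CAs(\cP,=,\ell)}=\card{\CAs(\cQ,=,\ell)}$.

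For the satisfiability statement we argue in the same way with part \ref{Yolanda} of \cref{Doreen}, which gives $\gamma^*(\CAs(\cQ,=))=\CAs(\cP,=)$. Because $\gamma^*$ is a bijection, $\CAs(\cP,=)$ is nonempty if and only if $\CAs(\cQ,=)$ is nonempty. By definition this says exactly that $\cP\in\CSat(p)$ if and only if $\cQ\in\CSat(p)$.

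Alternatively, one can deduce satisfiability from the first claim. Coherent satisfiability is equivalent to $\CAs(\cdot,=,\ell)\neq\emptyset$ for some $\ell$, and the cardinalities agree for every $\ell$.

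There is essentially no obstacle here. The only point needing care is the direction of the maps: $\gamma^*$ sends assignments that coherently induce $\gamma(\cP)$ to assignments that coherently induce $\cP$, not the other way. With that orientation, \cref{Doreen} is used exactly as stated.
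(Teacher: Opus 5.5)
Your proof is correct and matches the paper's: you choose $\gamma$ with $\gamma(\cP)=\cQ$, apply \cref{Doreen}\ref{Zachary}, and use that $\gamma^*$ is a permutation of $\As([p])$ to get equal cardinalities. Your ``alternative'' satisfiability argument (the cardinalities agree for every $\ell$) is exactly the paper's, and your primary route through part \ref{Yolanda} is an equally valid minor variant.
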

\begin{proof}
Since $\cP\Ccong\cQ$, there is some $\gamma\in\Grp$ such that $\gamma(\cP)=\cQ$.
Thus, by \cref{Doreen}\ref{Zachary}, we know that $\gamma^*(\CAs(\gamma(\cP),=,\ell))=\CAs(\cP,=,\ell)$, and since $\gamma^*$ is a permutation of $\As([p])$, it follows that $\card{\CAs(\cQ),=,\ell)}=\card{\CAs(\cP,=,\ell)}$.  Furthermore, $\cP\in\CSat(p)$ if and only if $\card{\CAs(\cP,=,\ell)}>0$ for some $\ell \in \N$, and this is true if and only if $\card{\CAs(\cQ,=,\ell)}>0$ for some $\ell\in\N$, i.e., if and only if $\cQ\in\CSat(p)$.
\end{proof}
Now we can conclude that our group action preserves coherent contributoriness.
\begin{corollary}\label{Antelope}
Let $p\in\N$ and $\cP,\cQ\in\Part(p)$ with $\cP\Ccong\cQ$.
Then $\cP\in\CCon(p)$ if and only if $\cQ\in\CCon(p)$.
Furthermore, $\card{\CAs(\cP,=,\ell)}=\card{\CAs(\cQ,=,\ell)}$ for every $\ell\in\N$.
\end{corollary}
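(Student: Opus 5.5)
This corollary follows by assembling the three preservation results just proved, with one point of care. The lemma on coherence and \cref{Beatrice} are stated for arbitrary partitions in $\Part(p)$. By contrast, \cref{Eric} assumes that both partitions are already coherent, since $\CAs(\cdot)$ and $\CSat(p)$ are only defined for coherent partitions. So the plan is to first reduce to the coherent case, and then invoke \cref{Eric}.

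We fix $\gamma\in\Grp$ with $\gamma(\cP)=\cQ$. By the symmetry of $\Ccong$ (use $\gamma^{-1}$), it suffices to show that $\cP\in\CCon(p)$ implies $\cQ\in\CCon(p)$. Suppose $\cP\in\CCon(p)$; in particular, $\cP\in\Coh(p)$.
\begin{enumerate}[label=(\arabic*)]
\item The coherence lemma gives $\cQ\in\Coh(p)$.
\item \cref{Beatrice} gives $\cQ\in\GELO(p)$.
\item Since both $\cP$ and $\cQ$ now lie in $\Coh(p)$, \cref{Eric} applies and yields $\cQ\in\CSat(p)$.
\end{enumerate}
Hence $\cQ\in\CSat(p)\cap\GELO(p)=\CCon(p)$.

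For the equality of cardinalities for every $\ell\in\N$, we split into two cases.
\begin{itemize}
\item If $\cP$ is coherent, then $\cQ$ is coherent by the coherence lemma, and \cref{Eric} gives $\card{\CAs(\cP,=,\ell)}=\card{\CAs(\cQ,=,\ell)}$ directly.
\item If $\cP$ is not coherent, then neither is $\cQ$. In that case the sets $\CAs(\cdot,=,\ell)$ are only meaningful under the reading that no assignment coherently induces a noncoherent partition, since every partition coherently induced by an assignment lies in $\Coh(p)$. Under that reading both sets are empty and the equality holds trivially.
\end{itemize}

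The main (and really only) obstacle is this bookkeeping about where $\CAs$ is defined. Everything substantive was already done in \cref{Doreen}\ref{Zachary} and \cref{Eric}.
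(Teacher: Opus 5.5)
Your proof is correct and takes the same route as the paper, whose proof simply uses $\CCon(p)=\GELO(p)\cap\CSat(p)$ and combines \cref{Beatrice} with \cref{Eric}. Your explicit use of the coherence-preservation lemma is needed, because \cref{Eric} assumes $\cP,\cQ\in\Coh(p)$; so is your observation that the cardinality claim only has content for coherent partitions, since \cref{Colette} defines $\CAs$ only for those. The paper's one-line proof leaves both points implicit.
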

\begin{proof}
Since $\CCon(p)=\GELO(p)\cap\CSat(p)$, we combine \cref{Beatrice} and \cref{Eric}.
\end{proof}
Since our group action preserves coherent contributoriness, the set of coherently contributory partitions $\CCon(p)$ is a union of orbits under this action.
It will be helpful to name the set of all such orbits.
\begin{definition}[$\CIsom(p)$] 
Let $p\in\N$. We define $\CIsom(p)$ to be the set of coherent isomorphism classes of partitions in $\CCon(p)$.
\end{definition}
\begin{example}
Consider the partition $\cP$ from \cref{Electra}, which we showed to be in $\CCon(2)$ in \cref{Ajax}.
In \cref{Penelope} we found that the only other partition to which $\cP$ is coherently isomorphic is the partition $\cQ$ from \cref{Odysseus}.
\cref{Antelope} shows that $\cQ$ is also in $\CCon(2)$, and so $\{\cP,\cQ\}$ is an isomorphism class of partitions in $\CCon(2)$, i.e., $\{\cP,\cQ\} \in \CIsom(2)$.
\end{example}
Since \cref{Eric} shows that isomorphic partitions have the same count of assignments, we can attach the common value of that count to each isomorphism class.
\begin{definition}[$\CSols(\fP, \ell)$]\label{Baluga}
Let $p,\ell\in\N$. If $\fP\subseteq\Coh(p)$ such that all the partitions in $\fP$ are coherently isomorphic to each other, then we define $\CSols(\fP,\ell)$ to be the common value of $\card{\CAs(\cP,=,\ell)}$ for $\cP \in \fP$.  When we fix $\fP$ and regard $\CSols(\fP,\ell)$ as a function of $\ell$, we call it the {\it solution count function for $\fP$}.
\end{definition}
Now we are ready to prove a version of \cref{Francis} that organizes the summation by coherent isomorphism classes.
\begin{proposition}\label{Vito}
For $p,\ell\in\N$, we have
\[
\cmom\SSCC(f,g) = \sum_{\fP \in \CIsom(p)} \card{\fP}\CSols(\fP,\ell).
\]
\end{proposition}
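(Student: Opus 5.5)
The plan is to start from \cref{Francis}, which gives
\[
\cmom\SSCC(f,g) = \sum_{\cP \in \CCon(p)} \card{\CAs(\cP,=,\ell)},
\]
and regroup this sum according to the orbits of $\Grp$ acting on $\Part(p)$. The orbits of a group action on $\Part(p)$ form a partition of $\Part(p)$. By \cref{Antelope}, a partition coherently isomorphic to a member of $\CCon(p)$ is itself in $\CCon(p)$. Hence $\CCon(p)$ is a union of whole orbits, and these orbits are by definition exactly the elements of $\CIsom(p)$. So $\CCon(p)$ is the disjoint union $\bigsqcup_{\fP\in\CIsom(p)} \fP$, and we may rewrite
\[
\sum_{\cP \in \CCon(p)} \card{\CAs(\cP,=,\ell)} = \sum_{\fP\in\CIsom(p)} \,\, \sum_{\cP\in\fP} \card{\CAs(\cP,=,\ell)}.
\]

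Next, fix $\fP\in\CIsom(p)$. Its members all lie in $\CCon(p)\subseteq\Coh(p)$ and are pairwise coherently isomorphic. Therefore \cref{Eric} (or the second assertion of \cref{Antelope}) shows that $\card{\CAs(\cP,=,\ell)}$ is the same for every $\cP\in\fP$. By \cref{Baluga}, this common value is $\CSols(\fP,\ell)$, so the inner sum equals $\card{\fP}\,\CSols(\fP,\ell)$. Substituting this gives the claimed formula.

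There is no real obstacle; the only care needed is in justifying the disjoint-union decomposition. Distinct orbits are disjoint, and \cref{Antelope} ensures that no orbit meets $\CCon(p)$ only partially. Everything else is a direct invocation of earlier results.
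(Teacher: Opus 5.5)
Your proposal is correct and follows the paper's proof: both start from \cref{Francis}, regroup the sum over $\CCon(p)$ by coherent isomorphism classes, and use \cref{Antelope} together with \cref{Baluga} to replace each inner sum by $\card{\fP}\,\CSols(\fP,\ell)$. Your explicit argument that $\CCon(p)$ is a disjoint union of whole orbits is a bit more careful than the paper, which leaves that step implicit.
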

\begin{proof}
From \cref{Francis}, we get
\begin{align*}
\cmom\SSCC(f,g) &= \sum_{\cP\in\CCon(p)} \card{\CAs(\cP,=,\ell)} \\
&= \sum_{\fP \in \CIsom(p)} \,\, \sum_{\cP\in\fP} \card{\CAs(\cP,=,\ell)}.
\end{align*}
\cref{Antelope} and \cref{Baluga} imply 
\begin{align*}
\cmom\SSCC(f,g)
& = \sum_{\fP \in \CIsom(p)} \,\, \sum_{\cP\in\fP} \CSols(\fP,\ell) \\
&= \sum_{\fP \in \CIsom(p)} \card{\fP} \CSols(\fP,\ell).\qedhere
\end{align*}
\end{proof}

\section{Finding isomorphism classes of coherently contributory partitions}\label{Theresa}

\cref{Vito} provides a way of calculating the central moments of the sum of squared magnitudes of crosscorrelation ($\SSCC$) provided that one can determine the isomorphism classes of coherently contributory partitions.
In this section, we describe a matrix-based method for finding all these isomorphism classes.
Thus, much of this section concerns how we encode partitions and their classes as matrices and vectors.
One should recall the definitions of coherent and atomized sets from \cref{Patricia}.
\begin{definition}[Display vector and matrix] Let $p \in \N$ and let $P$ be an atomized set that is a subset of $\indexset$.
The {\it display vector} of $P$ is the $p\times 1$ matrix (column vector) $v$ whose rows are indexed using the elements of $[p]$ and whose entries come from the set $\{0,1_r,1_b,-1_r,-1_b\}$, where the $0$ is considered uncolored while the elements with $r$ and $b$ subscripts are considered to be colored red or blue, respectively.  For each $e \in [p]$ our display vector $v$ has $v_e=0,1_r,1_b,-1_r,-1_b$ respectively when $P_{\{e\}}=\emptyset$, $\{(e,0,0)\}$, $\{(e,0,1)\}$, $\{(e,1,0)\}$, or $\{(e,1,1)\}$.
A {\it display matrix} of a collection $\cP$ of atomized sets in $\indexset$ is a $p\times\card{\cP}$ matrix whose columns are the display vectors of the classes of $\cP$ (arranged in any order).
\end{definition}

\cref{Alberto}\ref{Bertram} says that all classes in coherently contributory partitions are atomized, so every coherently contributory partition has a display matrix.
The correspondence between atomized sets and display vectors is bijective, but the correspondence between coherently contributory partitions and display matrices is not bijective, since one can permute the columns of a display matrix without changing which partition it corresponds to.
If we organize the display matrices into classes modulo permutation of columns, then the correspondence between these classes and the coherently contributory partitions is bijective.

Once $p$ becomes larger than $3$, it is difficult to run through all the different display matrices because they encode a great deal of information.
Our method starts with simpler matrices that lack some of the features (namely, color and signs) of display matrices.
\begin{definition}[$\mono$, monochrome vector and matrix]
The {\it monochromator}, written $\mono$, is the map that takes any matrix with entries in $\{0,1_r,1_b,-1_r,-1_b\}$ to the matrix with entries in $\{0,1,-1\}$ by replacing each entry that is $1_r$ or $1_b$ with $1$ and each entry that is $-1_r$ or $-1_b$ with $-1$.
For $p \in \N$, the {\it monochrome vector} of an atomized set $P$ in $\indexset$ is $\mono(v)$ where $v$ is the display vector of $P$.
A {\it monochrome matrix} for a collection $\cP$ of atomized sets in $\indexset$ is any $\mono(M)$ where $M$ is a display matrix of $\cP$.
\end{definition}

\begin{definition}[$\abs$, absolute vector and matrix]
The {\it absolutizer}, written $\abs$, is the map that takes any matrix with entries in $\{0,1,-1\}$ to the matrix with entries in $\{0,1\}$ by replacing each entry with its absolute value.
For $p \in \N$, the {\it absolute vector} of an atomized set $P$ in $\indexset$ is $\abs(\mono(v))$ where $v$ is the display vector of $P$.
An {\it absolute matrix} for a collection $\cP$ of atomized sets in $\indexset$ is any $\abs(\mono(M))$ where $M$ is a display matrix of $\cP$.
\end{definition}
Now we want to determine which display matrices correspond to coherently contributory partitions.
We build up to this by first examining which display matrices correspond to partitions and then move on to those that have additional properties such as coherence, GELOness, and satisfiability.
\begin{lemma}\label{Elizabeth}
Let $p,n\in\N$.
Let $M$ be a $p\times n$ matrix with entries in $\{0,1_r,1_b,-1_r,-1_b\}$.  Then $M$ is a display matrix for some partition in $\Part(p)$ if and only if $M$ satisfies the following conditions:
\begin{enumerate}[label=(\roman*)]
\item Each row of $M$ has one $1_r$, one $1_b$, one $-1_r$, one $-1_b$, and $n-4$ instances of $0$ as its entries (not necessarily in that order).
\item For each column of $M$, the sum of the absolute values (ignore color) of the entries in that column is strictly positive.
\end{enumerate}
\end{lemma}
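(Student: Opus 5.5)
The plan is to prove both directions directly from the definition of display vectors. A display matrix is defined only for a collection of atomized sets, so throughout ``display matrix for some partition in $\Part(p)$'' will mean a partition $\cP\in\Part(p)$ all of whose classes are atomized, with $n=\card{\cP}$. The main tool is the bijection between the five symbols and the positions in a row: for $e\in[p]$, the entry $v_e$ of a display vector is $1_r,1_b,-1_r,-1_b$ exactly when the set meets $\{e\}\times\bindexset$ in $(e,0,0),(e,0,1),(e,1,0),(e,1,1)$ respectively, and is $0$ when the intersection is empty.

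\emph{Forward direction.} Suppose $M$ is a display matrix of a partition $\cP$ with atomized classes and $n=\card{\cP}$.
\begin{enumerate}[label=(\alph*)]
\item Fix $e\in[p]$. Each of the four triples $(e,s,v)$ lies in exactly one class of $\cP$.
\item An atomized class contains at most one triple with first coordinate $e$, so the four triples lie in four distinct columns.
\item In those columns, row $e$ carries the four colored symbols $1_r,1_b,-1_r,-1_b$, each exactly once.
\item Every other column meets $\{e\}\times\bindexset$ in the empty set, so its entry in row $e$ is $0$. That gives $n-4$ zeros, which is condition (i).
\item Every class is nonempty, so each column has a nonzero entry, which is condition (ii).
\end{enumerate}

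\emph{Reverse direction.} Suppose $M$ satisfies (i) and (ii). For each column $j$, define $P_j\subseteq\indexset$ to be the set of triples decoded from the nonzero entries of column $j$, using the bijection above.
\begin{enumerate}[label=(\alph*)]
\item Each row contributes at most one entry to a column, so $P_j$ is atomized.
\item Condition (ii) gives $P_j\neq\emptyset$.
\item By condition (i), each triple $(e,s,v)$ is decoded from exactly one entry of $M$, namely the unique occurrence of the corresponding symbol in row $e$. Hence the $P_j$ are pairwise disjoint and cover $\indexset$.
\item Disjoint nonempty sets are distinct, so $\cP=\{P_0,\ldots,P_{n-1}\}$ is a partition in $\Part(p)$ with exactly $n$ classes.
\item The display vector of $P_j$ is column $j$ of $M$, so $M$ is a display matrix of $\cP$.
\end{enumerate}

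The only real subtlety is bookkeeping rather than an obstacle. In the reverse direction one must check that the $n$ columns give $n$ distinct classes, which disjointness plus nonemptiness ensures. One must also be explicit that the partitions in question are those with atomized classes, since otherwise display matrices are undefined. Everything else is a direct translation through the symbol–position bijection.
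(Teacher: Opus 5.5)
Your proposal is correct and follows essentially the same route as the paper. The paper also reads the columns as atomized sets and notes that condition (ii) is equivalent to these sets being nonempty, while condition (i) is equivalent to their being pairwise disjoint with union $[p]\times[2]\times[2]$. You make explicit two points the paper's short proof leaves implicit: that the partitions must have atomized classes, and that the $n$ columns yield $n$ distinct classes.
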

\begin{proof}
Let $\cP$ be the collection of atomized sets that the columns of $M$ represent.
The second condition is equivalent to saying that these classes are nonempty.
The first condition is equivalent to saying that the classes are disjoint and their union is $\indexset$.
\end{proof}

\begin{definition}[Coherent display vector, coherent display matrix]
We say that a display vector is {\it incoherent} to mean that it has at least one red entry (i.e., $1_r$ or $-1_r$) and at least one blue entry (i.e., $1_b$ or $-1_b$).
A {\it coherent} display vector is a display vector that is not incoherent.
A nonzero display vector is said to be {\it red} (resp., {\it blue}) if it is coherent and its nonzero entries are all red (resp., blue), and then red (resp., blue) is said to be the {\it color} of that vector, while the zero vector has no color.
A {\it coherent} display matrix is a display matrix whose columns are all coherent display vectors; an {\it incoherent} display matrix is one that is not coherent.
A {\it red} (resp., {\it blue}) column of a coherent display matrix $M$ is a column of that matrix that is a red (resp., blue) display vector, and the {\it color} of that column is said to be red (resp., blue).
\end{definition}
Note that coherent display vectors are precisely the display vectors that represent coherent classes, and a red coherent display vector represents a $0$-coherent class while a blue coherent display vector represents a $1$-coherent class.

\begin{lemma}\label{Bruce}
Let $p,n\in\N$.
Let $M$ be a $p\times n$ matrix with entries in $\{0,1_r,1_b,-1_r,-1_b\}$.  Then $M$ is a display matrix for some partition in $\Coh(p)$ if and only if $M$ satisfies the following conditions:
\begin{enumerate}[label=(\roman*)]
\item Each row of $M$ has one $1_r$, one $1_b$, one $-1_r$, one $-1_b$, and $n-4$ instances of $0$ as its entries (not necessarily in that order).
\item For each column of $M$, the sum of the absolute values (ignore color) of the entries in that column is strictly positive.
\item $M$ is coherent.
\end{enumerate}
\end{lemma}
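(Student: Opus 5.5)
The plan is to reduce \cref{Bruce} directly to \cref{Elizabeth} and add one observation about coherence. The first two conditions in \cref{Bruce} are exactly the two conditions of \cref{Elizabeth}. So it suffices to show the following: for a matrix $M$ that is a display matrix of some partition $\cP\in\Part(p)$, the partition $\cP$ lies in $\Coh(p)$ if and only if $M$ is a coherent display matrix.

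For the forward direction, suppose $M$ is a display matrix for some $\cP\in\Coh(p)$. Then \cref{Elizabeth} gives conditions (i) and (ii). Each column of $M$ is the display vector of a class $P\in\cP$, and $P$ is $v$-coherent for some $v\in[2]$. By the definition of the display vector, every nonzero entry of that column then has the color attached to $v$: red ($1_r$ or $-1_r$) when $v=0$, since these record elements $(e,s,0)$, and blue when $v=1$. So no column contains both a red and a blue entry, and $M$ is coherent.

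For the converse, suppose $M$ satisfies (i)--(iii). By \cref{Elizabeth}, $M$ is a display matrix for some $\cP\in\Part(p)$ whose classes are the atomized sets represented by the columns. Each column is nonzero by (ii) and is coherent by (iii), so all of its nonzero entries share one color. If that color is red, every element of the class has third coordinate $0$, so the class is $0$-coherent. If it is blue, the class is $1$-coherent by the same reasoning. Hence every class of $\cP$ is coherent and $\cP\in\Coh(p)$.

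I expect no real obstacle. The only point needing care is the bookkeeping that identifies the colors red and blue with third coordinates $0$ and $1$ through the definition of the display vector; this is just the remark after the definition of coherent display vectors, that red (resp.\ blue) coherent display vectors represent exactly the $0$-coherent (resp.\ $1$-coherent) classes.
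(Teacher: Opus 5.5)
Your proposal is correct and takes the same approach as the paper: reduce to \cref{Elizabeth} and observe that $M$ is coherent exactly when every class of $\cP$ is coherent. You simply write out the identification of red/blue entries with third coordinate $0$/$1$, which the paper's one-line proof leaves implicit.
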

\begin{proof}
The conditions here are the same as those of \cref{Elizabeth} except for the added condition that $M$ be coherent, which is true if and only if all the classes of $\cP$ are coherent.
\end{proof}
Now that we have understood which display matrices correspond to coherent partitions, we look at which matrix property corresponds to GELOness.
\begin{lemma}\label{Geraldo}
Let $p,n\in\N$.
Let $M$ be a $p\times n$ matrix with entries in $\{0,1_r,1_b,-1_r,-1_b\}$.  Then $M$ is a display matrix for some partition in $\GELO(p)$ if and only if $M$ satisfies the following conditions:
\begin{enumerate}[label=(\roman*)]
\item Each row of $M$ has one $1_r$, one $1_b$, one $-1_r$, one $-1_b$, and $n-4$ instances of $0$ as its entries (not necessarily in that order).
\item For each column of $M$, the sum of the absolute values (ignore color) of the entries in that column is strictly positive and even.
\end{enumerate}
\end{lemma}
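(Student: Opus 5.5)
The plan is to reduce to \cref{Elizabeth}, which already characterizes the matrices that are display matrices of partitions in $\Part(p)$ by conditions (i) and the positivity half of (ii). It then remains to show that, for such a matrix $M$ representing a partition $\cP$, the GELO property of $\cP$ is equivalent to every column of $M$ having an even sum of absolute values. Throughout, I would use that a display matrix only exists for collections of atomized sets. So whenever $M$ is a display matrix of some $\cP\in\Part(p)$, every class of $\cP$ is atomized.

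The first step is to identify class sizes with column sums. For an atomized class $P$ with display vector $v$, the entry $v_e$ is nonzero exactly when $P_{\{e\}}$ is a singleton, and zero when $P_{\{e\}}=\emptyset$. Hence $\card{P}$ equals the number of nonzero entries of $v$, which is the sum of the absolute values of the entries of $v$ (ignoring color). Therefore $\cP$ is an even partition if and only if every column of $M$ has an even absolute-value sum.

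The second step is to observe that local oddness is automatic for partitions with display matrices. For each $e\in[p]$, every class of $\cP$ is atomized, so each class meets $\{e\}\times\bindexset$ in at most one point. Thus $\cP_{\{e\}}$ consists of the four singletons $\{(e,s,v)\}$, and this partition is not even. Consequently such a $\cP$ is GELO exactly when it is even.

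Combining the two steps gives both directions. If $M$ is a display matrix of some $\cP\in\GELO(p)$, then \cref{Elizabeth} gives (i) and positivity, and evenness of $\cP$ gives even column sums. Conversely, if $M$ satisfies (i) and (ii), then \cref{Elizabeth} makes $M$ a display matrix of some $\cP\in\Part(p)$, and $\cP$ is even by the first step and locally odd by the second, so $\cP\in\GELO(p)$. I do not expect a real obstacle here. The only point needing care is to state explicitly that the atomized structure forced by the existence of a display matrix is what makes local oddness free.
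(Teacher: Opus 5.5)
Your proposal is correct and follows the same route as the paper. You reduce to \cref{Elizabeth}, identify each column's absolute-value sum with the size of its atomized class so that the parity condition is exactly evenness of $\cP$, and note that atomized classes force every $\cP_{\{e\}}$ to be four singletons, making local oddness automatic; the paper states these two facts in one line each, and you simply spell them out.
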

\begin{proof}
The conditions here are the sames as those of \cref{Elizabeth} except for the added parity condition.
So if $\cP$ is the collection of atomized sets of $\indexset$ represented by the columns of $M$, then the conditions here are equivalent to saying that $\cP$ is an even partition.
But a partition consisting of atomized classes is GELO if and only if it is even.
\end{proof}
Now we want to characterize which display matrices correspond to coherently satisfiable partitions.
This requires us to build some further linear-algebraic machinery.
\begin{definition}[colored vectors, weights, and distances]
A {\it colored} vector is a vector whose entries are in $\C$ and each entry is given the color red or blue (but not both).
Two colored vectors $v=(v_1,\ldots,v_m)$ and $w=(w_1,\ldots,w_n)$ are said to have the {\it same color scheme} if and only if $m=n$ and $v_j$ has the same color as $w_j$ for all $j \in \{1,\ldots,m\}$.
The sum or difference of two vectors with the same color scheme is endowed with the same color scheme as the two vectors being combined.
The {\it colored weight} of a vector $v$ is the ordered pair $(a,b)$ where $a$ is the number of nonzero red entries in $v$ and $b$ is the number of nonzero blue entries in $v$.
The {\it colored distance} between two colored vectors $v$ and $w$ having the same color scheme is equal to the colored weight of $v-w$.
A {\it properly colored vector} is a colored vector such that no two entries have both the same color and the same value.
\end{definition}
We can characterize coherent satisfiability of partitions via certain homogeneous systems of equations related to the matrices that represent the partitions.
\begin{lemma}\label{Timothy}
Let $p,\ell\in\N$, let $\cP \in \Coh(p)$, and let $M$ be a display matrix for $\cP$.
Index the rows of $M$ with the elements of $[p]$ and index the columns of $M$ with the classes of $\cP$, where class $P$ indexes the column that is the display vector of $P$.
We write each $\card{\cP}$-tuple $a=(a_P)_{P \in \cP}$ in $[\ell]^{\cP}$ as a column vector whose entries are indexed by the classes of $\cP$ arranged in the same order that they are for the columns of $M$.
Consider all such $\card{\cP}$-tuples to be colored so that for each $P\in\cP$, the color of entry $a_P$ is the same as the color of the column of $M$ indexed by $P$.
Then $\card{\CAs(\cP,=,\ell)}$ is equal to the number of such properly colored $\card{\cP}$-tuples $a=(a_P)_{P \in \cP}$ in $[\ell]^{\cP}$ such $M a=0$ (ignore colors of entries of $M$ and $a$ when computing $M a$).
Therefore, $\cP$ is coherently satisfiable if and only if there is there some properly colored $\card{\cP}$-tuple $a=(a_P)_{P \in \cP}$ in $\N^{\cP}$ such that $M a=0$ and the color of entry $a_P$ is the same as the color of the column of $M$ indexed by $P$ for every $P \in \cP$.
\end{lemma}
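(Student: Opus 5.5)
The plan is to set up an explicit bijection between $\CAs(\cP,=,\ell)$ and the set of properly colored tuples $a\in[\ell]^{\cP}$ with $Ma=0$. Given $\tau\in\CAs(\cP,=,\ell)$, each class $P\in\cP$ has a well-defined value: coherent induction forces $\tau$ to be constant on $P$. We define $a_P$ to be that common value, so $a_P\in[\ell]$. We then check three things.

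\begin{enumerate}[label=(\roman*)]
\item \emph{Proper coloring.} Suppose $P\neq Q$ have the same color, say both are $v$-coherent. Pick $(e,s,v)\in P$ and $(e',s',v)\in Q$. Since these are not congruent modulo $\cP$ but share the third coordinate, coherent induction gives $\tau_{e,s,v}\neq\tau_{e',s',v}$, that is, $a_P\neq a_Q$.
\item \emph{The equation $Ma=0$.} Row $e$ of $M$ has entry $+1$ in the columns of the classes containing $(e,0,0)$ and $(e,0,1)$, and entry $-1$ in the columns of the classes containing $(e,1,0)$ and $(e,1,1)$. Hence $(Ma)_e=\tau_{e00}+\tau_{e01}-\tau_{e10}-\tau_{e11}$, which vanishes because $\tau\in\As([p],=)$.
\item \emph{Well-definedness of the rows.} If two of the four entries $(e,s,v)$ lie in the same class, the row of $M$ would not be a display row, so each row has exactly the four entries described above. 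This is guaranteed because $M$ is a display matrix, which presupposes that every class of $\cP$ is atomized.
\end{enumerate}

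Conversely, given a properly colored $a\in[\ell]^{\cP}$ with $Ma=0$, define $\tau_\alpha=a_P$ for the unique $P\ni\alpha$. Clearly $\tau\in\As([p],\ell)$, and the row computation in (ii), read backwards, gives $\tau\in\As([p],=)$. To see $\tau\in\CAs(\cP)$, take $\alpha,\beta$ with the same third coordinate $v$. If $\alpha\equiv\beta\pmod{\cP}$, then $\tau_\alpha=\tau_\beta$ by construction. If not, they lie in distinct classes $P,Q$, both $v$-coherent and hence of the same color, so proper coloring gives $a_P\neq a_Q$. The two maps are mutually inverse: $\tau$ is determined by its class values, and $a$ is read back from $\tau$. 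This bijection gives the count.

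For the final sentence, $\cP$ is coherently satisfiable if and only if $\CAs(\cP,=,\ell)\neq\emptyset$ for some $\ell$. By the count just proved, this holds if and only if some properly colored $a\in[\ell]^{\cP}$ with $Ma=0$ exists for some $\ell$. Since every $a\in\N^{\cP}$ lies in $[\ell]^{\cP}$ for $\ell$ large, this is equivalent to the existence of such an $a\in\N^{\cP}$.

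The only delicate point is bookkeeping. The coloring convention (red for $v=0$, blue for $v=1$) must match coherence exactly, and the existence of $M$ must be seen to imply that classes are atomized. Otherwise row $e$ could involve a class twice with entries $\pm1$ summing incorrectly, and entries of a display vector are only defined for atomized sets. Since $M$ is assumed to be a display matrix, atomization is built into the hypothesis, so no real obstacle is expected.
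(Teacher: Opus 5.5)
Your proposal is correct and follows the paper's proof essentially verbatim. You use the same pair of mutually inverse maps (reading off the common class values of $\tau$, and extending $a$ constantly on classes) and the same verifications: proper coloring, the row identity $(Ma)_e=\tau_{e00}+\tau_{e01}-\tau_{e10}-\tau_{e11}$, and coherent induction. Your remark that any $a\in\N^{\cP}$ lies in $[\ell]^{\cP}$ for large $\ell$ simply spells out the step the paper calls immediate.
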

\begin{proof}
Note that every column of $M$ is either red or blue since no column is a zero vector by \cref{Bruce}.
Let $A$ be the set of all properly colored $a \in [\ell]^{\cP}$ such that $M a=0$ and where the color of the $P$th entry of $a$ is the same as the color of the $P$th column of $M$ for every $P \in \cP$.
We write both $a_P$ (as in the statement) and $a(P)$ to mean the $P$th coordinate of $a$.
The proper coloring condition forces distinct red entries of $a$ to have different values, which is equivalent to saying that $a_P\not=a_Q$ whenever $P$ and $Q$ are distinct $0$-coherent classes.
Likewise, the proper coloring condition forces distinct blue entries of $a$ to have different values, which is equivalent to saying that $a_P\not=a_Q$ whenever $P$ and $Q$ are distinct $1$-coherent classes.
Throughout this proof, we let $\phi\colon \indexset \to \cP$ be the quotient map for the partition $\cP$, i.e., $\phi(e,s,v)$ is equal to the class $P \in \cP$ such that $(e,s,v) \in P$.

We want to define a map $\Phi\colon A \to \CAs(\cP,=,\ell)$ where $\Phi(a) \in \As([p])$ such that $(\Phi(a))(e,s,v)=a(\phi(e,s,v))$ for each $a \in A$ and each $(e,s,v) \in \indexset$, so we need to check that our $\Phi(a)$ really is in $\CAs(\cP,=,\ell)$.
Clearly, our map $\Phi(a)\in\As([p],\ell)$ because $a(\phi(e,s,v)) \in [\ell]$ for every $(e,s,v) \in \indexset$.
By the way display vectors and matrices are defined, the $e$th row of the display matrix $M$ for $\cP$ has a $1_r$, a $1_b$, a $-1_r$, and a $-1_b$ in the columns indexed by $\phi(e,0,0)$, $\phi(e,0,1)$, $\phi(e,1,0)$, and $\phi(e,1,1)$, respectively, and all the rest of the entries of this row are zeros.
Thus, the $e$th entry of $M a$ is equal to
\[
a(\phi(e,0,0)) +a(\phi(e,0,1))-a(\phi(e,1,0))-a(\phi(e,1,1)),
\]
which equals $0$ since $M a=0$.
Thus, by the way we defined $\Phi$, we have
\[
(\Phi(a))(e,0,0)+(\Phi(a))(e,0,1)-(\Phi(a))(e,1,0)-(\Phi(a))(e,1,0) = 0,
\]
and since this is true for all $e \in [p]$, we conclude that $\Phi(a) \in \As([p],=)$.
We let $\cQ$ be the partition in $\Coh(p)$ that is coherently induced by $\Phi(a)$.
For any $v \in [2]$ and any $(e,s,v),(e',s',v) \in \indexset$, we have $(e,s,v) \equiv (e',s',v) \pmod{\cQ}$ if and only if $(\Phi(a))(e,s,v)=(\Phi(a))(e',s',v)$, that is, if and only if $a(\phi(e,s,v))=a(\phi(e',s',v))$.
Since $\phi(e,s,v)$ and $\phi(e',s',v)$ are both $v$-coherent classes of the coherent partition $\cP$, the columns of $M$ that correspond to these two classes must be the same color, and therefore $a(\phi(e,s,v))=a(\phi(e',s',v))$ if and only if $\phi(e,s,v)=\phi(e',s',v)$, i.e., if and only if $(e,s,v) \equiv (e',s',v) \pmod{\cP}$.  So we see that both $\cP$ and $\cQ$ are coherent partitions and $(e,s,v) \equiv (e',s',v) \pmod{\cQ}$ if and only if $(e,s,v)\equiv(e',s',v) \pmod{\cP}$ for every $v \in [2]$ and $(e,s,v),(e',s'v) \in \indexset$.  Thus, $\cQ=\cP$, and then we can conclude that $\Phi(a)$ coherently induces $\cP$, i.e., $\Phi(a) \in \CAs(\cP)$.
Putting together what we have shown, we obtain $\Phi(a) \in \As([p],\ell) \cap \As([p],=) \cap \CAs(\cP)=\CAs(\cP,=,\ell)$.
So our map $\Phi\colon A \to \CAs(\cP,=,\ell)$ is indeed defined.

We want to define a map $\Psi\colon\CAs(\cP,=,\ell) \to A$ where $\Psi(\tau) \in [\ell]^{\cP}$ such that for each $P \in \cP$ we define $\Psi(\tau)(P)$ to be the common value of $\tau_{e,s,v}$ for every $(e,s,v) \in P$ (there is such a common value because $\tau$ coherently induces $\cP$ since $\tau\in\CAs(\cP,=,\ell)$) and we color $\Psi(\tau)(P)$ with the color of the column of the matrix $M$ that is indexed by $P$.
We can be sure that all the entries of the vector $\Psi(\tau)$ lie in $[\ell]$ because $\tau_{e,s,v} \in [\ell]$ for all $(e,s,v)\in\indexset$ because $\tau \in \CAs(\cP,=,\ell)$.
We also note that we have colored the elements of $\Psi(\tau)$ with the correct color scheme for them to be in $A$.
We write both $\Psi(\tau)_P$ (as in the statement) and $(\Psi(\tau))(P)$ to mean the $P$th coordinate of $\Psi(\tau)$.
Now we need to check that $M\Psi(\tau)=0$ for all $\tau\in\CAs(\cP,=,\ell)$.
For $e \in [p]$, the $e$th entry of $M\Psi(\tau)$ is the product of the $e$th row of $M$ with the column vector $\Psi(\tau)$.
By the way display vectors and matrices are defined, the $e$th row of the display matrix $M$ for $\cP$ has a $1_r$, a $1_b$, a $-1_r$, and a $-1_b$ in the columns indexed by $\phi(e,0,0)$, $\phi(e,0,1)$, $\phi(e,1,0)$, and $\phi(e,1,1)$, respectively, and all the rest of the entries of this row are zeros.
Thus, the $e$th entry of $M \Psi(\tau)$ is equal to
\[
(\Psi(\tau))(\phi(e,0,0)) +(\Psi(\tau))(\phi(e,0,1))-(\Psi(\tau))(\phi(e,1,0))-(\Psi(\tau))(\phi(e,1,1)),
\]
and since it is immediate that $(e,s,v)\in\phi(e,s,v)$ for all $s,v \in [2]$ by the definition of $\phi$, they way we defined $\Psi(\tau)$ makes the $e$th entry of $M\Psi(\tau)$ equal to
\[
\tau_{e,0,0} +\tau_{e,0,1}-\tau_{e,10}-\tau_{e,1,1},
\]
which is $0$ because $\tau\in\CAs(\cP,=,\ell)$.
So we have proved that every entry of $M\Psi(\tau)$ is $0$, and so $M\Psi(\tau)=0$.
Now let $P,Q \in \cP$ be distinct classes such that $\Psi(\tau)_P$ and $\Psi(\tau)_Q$ have the same color.
This means that the columns of $M$ corresponding to $P$ and $Q$ have the same color.
So there is some $v \in [2]$ such that both $P$ and $Q$ are $v$-coherent.
Choose some $(e,s,v) \in P$ and $(e',s',v) \in Q$.
Since $\tau$ coherently induces $\cP$ and $P$ and $Q$ are distinct $v$-coherent classes of $\cP$, we conclude that $\tau_{e,s,v} \not=\tau_{e',s',v}$, and so $(\Psi(\tau))_P\not=(\Psi(\tau))_Q$.
Thus, $\Psi(\tau)$ is properly colored.
All in all, we have shown that $\Psi(\tau) \in A$, and so our map $\Psi\colon\CAs(\cP,=,\ell) \to A$ is defined.

Now suppose that $a \in A$, and notice that for $P \in \cP$, the $P$th entry of $\Psi(\Phi(a))$ is equal to the common value of $(\Phi(a))_{e,s,v}$ for all $(e,s,v) \in P$, which in turn is equal to $a(\phi(e,s,v))=a(P)$.
So we conclude that $\Psi(\Phi(a))=a$ for all $a \in A$.
On the other hand, if we let $\tau\in\CAs(\cP,=,\ell)$ and $(e,s,v) \in \indexset$, then $(\Phi(\Psi(\tau))_{e,s,v}=\Psi(\tau)(\phi(e,s,v))$, and since $(e,s,v)\in\phi(e,s,v)$ by the definition of $\phi$, we see that $(\Phi(\Psi(\tau))_{e,s,v}=\tau_{e,s,v}$.
So we conclude that $\Phi(\Psi(\tau))=\tau$ for all $\tau\in\CAs(\cP,=,\ell)$.
Thus $\Phi$ and $\Psi$ are inverses of each other, and so both maps are bijective, and so $\card{A}=\card{\CAs(\cP,=,\ell)}$.

The claim about coherent satisfiability of the lemma follows immediately from the earlier claim.
\end{proof}
The following result leads to \cref{Oswald}, which provides a simple criterion for the existence of solutions that verify satisfiability in \cref{Timothy}.
\begin{lemma}\label{Henrietta}
Suppose that $F$ is a subfield of $\C$ and that $M$ is an $m\times n$ matrix with entries in $F$ whose row sums are all zero.
Suppose that each column of $M$ has a color, either red or blue.
Let $M'$ be the reduced row echelon form of $M$, and let the $j$th column of $M'$ have the same color as the $j$th column of $M$ for each $j \in \{1,2,\ldots,n\}$.
Each entry of $M'$ is given the color of its column, so that the rows of $M'$ are colored vectors that all share the same color scheme.
Every solution $x$ of the homogeneous linear system $M x=0$ is given the same color scheme as the rows of $M'$.
Then $M x=0$ has a properly colored solution whose entries are in $F$ if and only if $M'$ meets the following conditions.
\begin{enumerate}[label=(\roman*)]
\item No row of $M'$ has colored weight $(2,0)$.
\item No row of $M'$ has colored weight $(0,2)$.
\item No pair of rows of $M'$ have colored distance $(2,0)$.
\item No pair of rows of $M'$ have colored distance $(0,2)$.
\end{enumerate}
\end{lemma}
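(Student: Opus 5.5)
The plan is to reduce the existence of a properly colored solution to a finite list of linear conditions, and then read those conditions off the reduced row echelon form. Let $W\subseteq F^n$ be the solution space of $Mx=0$. Since $M'$ is obtained from $M$ by invertible row operations, $W$ is also the solution space of $M'x=0$. Hence $W$ is the annihilator of the row space $R$ of $M'$, and $R$ is also the row space of $M$.

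\emph{Step 1: reduce to single pairs.} A solution $x$ is properly colored exactly when $x_i\neq x_j$ for every pair $i\neq j$ of same-colored coordinates. For each such pair, $H_{ij}=\{x\in W: x_i=x_j\}$ is a subspace of $W$. A properly colored solution exists if and only if $W\neq\bigcup H_{ij}$. Because $F\subseteq\C$ is infinite, a vector space over $F$ is never a finite union of proper subspaces. Therefore a properly colored solution exists if and only if $H_{ij}\neq W$ for every same-colored pair $\{i,j\}$. By duality ($W$ is the annihilator of $R$, so $R$ is the annihilator of $W$), the equality $H_{ij}=W$ holds exactly when $e_i-e_j\in R$. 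So the statement becomes: there are same-colored $i\neq j$ with $e_i-e_j\in R$ if and only if one of conditions (i)--(iv) fails.

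\emph{Step 2: the easy direction.} Every row of $M$ sums to zero, so every vector of $R$ has zero coordinate sum. Suppose some row of $M'$ has colored weight $(2,0)$ or $(0,2)$, or some pair of rows has colored distance $(2,0)$ or $(0,2)$. Then $R$ contains a vector supported on exactly two same-colored coordinates $i,j$, with zero sum. That vector is $c(e_i-e_j)$ with $c\neq0$, so $e_i-e_j\in R$.

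\emph{Step 3: the converse, which I expect to be the main point.} Here I will use the basic RREF fact that any $r\in R$ equals $\sum_k r_{p_k}\rho_k$. In this sum, $\rho_k$ runs over the nonzero rows of $M'$ and $p_k$ is the pivot column of $\rho_k$. Apply this to $r=e_i-e_j$. If neither $i$ nor $j$ is a pivot column, the expansion gives $r=0$, which is absurd. If exactly one of them, say $i$, is the pivot of $\rho$, then $e_i-e_j=\rho$, so that row has colored weight $(2,0)$ or $(0,2)$ according to the common color. If $i$ and $j$ are the pivots of $\rho$ and $\rho'$, then $e_i-e_j=\rho-\rho'$, and that pair of rows has colored distance $(2,0)$ or $(0,2)$. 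Combining this with Steps 1 and 2 proves the equivalence.
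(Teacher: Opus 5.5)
Your proof is correct, and it takes a genuinely different route from the paper's for the sufficiency direction. Your necessity argument (Step 2) matches the paper's.

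The paper proves sufficiency by an explicit construction:
\begin{itemize}
\item it assigns the free variables the values $q, q^2, \ldots, q^k$ for a sufficiently large rational $q$ and solves for the pivot variables;
\item it then uses a dominance estimate to show that every difference of two same-colored coordinates cannot vanish, since such a difference is a combination $\sum_i d_i q^i$ with a nonzero top coefficient.
\end{itemize}
The key step, that each such difference is a \emph{nonzero} linear form in the free variables, is only asserted there to follow from conditions (i)--(iv).

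You replace the explicit construction with the fact that a vector space over an infinite field is not a finite union of proper subspaces. You replace the asserted step with a clean duality argument. A difference $x_i-x_j$ vanishes identically on the solution space exactly when $e_i-e_j$ lies in the row space. The expansion $r=\sum_k r_{p_k}\rho_k$ in the basis of nonzero RREF rows then forces $e_i-e_j$ to be a single row of $M'$ or a difference of two rows. Your Step 3 is precisely the missing justification of the paper's claim, written in dual form.

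What your route buys:
\begin{itemize}
\item It makes the meaning of (i)--(iv) transparent: they say exactly that no $e_i-e_j$ with $i,j$ same-colored lies in the row space.
\item It shows that the zero-row-sum hypothesis is used only for necessity.
\end{itemize}
What the paper's route buys is an explicit properly colored solution with controlled form. That is mildly more concrete, although the finite-union lemma can itself be proved constructively by essentially the same powers-of-$q$ trick.
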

\begin{proof}
Assume that $M'$ meets the four conditions enumerated in the statement of this lemma.
Since the row sums in $M'$ are zero (because the row sums in $M$ are zero), no difference between two rows of $M'$ can have colored weight equal to $(1,0)$, $(0,1)$, $(2,0)$, $(0,2)$.
Let $E$ be the union of $\{0,1,-1\}$ and the set of all entries of $M'$ and let $D$ be the set of all pairwise differences of elements in $E$ (including differences of the form $d-d=0$).
Let $s=\min_{d \in D \smallsetminus\{0\}} |d|$ and $\ell=\max_{d \in D} |d|$.
Solutions $x$ of $M x=0$ are identical to solutions of $M x'=0$, and the row reduced echelon form $M'$ gives us a way of writing each pivot variable as a linear combination of the free variables for solutions where the coefficients in the linear combination are opposites of elements of $E$.
Therefore, the difference between any two distinct variables in the solution set is a linear combination of free variables with the coefficients in the linear combination being elements of $D$.
Furthermore, our conditions on the colored weights of rows and differences of rows guarantees that if the columns corresponding to the two variables are of the same color, then their difference is a linear combination that has at least one nonzero coefficient.
Let $q$ be a rational number (which lies in $F$ since $\Q$ is $F$'s prime subfield) with $q > (c+1) \ell /s$ where $c$ is the number of columns in $M'$; this makes $q > 1$.
Let $a$ be a colored vector with the same scheme as the rows of $M'$, and for coordinates in this vector that correspond to free variables, assign values of $q, q^2, \ldots, q^k$, where $k$ is the number of free variables.
Let the coordinates of $a$ corresponding to pivot variables take the values they must take so that $M' a=0$ (and so $M a=0$).
The entries of $a$ are in $F$ because the free variables are assigned powers of $q \in \Q \subseteq F$, and the pivot variables are assigned $F$-linear combinations of the powers of $q$ (since $M'$ has entries in $F$).
The difference between any two coordinates of $a$ having the same color is a sum of the form $d=\sum_{i=1}^j d_i q^i$ with $1 \leq j \leq k$ and $d_1,\ldots,d_j \in D$ and $d_j\not=0$.
And then $|d_j q^j| \geq s q^j > (c+1)\ell q^{j-1} > k \ell q^{j-1} \geq j \ell q^{j-1} > \sum_{i=1}^{j-1} |d_i q^i| \geq \left|\sum_{i=1}^{j-1} d_i q^i\right|$, and so $d\not=0$.
So we have shown that $M x=0$ has a properly colored solution $a$ whose entries are in $F$.

If any of the four conditions on $M'$ listed in the statement of this lemma does not hold, then there is some vector $v$ in the row space of $M'$ whose entries sum to zero and whose colored weight is $(2,0)$ or $(0,2)$.
So there is some nonzero element $e$ of $F$ such that $v$ has one $e$, one $-e$ and all other entries equal to $0$, and the $e$ and $-e$ have the same color.
But then this forces any solution of $M' x=0$ (hence any solution of $M x=0$) to not be properly colored, since it requires two different coordinates with the same color to be assigned the same value.
\end{proof}

\begin{corollary}\label{Oswald}
Suppose that $M$ is an $m\times n$ matrix with entries in $\Z$ whose row sums are all zero.
Suppose that each column of $M$ has a color, either red or blue.
Let $M'$ be the reduced row echelon form of $M$, and let the $j$th column of $M'$ have the same color as the $j$th column of $M$ for each $j \in \{1,2,\ldots,n\}$.
Each entry of $M'$ is given the color of its column, so that the rows of $M'$ are colored vectors that all share the same color scheme.
Every solution $x$ of the homogeneous linear system $M x=0$ is given the same color scheme as the rows of $M'$.
Then $M x=0$ has a properly colored solution whose entries are in $\N$ if and only if $M'$ meets the following conditions.
\begin{enumerate}[label=(\roman*)]
\item No row of $M'$ has colored weight $(2,0)$.
\item No row of $M'$ has colored weight $(0,2)$.
\item No pair of rows of $M'$ have colored distance $(2,0)$.
\item No pair of rows of $M'$ have colored distance $(0,2)$.
\end{enumerate}
\end{corollary}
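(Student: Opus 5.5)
We deduce \cref{Oswald} from \cref{Henrietta} with $F=\Q$, and then pass from rational to nonnegative integer solutions.

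Since $M$ has integer entries, its reduced row echelon form $M'$ over $\Q$ is the same whether $M$ is viewed as a matrix over $\Q$ or over $\C$. Thus the four conditions on $M'$ in both statements refer to the same object.

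\textbf{Necessity.} Suppose $Mx=0$ has a properly colored solution with entries in $\N$. Since $\N\subseteq\Q$, this is a properly colored solution with entries in $\Q$. By \cref{Henrietta} with $F=\Q$, the matrix $M'$ meets conditions (i)--(iv).

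\textbf{Sufficiency.} Suppose $M'$ meets the four conditions. By \cref{Henrietta}, there is a properly colored solution $a\in\Q^n$ of $Ma=0$. We adjust $a$ in two steps, each preserving both the equation and proper coloring.

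\emph{Clearing denominators.} Let $D$ be a common denominator of the entries of $a$, and set $a'=Da$. Then $Ma'=0$ by linearity. The map $x\mapsto Dx$ is injective because $D\neq 0$, so entries with the same color that were distinct remain distinct. Hence $a'\in\Z^n$ is a properly colored solution.

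\emph{Shifting to nonnegative values.} Let $\mathbf 1$ be the all-ones vector. Because every row sum of $M$ is zero, $M\mathbf 1=0$, so $M(a'+t\mathbf 1)=0$ for every integer $t$. Adding the same constant $t$ to every coordinate preserves all pairwise differences, so proper coloring is preserved. Choose $t=-\min_j a'_j$. Then $a'+t\mathbf 1$ has all entries in $\N$ and is a properly colored solution of $Mx=0$.

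This establishes the equivalence. There is no serious obstacle here. The only point needing care is that the shift step uses the hypothesis that the row sums of $M$ vanish; that same hypothesis is what \cref{Henrietta} requires, so it is available.
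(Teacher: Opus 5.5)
Your proof is correct and follows the paper's argument essentially verbatim. Both apply \cref{Henrietta} with $F=\Q$, then clear denominators by scaling, and then shift by an integer multiple of the all-ones vector (available because the row sums vanish) to land in $\N$. You also spell out two points the paper leaves implicit: the trivial necessity direction via $\N\subseteq\Q$, and that the reduced row echelon form does not depend on the ambient field.
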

\begin{proof}
\cref{Henrietta} shows that we have a properly colored solution of $M x=0$ whose coordinates are in $\Q$ if and only if the enumerated conditions above are met.
Every properly colored solution with coordinates in $\Q$ can be turned into a properly colored solution with coordinates in $\Z$ by scaling (because the system is homogeneous) and thence to a properly colored solution with coordinates in $\N$ by translating by some integer multiple of the all-one vector (since the row sums of $M$ are zero).
\end{proof}

Now combine \cref{Bruce}, \cref{Timothy}, and \cref{Oswald} to obtain the following characterization of the display matrices representing coherently satisfiable partitions.
\begin{lemma}\label{Sally}
Let $p,n\in\N$.
Let $M$ be a $p\times n$ matrix with entries in $\{0,1_r,1_b,-1_r,-1_b\}$.
Then $M$ is a display matrix for some partition in $\CSat(p)$ if and only if $M$ satisfies the following conditions:
\begin{enumerate}[label=(\roman*)]
\item Each row of $M$ has one $1_r$, one $1_b$, one $-1_r$, one $-1_b$, and $n-4$ instances of $0$ as its entries (not necessarily in that order).
\item For each column of $M$, the sum of the absolute values (ignore color) of the entries in that column is strictly positive.
\item $M$ is coherent.
\item If $M'$ is the reduced row echelon form of $M$ whose columns are given the same sequence of colors as the columns of $M$ (using the coherence of $M$ from the previous condition) and each entry of $M'$ is given the color of the column it resides in, then no row of $M'$ has colored weight $(2,0)$ or $(0,2)$, and no pair of rows of $M'$ have colored distance $(2,0)$ or $(0,2)$.
\end{enumerate}
\end{lemma}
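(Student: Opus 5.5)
The plan is to chain the three earlier characterizations, as the sentence before the statement indicates. First, \cref{Bruce} shows that conditions (i)--(iii) hold exactly when $M$ is a display matrix of some $\cP\in\Coh(p)$. So in both directions we may work with such a $\cP$, with $M$ as its display matrix. Every column of $M$ is nonzero and coherent, so each column has a well-defined color, red or blue. The coloring of $M'$ in (iv) therefore makes sense.

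Next, we translate satisfiability into a linear-algebra statement. By definition, $\cP\in\CSat(p)$ means $\CAs(\cP,=)\neq\emptyset$, i.e., $\CAs(\cP,=,\ell)\neq\emptyset$ for some $\ell$. By the final assertion of \cref{Timothy}, this is equivalent to the existence of a properly colored $a\in\N^{\cP}$ with $Ma=0$. Here the coordinate $a_P$ carries the color of the column indexed by $P$, and the entries of $M$ are read as integers with colors ignored. This integer matrix is exactly the matrix to which \cref{Oswald} applies. Its entries lie in $\{0,1,-1\}\subseteq\Z$. By (i), each row contains one $1_r$, one $1_b$, one $-1_r$ and one $-1_b$, so every row sum is $0$.

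Then \cref{Oswald} says such a properly colored solution in $\N$ exists if and only if the reduced row echelon form $M'$, colored column by column as in \cref{Oswald}, has no row of colored weight $(2,0)$ or $(0,2)$ and no pair of rows at colored distance $(2,0)$ or $(0,2)$. This is precisely (iv). Chaining the equivalences gives the claim: $M$ is a display matrix of a partition in $\CSat(p)\subseteq\Coh(p)$ if and only if (i)--(iv) hold.

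The only point needing care is checking that the color conventions match. The color scheme that \cref{Timothy} places on the solution vector $a$ (column colors of $M$) must agree with the one \cref{Oswald} places on solutions (the common color scheme of the rows of $M'$). Both come from the column colors of $M$, so they coincide. I would also note that the equivalence does not depend on the chosen ordering of columns. Permuting the columns of $M$ permutes the coordinates of solutions and preserves proper coloring, so the reduced row echelon form of any display matrix gives the same verdict.
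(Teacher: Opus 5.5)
Your proof is correct and follows the paper's argument exactly: \cref{Bruce} handles conditions (i)--(iii), and then \cref{Timothy} combined with \cref{Oswald} shows that (iv) is equivalent to coherent satisfiability. You also check that the rows of $M$ sum to zero, so that \cref{Oswald} applies, and that the two color schemes agree; the paper leaves both of these details implicit.
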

\begin{proof}
By \cref{Bruce}, the first three conditions are necessary and sufficient for $M$ to be a display matrix for some partition $\cP$ in $\Coh(p)$.
Once these three are granted, then by \cref{Timothy} and \cref{Oswald}, the fourth condition is necessary and sufficient for $\cP$ to be coherently satisfiable.
\end{proof}
And now we synthesize Lemmas \ref{Geraldo} and \ref{Sally} to obtain the following characterization of the display matrices that represent coherently contributory partitions.
\begin{lemma}\label{Concetta}
Let $p,n\in\N$.
Let $M$ be a $p\times n$ matrix with entries in $\{0,1_r,1_b,-1_r,-1_b\}$.
Then $M$ is a display matrix for some partition in $\CCon(p)$ if and only if $M$ satisfies the following conditions:
\begin{enumerate}[label=(\roman*)]
\item Each row of $M$ has one $1_r$, one $1_b$, one $-1_r$, one $-1_b$, and $n-4$ instances of $0$ as its entries (not necessarily in that order).
\item For each column of $M$, the sum of the absolute values (ignore color) of the entries in that column is strictly positive and even.
\item $M$ is coherent.
\item If $M'$ is the reduced row echelon form of $M$ whose columns are given the same sequence of colors as the columns of $M$ (using the coherence of $M$ from the previous condition) and each entry of $M'$ is given the color of the column it resides in, then no row of $M'$ has colored weight $(2,0)$ or $(0,2)$, and no pair of rows of $M'$ have colored distance $(2,0)$ or $(0,2)$.
\end{enumerate}
\end{lemma}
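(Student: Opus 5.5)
The plan is to combine \cref{Geraldo} and \cref{Sally}, using $\CCon(p)=\CSat(p)\cap\GELO(p)$ and the fact that a display matrix determines its partition (the correspondence between display matrices modulo column permutation and partitions with atomized classes is bijective).

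For the forward direction, suppose $M$ is a display matrix for some $\cP\in\CCon(p)$. Since $\cP\in\GELO(p)$, \cref{Geraldo} gives conditions (i) and (ii), including the parity requirement in (ii). Since $\cP\in\CSat(p)$, \cref{Sally} gives conditions (iii) and (iv). Here we use that the column sets of the two display matrices are the same, because both record the classes of the single partition $\cP$.

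For the converse, suppose $M$ satisfies (i)--(iv). Conditions (i) and (ii) imply the weaker conditions (i) and (ii) of \cref{Sally}, since ``strictly positive and even'' implies ``strictly positive''. Together with (iii) and (iv), \cref{Sally} then says $M$ is a display matrix for some $\cP\in\CSat(p)$. Conditions (i) and (ii) also satisfy \cref{Geraldo}, so $M$ is a display matrix for some $\cQ\in\GELO(p)$.

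The step needing care is showing $\cP=\cQ$. The columns of $M$ are display vectors of atomized sets, and an atomized set is recovered uniquely from its display vector. Hence the collection of classes represented by the columns of $M$ is determined by $M$ alone, so the partition of which $M$ is a display matrix is unique, and $\cP=\cQ\in\CSat(p)\cap\GELO(p)=\CCon(p)$.

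One subtlety remains. \cref{Geraldo}'s proof uses that an atomized partition is GELO exactly when it is even: each restriction $\cP_{\{e\}}$ consists of four singletons and so is never even. Display matrices only exist for partitions with atomized classes, so this fact applies here. I do not expect any real obstacle; the argument is bookkeeping about uniqueness of the partition represented by $M$.
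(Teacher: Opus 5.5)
Your proposal is correct and takes essentially the same approach as the paper. The paper's proof simply takes the conjunction of the conditions in \cref{Geraldo} and \cref{Sally}, using $\CCon(p)=\GELO(p)\cap\CSat(p)$; your remark that $M$ determines its partition uniquely (so both lemmas refer to the same partition) makes explicit a step the paper leaves implicit.
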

\begin{proof}
Since $\CCon(p)=\GELO(p)\cap \CSat(p)$, we just take the conjunction of the conditions from Lemmas \ref{Geraldo} and \ref{Sally}.
\end{proof}
For each $p \in \N$, we would like to find a set of representatives for the coherent isomorphism classes of coherently contributory partitions of $\indexset$, i.e., a collection of partitions with exactly one partition per element of $\CIsom(p)$.
We shall use display matrices to find these representatives.
We note that permutation of columns does not change which partition such a display matrix represents, so when we search for display matrices for our representatives, we are looking for matrices up to column permutation.
Furthermore, since coherent isomorphism classes of partitions are orbits under the action of the group $\Grp$, it will be helpful to see how the action of some $\gamma\in\Grp$ on a partition $\cP$ changes a display matrix $M$ of $\cP$ to a display matrix of $\gamma(\cP)$.
Recall that $\Grp=S_2 \times \Wrp$ where $\Wrp=S_2 \wr_{[p]} S_p$, so that an element of $\Grp$ is of the form $\gamma=(\digamma,((\sigma_0,\ldots,\sigma_{p-1}),\epsilon))$ with $\digamma,\sigma_0,\ldots,\sigma_{p-1} \in S_2$ and $\epsilon \in S_p$.
This $\gamma$ maps an $(e,s,v)\in\indexset$ to $\gamma(e,s,v)=(\epsilon(e),\sigma_{\epsilon(e)}(s),\digamma(v))$.
Recall that we say that {\it $\gamma$ uses the permutation $\epsilon$ to permute the equations, then uses $\sigma_e$ to permute the sides of equation $e$ for each $e \in [p]$, and then uses $\digamma$ to simultaneously permute the pair of places on every side of every equation.}
If $M$ is a display matrix of some partition $\cP$ of $\indexset$ where all classes of $\cP$ are atomized, and $\gamma$ is the element of $\Grp$ just described, then one obtains a display matrix of $\gamma(\cP)$ from $M$ by permuting the rows of $M$ using $\epsilon$ to get some $M'$, then for each $e \in [p]$ we negate the $e$th row of $M'$ if and only if $\sigma_e=(0 1)$, and finally we change the color of all colored elements (red becomes blue and blue becomes red) if and only if $\digamma=(0 1)$ (but do nothing if $\digamma$ is the identity permutation) to obtain a display matrix for $\gamma(\cP)$.
So, searching for representatives of coherent isomorphism classes of partitions in $\CCon(p)$ is tantamount to finding a set of representatives of the display matrices that satisfy the conditions of \cref{Concetta} under the action that allows us to permute columns, permute rows, negate each row independently of the others, and swap red and blue colors throughout the entire matrix.
A sketch of a procedure for finding all these matrices follows.
\begin{enumerate}
\item Consider all the matrices with $p$ rows and entries from $\{0,1\}$ where each row sum is $2$ and each column sum is a positive even integer, and let $A$ be a set of representatives of orbits of the matrices under the action of column permutation.
\item For each matrix in $A$, we can attach signs to the $1$ entries so that each row has one instance of $+1$ and one instance of $-1$.  There are $2^p$ ways to do this for each matrix in $A$, and if we collect all such matrices into a set of $2^p\card{A}$ matrices, then let $S$ be a set of representatives of orbits of these signed matrices under the action of column permutation.
\item For each pair of matrices $(M_0,M_1)$ in $S\times S$, color all the nonzero elements of $M_0$ red and all the elements of $M_1$ blue and form the matrix $M=[M_0|M_1]$ by juxtaposing $M_1$ to the right of $M_0$; call the set of all such matrices produced this way $T$.  From all these matrices, let $U$ be a set of representatives of orbits of the matrices in $T$ under the group action that allows us to arbitrarily permute columns, permute rows, negate any subset of rows, and globally swap red and blue colors for all nonzero entries of the matrix.
Let $D$ be the set of all matrices $M$ in $U$ such that the reduced row echelon form $M'$ of $M$ (with every entry in the $j$th column of $M'$ given the same color as the $j$th column of $M$ for each column index $j$) has the property that no row of $M'$ has colored weight $(2,0)$ or $(0,2)$ and no two rows of $M'$ have colored distance $(2,0)$ or $(0,2)$.  This set $D$ will have one and only one display matrix for each class $\fC$ in $\CIsom(p)$ with that matrix representing one of the coherently contributory partitions in $\fC$.
\end{enumerate}
The last claim in our procedure about $D$ is true because (i) the method of construction guarantees that all matrices in $D$ conform to the properties enumerated in \cref{Concetta}; (ii) for every matrix $M$ that has the properties enumerated in \cref{Concetta}, there must be a matrix in $T$ that has the same columns as $M$ but perhaps in a different order (so as to place all the columns with red entries to the left of all the columns with blue entries); and (iii) two matrices represent the same coherent isomorphism class of $\CIsom(p)$ if and only if they lie in the same orbit of the group action described in the procedure.
\begin{example}\label{Hector}
We use our procedure to find display matrices for a set of representatives of the classes in $\CIsom(2)$.
First we construct the set $A$ of representatives modulo column permutation of all matrices with $2$ rows and entries from $\{0,1\}$ where each row sum is $2$ and each column sum is a positive even integer.
This means that the sum of all entries is $4$, and every column sum is $2$, so the only matrix in $A$ is
\[
L=\begin{bmatrix} 1 & 1 \\ 1 & 1 \end{bmatrix}.
\]

Then we construct the set $S$ of matrices obtained by changing one entry in each row of $L$ from a $+1$ to a $-1$ and we choose representatives modulo column permutation, and so we obtain $S=\{L_1,L_2\}$, where
\begin{align*}
L_1 =\begin{bmatrix} +1 & -1 \\ +1 & -1 \end{bmatrix} \text{ and } L_2 =\begin{bmatrix} +1 & -1 \\ -1 & +1 \end{bmatrix}.
\end{align*}

Now we form the set $T$ of all four matrices of the form $[M_0|M_1]$ with $(M_0,M_1) \in S\times S$ and with the nonzero entries in $M_0$ colored red and the nonzero entries in $M_1$ colored blue.
Then we form a set $U$ of representatives of matrices in $T$ modulo the group action that allows us to arbitrarily permute columns, permute rows, negate any subset of rows, and globally swap red and blue colors for all entries of the matrix.  Our $U=\{K_1,K_2\}$ with
\[
K_1 = \begin{bmatrix}
+1_r & -1_r & +1_b & -1_b \\
+1_r & -1_r & +1_b & -1_b
\end{bmatrix}
\text{ and }
K_2 = \begin{bmatrix}
+1_r & -1_r & +1_b & -1_b \\
+1_r & -1_r & -1_b & +1_b
\end{bmatrix}.
\]
Finally, we let $D$ be the set containing every matrix in $U$ whose reduced row echelon form (with the entries of any given column in the reduced matrix given the same color as the corresponding column of the original) lacks any row of colored weight $(2,0)$ or $(0,2)$ and any pair of rows with colored distance $(2,0)$ or $(0,2)$.
The row reduced echelon forms of $K_1$ and $K_2$ are
\[
 \begin{bmatrix}
+1_r & -1_r & +1_b & -1_b \\
 0_r &  0_r &  0_b &  0_b
\end{bmatrix}
\text{ and }
\begin{bmatrix}
+1_r & -1_r &  0_b &  0_b \\
 0_r &  0_r & +1_b & -1_b
\end{bmatrix},
\]
respectively.
Since the former reduced row echelon form has no row with colored weight $(2,0)$ or $(0,2)$ and no two rows with colored distance $(2,0)$ or $(0,2)$, we conclude that $K_1 \in D$.
But the latter reduced row echelon form has a row of colored weight $(2,0)$, so $K_2 \not\in D$.
So $D=\{K_1\}$, and so there is precisely one class in $\CIsom(2)$.
This $K_1$ is the display matrix for the partition $\cP$ from \cref{Electra}, so that partition is a representative of the sole coherent isomorphism class in $\CIsom(2)$.
\end{example}

\section{Positivity of moments}\label{Portia}

This section is dedicated to a proof of \cref{Priscilla}, which states that all moments of the crosscorrelation demerit factor are nonnegative (and usually strictly positive unless the sequences are very short or if we are considering the first central moment, which is always zero).
Since $\mu^\ell_{p,(f,g)} \CDF(f,g) = \mu^\ell_{p,(f,g)} \SSCC(f,g) /\ell^{2 p}$ (when $\ell>0$) by \eqref{Celeste}, we see that $\mu^\ell_{p,(f,g)} \CDF(f,g)$ is nonnegative (resp., strictly positive) if and only if $\mu^\ell_{p,(f,g)} \SSCC(f,g)$ is nonnegative (resp., strictly positive).
So for convenience, we prove our claims here with $\SSCC$ in place of $\CDF$, and then \cref{Priscilla} follows immediately.
\begin{theorem}
If $\ell \in \N$ and $p \in \Z_+$, then $\mu^\ell_{p,(f,g)} \SSCC(f,g)$ is nonnegative.  Moreover, if (i) $p=1$, (ii) if $p$ is odd and $\ell \leq 2$, or (iii) if $p$ is even and $\ell \leq 1$, then $\mu^\ell_{p,(f,g)} \SSCC(f,g)=0$; otherwise it is strictly positive.
\end{theorem}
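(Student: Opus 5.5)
Nonnegativity is immediate from \cref{Francis}: the $p$th central moment of $\SSCC$ is $\sum_{\cP\in\CCon(p)}\card{\CAs(\cP,=,\ell)}$, a sum of cardinalities. So the whole proof reduces to deciding, for each $(p,\ell)$, whether some $\cP\in\CCon(p)$ has $\CAs(\cP,=,\ell)\neq\emptyset$. I would treat the vanishing cases first and then build explicit partitions and assignments for the positive cases.

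\emph{Vanishing cases.} For $p=1$, \cref{Walter} gives $\GELO(1)=\emptyset$, so $\CCon(1)=\emptyset$ and the sum is empty. For odd $p\geq 3$, the proof of \cref{Alberto}\ref{David} shows that each of the two coherent halves of $\cP$ has at least $3$ classes, so $\max\{N_0,N_1\}\geq 3$. Then \cref{Wayne} gives $\CAs(\cP,\ell)=\emptyset$ whenever $\ell\leq 2$. For even $p$, \cref{Alberto}\ref{Bertram} shows that $\cP_{\{0\}}$ has two distinct $0$-coherent singletons, and these lie in distinct classes of $\cP$. Hence $N_0\geq 2$, and \cref{Wayne} kills every term when $\ell\leq 1$. (The case $\ell=0$ is also trivial directly, since $[0]$ is empty.)

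\emph{Positive cases, even $p$, $\ell\geq 2$.} Take the partition with display matrix consisting of $p$ identical rows $(+1_r,-1_r,+1_b,-1_b)$, which generalizes $K_1$ from \cref{Hector}. It has four classes, each of size $p$, so it is even and atomized, hence GELO. For the assignment, send both red classes to $0$ and $1$ and both blue classes to $1$ and $0$, in the pattern of \cref{Agamemnon}. Concretely, set $\tau_{e00}=0$, $\tau_{e10}=1$, $\tau_{e01}=1$, $\tau_{e11}=0$ for all $e$. Each equation holds ($0+1=1+0$), and within each color the two classes receive distinct values. This assignment lies in $\CAs(\cP,=,2)\subseteq\CAs(\cP,=,\ell)$, so the moment is positive.

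\emph{Positive cases, odd $p\geq 3$, $\ell\geq 3$.} This is the main obstacle: I need a coherently contributory partition using only values in $\{0,1,2\}$. Here each color must have exactly three classes, and every class has even size at most $p-1$. My first attempt uses the red values $a,b,c$ and the blue values $a',b',c'$ on two kinds of rows:
\begin{itemize}
\item a rows of the form $x_{00}=0$, $x_{10}=1$, $x_{01}=1$, $x_{11}=0$;
\item a row of the form $x_{00}=2$, $x_{10}=1$, $x_{01}=0$, $x_{11}=1$, which is valid because $2+0=1+1$.
\end{itemize}
Equation-side swaps (negating rows) and permuting equations let me distribute occurrences so that each of the three red values and each of the three blue values appears an even number of times overall. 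For $p=3$, I would try three rows with red pairs $(0,1),(1,2),(2,0)$, so each red value appears exactly twice. The blue entries must then satisfy $\tau_{e01}-\tau_{e11}=\tau_{e10}-\tau_{e00}\in\{1,1,-2\}$, so the blue pairs are $(1,0),(2,1),(0,2)$, and each blue value also appears twice. This is even, atomized and satisfiable, hence in $\CCon(3)$ with $\ell=3$.

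For general odd $p$, append $(p-3)/2$ pairs of identical rows of the first kind to this $p=3$ block; each appended pair adds two occurrences to the same classes and so preserves evenness. This gives positivity for odd $p\geq 3$ and $\ell\geq 3$. The same padding trick, applied to the $p=2$ example, gives an alternative argument for the even case. The thing to double-check is that every class remains atomized, that is, no class contains two entries from the same equation; this holds because each row uses distinct values within each color.
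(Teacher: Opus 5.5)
Your proof is correct and follows the paper's route. Nonnegativity comes from \cref{Francis}, the vanishing cases from \cref{Alberto} together with \cref{Wayne}, and positivity from explicit coherently contributory witnesses; your even-$p$ witness is exactly the paper's $\cQ$. For odd $p$ the paper uses a different witness, namely $p-2$ identical rows with $0+2=1+1$ joined by two linking rows, while your cyclic $p=3$ block padded by pairs of rows with $0+1=1+0$ works equally well. Your preliminary ``two kinds of rows'' description does not match the block you actually use, so you should drop it.
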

\begin{proof}
\cref{Francis} expresses $\mu^\ell_{p,(f,g)} \SSCC(f,g)$ as a sum of cardinalities of sets, so it cannot be negative.
We have $\mu^\ell_{1,(f,g)} \SSCC(f,g)=0$ because the first central moment is always zero.

Suppose that $\cP$ is a coherently contributory partition.
Then \cref{Alberto}\ref{David} shows that $\cP$ has at least four classes in it (and if $p$ is odd, then it has at least $6$ classes in it).
In view of coherence, there must be some $v \in [2]$ such that $\cP$ has at least two classes (and at least three classes if $p$ is odd) that are subsets of $[p]\times[2]\times\{v\}$.
Then \cref{Wayne} shows that $\CAs(\cP,\ell)=\emptyset$ when $\ell < 2$ (and if $p$ is odd, when $\ell < 3$).
Therefore $\card{\CAs(\cP,=,\ell)}=0$ when $\ell < 2$ (if $p$ is odd, when $\ell < 3$).
Thus, \cref{Francis} shows that $\mu^\ell_{p,(f,g)} \SSCC(f,g)=0$ when $\ell \leq 1$ (and also when $\ell\leq 2$ if $p$ is odd).

If $p$ is even and $\ell \geq 2$, let $\cQ=\{P,Q,R,S\}$, where $P=[p]\times\{0\}\times\{0\}$, $Q=[p]\times\{0\}\times\{1\}$, $R=[p]\times\{1\}\times\{0\}$, and $S=[p]\times\{1\}\times\{1\}$, which is a coherent and GELO partition of $\indexset$.  Then let $\tau\colon\indexset \to \N$ be defined by $\tau_\alpha=0$ if $\alpha \in P \cup S$ and $\tau_\alpha=1$ if $\alpha \in Q\cup R$.  It is easy to check that $\tau \in \CAs(\cQ,=,2) \subseteq \CAs(\cQ,=,\ell) \subseteq \CAs(\cQ,=)$, which makes $\cQ$ coherently satisfiable (and hence coherently contributory since it is GELO), and makes $\card{\CAs(\cQ,=,\ell)}$ strictly positive.
This makes $\mu^\ell_{p,(f,g)} \SSCC(f,g)$ strictly positive by \cref{Francis}.

If $p$ is odd with $p>1$ and $\ell \geq 3$, let $\cR=\{A,B,C,D,E,F\}$ where $A=\{0,1,\ldots,p-2\}\times\{0\}\times\{0\}$, $B=\{0,1,\ldots,p-2\}\times\{0\}\times\{1\}$, $C=\{1,2,\ldots,p-1\}\times\{1\}\times\{0\}$, $D=\{1,2,\ldots,p-1\}\times\{0\}\times\{1\}$, $E=\{(0,1,0),(p-1,0,0)\}$, and $F=\{(0,1,1),(p-1,0,1)\}$.
This $\cR$ is a coherent and GELO partition of $\indexset$.
Then let $\tau\colon\indexset \to \N$ be defined by $\tau_\alpha=0$ if $\alpha \in A\cup F$ and $\tau_\alpha=1$ if $\alpha \in C \cup D$ and $\tau_\alpha=2$ if $\alpha\in B \cup E$.
It is straightforward to check that $\tau \in \CAs(\cR,=,3) \subseteq \CAs(\cR,=,\ell) \subseteq \CAs(\cR,=)$, which makes $\cR$ coherently satisfiable (and hence coherently contributory since it is GELO), and makes $\card{\CAs(\cR,=,\ell)}$ strictly positive.
This makes $\mu^\ell_{p,(f,g)} \SSCC(f,g)$ strictly positive by \cref{Francis}.
\end{proof}

\section{Variance}\label{Anne}

In this section, we calculate $\cmomv{2} \CDF(f,g)$, that is, the second central moment (variance) of the crosscorrelation demerit factor.
We are working with $\SSCC$ because it is mathematically more convenient than $\CDF$, and since \eqref{Celeste} shows how to convert between $\CDF$ and $\SSCC$, it is simple to convert the results to be in terms of $\CDF$.
We shall be using the formula from \cref{Vito} to compute the variance, so we must first determine the coherent isomorphism classes of coherently contributory partitions.
\begin{lemma}\label{Vivian}
There is one and only one coherent isomorphism class $\fC$ in $\CIsom(2)$.  This sole coherent isomorphism class consists of two partitions, one of which is 
\[
\cP=\left\{\{(0,s,v),(1,s,v)\}: s \in [2], v \in [2]\right\}.
\]
\end{lemma}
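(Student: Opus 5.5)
The plan is to run the matrix procedure of \cref{Theresa} for $p=2$, as already done in \cref{Hector}, and then use \cref{Penelope} for the class size.

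First, by \cref{Alberto}, every $\cQ\in\CCon(2)$ has only atomized classes, has between $4$ and $2p=4$ classes, and is even. So every class has exactly two elements, one from each equation. By \cref{Concetta}, a display matrix of $\cQ$ is a $2\times 4$ matrix. Each row contains $1_r,1_b,-1_r,-1_b$ exactly once, and each column has two nonzero entries of the same color.

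Up to column permutation this matrix is $[M_0|M_1]$, where each $M_v$ is a signed version of the all-ones $2\times 2$ matrix. Each $M_v$ is therefore $L_1$ or $L_2$ from \cref{Hector}. We may negate the second row, which is the action of $\sigma_1$ in $\Grp$; this makes the red part $L_1$. The blue part is then $L_1$ or $L_2$, giving exactly the representatives $K_1$ and $K_2$.

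Next we test coherent satisfiability through the reduced row echelon forms computed in \cref{Hector}, via \cref{Concetta} (equivalently \cref{Sally} and \cref{Oswald}). $K_2$ reduces to a matrix with a row of colored weight $(2,0)$, so it is excluded. Directly: the partition forces $a_P-a_Q=0$ for two distinct red classes, so proper coloring fails. $K_1$ passes the test. In any case, \cref{Ajax} already shows that $K_1$'s partition, the $\cP$ of \cref{Electra}, lies in $\CCon(2)$. Since $K_1$ and $K_2$ represent all orbits of matrices meeting the structural conditions, $\CIsom(2)$ has exactly one class $\fC$, and it contains $\cP$.

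The partition $\cP$ of \cref{Electra} is the set $\{\{(0,s,v),(1,s,v)\}: s,v\in[2]\}$ in the statement. By \cref{Penelope}, its stabilizer has order $8$ in a group of order $16$, so $\card{\fC}=2$, with the other member being $\cQ$ of \cref{Odysseus}.

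The step that needs the most care is the orbit reduction to $K_1,K_2$. We must check that swapping colors and negating rows really map every candidate blue block onto $L_1$ or $L_2$ once the red block is fixed as $L_1$. This amounts to a small explicit check of the four matrices in $T$.
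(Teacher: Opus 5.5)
Your proposal is correct and is essentially the paper's proof. The paper cites \cref{Hector} for the fact that $K_1$ (the display matrix of $\cP$) represents the only class of $\CIsom(2)$, with $K_2$ excluded by its row of colored weight $(2,0)$, and \cref{Penelope} for $\card{\fC}=16/8=2$; your use of \cref{Alberto} to force the $2\times 4$ shape with two red and two blue columns, and your direct check that $K_2$ forces two distinct red classes to share a value, only make explicit steps that \cref{Hector} leaves implicit.
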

\begin{proof}
Note that the partition $\cP$ is just the partition of the same name in \cref{Electra}, which is shown to be a representative of the sole class in $\CIsom(2)$ in \cref{Hector}, and this class is shown to consist of two partitions in \cref{Penelope}.
\end{proof}
Now we determine the solution count function for the unique isomorphism class of coherently contributory partitions.
\begin{lemma}\label{Wilhelm}
For the sole coherent isomorphism class $\fC$ of $\CIsom(2)$, we have
\[
\CSols(\fC,\ell)=\frac{2 \ell^3-3\ell^2+\ell}{3}.
\]
\end{lemma}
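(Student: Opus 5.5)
We will compute $\CSols(\fC,\ell)$ as $\card{\CAs(\cP,=,\ell)}$ for the representative $\cP=\{\{(0,s,v),(1,s,v)\}: s\in[2], v\in[2]\}$ of \cref{Vivian}, and translate this into a count of integer solutions via \cref{Timothy}.

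The display matrix of $\cP$ is $K_1$ from \cref{Hector}, whose two rows are identical. Order its columns by the classes
\[
[2]\times\{0\}\times\{0\},\quad [2]\times\{1\}\times\{0\},\quad [2]\times\{0\}\times\{1\},\quad [2]\times\{1\}\times\{1\}.
\]
Write the corresponding entries of a colored tuple as $x,z$ (both red) and $y,w$ (both blue). \cref{Timothy} then says that $\card{\CAs(\cP,=,\ell)}$ is the number of $(x,y,z,w)\in[\ell]^4$ satisfying the single equation $x+y=z+w$ (the second row repeats the first) together with the proper-coloring conditions $x\neq z$ and $y\neq w$.

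Next observe that, given $x+y=z+w$, we have $x-z=w-y$, so $x\neq z$ holds exactly when $y\neq w$. The count is therefore the number $T(\ell)$ of all solutions of $x+y=z+w$ in $[\ell]^4$, minus the number of solutions with $x=z$. Solutions with $x=z$ force $y=w$, so there are exactly $\ell^2$ of them.

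It remains to compute $T(\ell)$. The number of pairs $(x,y)\in[\ell]^2$ with $x+y=n$ is $r(n)=n+1$ for $0\le n\le \ell-1$ and $r(n)=2\ell-1-n$ for $\ell\le n\le 2\ell-2$. Hence
\[
T(\ell)=\sum_{n} r(n)^2=\sum_{k=1}^{\ell}k^2+\sum_{k=1}^{\ell-1}k^2=\frac{\ell(\ell+1)(2\ell+1)+(\ell-1)\ell(2\ell-1)}{6}=\frac{2\ell^3+\ell}{3}.
\]
Subtracting $\ell^2$ gives $(2\ell^3-3\ell^2+\ell)/3$, as claimed. As a sanity check, $\ell=2$ gives $6-4=2$, which agrees with the formula. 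No step is a serious obstacle; the only care needed is in correctly reading off the coloring constraints from \cref{Timothy} and noticing that the two inequalities coincide.
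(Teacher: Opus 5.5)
Your proof is correct and follows the paper's argument: you use \cref{Timothy} with the display matrix $K_1$ to reduce to counting solutions of the single equation $x+y=z+w$ in $[\ell]^4$, then subtract the $\ell^2$ improperly colored solutions. The only difference is that you compute the unrestricted count $(2\ell^3+\ell)/3$ directly as $\sum_n r(n)^2$, where the paper cites a lemma of Katz and Ramirez; you also spell out why $x\neq z$ and $y\neq w$ are equivalent, which the paper only asserts.
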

\begin{proof}
In \cref{Vivian}, we give a partition $\cP$ that is a representative of the sole coherent isomorphism class of $\CIsom(2)$, and so $\CSols(\fC,\ell)=\card{\CAs(\cP,=,\ell)}$ for all $\ell \in \N$.
Since the matrix
\[
K_1 = \begin{bmatrix}
+1_r & -1_r & +1_b & -1_b \\
+1_r & -1_r & +1_b & -1_b
\end{bmatrix}
\]
is a display matrix for $\cP$, \cref{Timothy} says that $\card{\CAs(\cP,=,\ell)}$ is equal to the number of solutions $x \in [\ell]^4$ of the homogeneous system $K_1 x = 0$ where the first two coordinates of $x$ are colored red, the last two are colored blue, and no two entries of the same color have the same numerical value (proper coloring).
If we write
\[
x=\begin{bmatrix} x_0 \\ x_1 \\ x_2 \\ x_3 \end{bmatrix}
\]
Then $K_1 x=0$ is equivalent to the system with the single equation $x_0+x_2=x_1+x_3$ which by \cite[Lemma C.6(i)]{Katz-Ramirez} has $(2\ell^3+\ell)/3$ solutions in $[\ell]^4$ if we do not insist on proper coloring.
The improperly colored solutions are precisely the ones with $x_0=x_1$ and $x_2=x_3$, of which there are $\ell^2$ total solutions.
So the total number of properly colored solutions is $(2\ell^3+\ell)/3-\ell^2$.
\end{proof}
Now that we have determined the unique isomorphism class of coherently contributory partitions and its solution count function, we can calculate the variance of $\SSCC$.
\begin{theorem}
If $\ell$ is a positive integer, then
\[
\cmomv{2} \SSCC(f,g) = \frac{4\ell^3  - 6\ell^2 + 2\ell}{3}.
\]
\end{theorem}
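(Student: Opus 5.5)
The plan is to apply \cref{Vito} with $p=2$ and feed in the two lemmas just established. By \cref{Vito},
\[
\cmomv{2}\SSCC(f,g)=\sum_{\fP\in\CIsom(2)}\card{\fP}\,\CSols(\fP,\ell).
\]
\cref{Vivian} tells us that $\CIsom(2)$ has exactly one element $\fC$, and that $\card{\fC}=2$. \cref{Wilhelm} gives the solution count $\CSols(\fC,\ell)=(2\ell^3-3\ell^2+\ell)/3$. Multiplying, we get $2\cdot(2\ell^3-3\ell^2+\ell)/3=(4\ell^3-6\ell^2+2\ell)/3$, which is the claimed formula.

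The only point needing care is that the earlier results are valid for every $\ell\in\N$, and in particular for every positive integer $\ell$. The count in \cref{Wilhelm} relies on \cite[Lemma C.6(i)]{Katz-Ramirez}, which holds for all $\ell$, so no case split is needed.

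As a sanity check, I will verify small cases directly. For $\ell=1$ the formula gives $0$, which agrees with the positivity theorem (the case $p$ even, $\ell\leq 1$). For $\ell=2$ it gives $(32-24+4)/3=4$, which can be confirmed by enumerating the $16$ pairs.

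There is no real obstacle here; the substantive work was already done in \cref{Vivian} and \cref{Wilhelm}. Dividing by $\ell^4$ via \eqref{Celeste} then yields \cref{Valerie}.
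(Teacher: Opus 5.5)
Your proposal is correct and is exactly the paper's proof: apply \cref{Vito} with $p=2$, take the single class $\fC$ of size $2$ from \cref{Vivian}, and multiply by $\CSols(\fC,\ell)=(2\ell^3-3\ell^2+\ell)/3$ from \cref{Wilhelm}. Your sanity checks also hold: the formula gives $0$ at $\ell=1$, and at $\ell=2$ one finds $\SSCC\in\{2,6\}$ with equal probability, so the variance is $4$.
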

\begin{proof}
By \cref{Vito}, we have
\[
\cmomv{2} \SSCC(f,g) = \sum_{\fP \in \CIsom(2)} \card{\fP} \CSols(\fP,\ell),
\]
and since \cref{Vivian} tells us that we have only one class $\fC$ in $\CIsom(2)$ with $\card{\fC}=2$, while \cref{Wilhelm} tells us that $\CSols(\fC,\ell)=(2 \ell^3-3\ell^2+\ell)/3$, we obtain the desired result.
\end{proof}
By \eqref{Celeste}, we can divide the variance of $\SSCC$ by $\ell^4$ to get the variance of $\CDF$, and thus obtain \cref{Valerie}.

\section{Skewness}\label{Sidney}

In this section, we calculate $\cmomv{3} \CDF(f,g)$, that is, the third central moment of the crosscorrelation demerit factor, from which one can obtain its skewness by dividing by the $3/2$ power of the variance (calculated in \cref{Anne}).
We are working with $\SSCC$ because it is mathematically more convenient than $\CDF$, and since \eqref{Celeste} shows how to convert between $\CDF$ and $\SSCC$, it is simple to convert the results to be in terms of $\CDF$.
We shall be using the formula from \cref{Vito} to compute the third central moment, so we must first determine the coherent isomorphism classes of coherently contributory partitions.
\begin{lemma}\label{Alexander}
There are precisely two coherent isomorphism classes, $\fC_0$ and $\fC_1$, in $\CIsom(3)$, which respectively contain the partitions
\[
\cP_0=\left\{\{(e,0,v),(\langle e+1\rangle,1,v)\}: e \in [3], v \in [2] \right\},
\]
and
\[
\cP_1=\left\{\{(e,v,v),(\langle e+1 \rangle,1-v,v)\}: e \in [3], v \in [2]\right\},
\]
where $\langle \cdot \rangle \colon \Z \to [3]$ is the function such that $\langle a \rangle$ is the unique element of $[3]$ that is congruent to $a$ modulo $3$.
Furthermore, $\card{\fC_1}=\card{\fC_2}=8$.
\end{lemma}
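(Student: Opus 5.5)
The plan is to run the procedure of \cref{Theresa} for $p=3$, using \cref{Alberto} to cut the search down at the outset, and then to compute the class sizes with the orbit--stabilizer lemma.

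\textbf{Pinning down the absolute matrices.} Since $p=3$ is odd, \cref{Alberto} says every class of a $\cP\in\CCon(3)$ is atomized with at most $p-1=2$ elements. Because $\cP$ is even, every class therefore has exactly $2$ elements. Each color carries $6$ elements of $\indexset$, so there are exactly three red and three blue columns, each with two nonzero entries. Each row contains exactly two red nonzero entries, so the red absolute part is a $3\times 3$ $0/1$ matrix with all row and column sums equal to $2$. Such a matrix is $J-I$ up to column permutation, and the same holds for the blue part.

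\textbf{Reducing the signed matrices.} I will encode the red part combinatorially. Its columns correspond to the vertices of a triangle and its rows to the edges, and the sign pattern in a row (one $+1$, one $-1$) is an orientation of that edge; the blue part is encoded the same way. Using row negations, I normalize the red orientation to be cyclic. The remaining freedom is:
\begin{itemize}
\item the blue signs ($2^3$ choices) together with the bijection between blue columns and vertices;
\item modulo the stabilizer of the red data, namely row and column permutations preserving the red part, simultaneous negation of all rows combined with a reflection, and the red/blue swap.
\end{itemize}
This leaves a small number of candidate display matrices, essentially classified by whether the blue orientation is cyclic or transitive and how it is aligned with the red one. For each candidate I will compute the reduced row echelon form (the rank is at most $3$) and apply the criterion of \cref{Concetta}: no row of colored weight $(2,0)$ or $(0,2)$, and no two rows at colored distance $(2,0)$ or $(0,2)$. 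The expected outcome is that exactly two candidates survive, with display matrices representing $\cP_0$ and $\cP_1$. For those two, I will also exhibit explicit properly colored solutions, which confirms coherent satisfiability directly via \cref{Timothy}. Their non-isomorphism follows from some invariant, for instance the rank of the display matrix or the red/blue alignment pattern, which is preserved by the group action described before \cref{Hector}.

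\textbf{Class sizes.} Since $\card{\Grthree}=2^4\cdot 3!=96$, I will show $\card{\Stab_{\Grthree}(\cP_i)}=12$ for each $i$, giving $96/12=8$ partitions per class. The cyclic shift $e\mapsto\langle e+1\rangle$ gives order $3$ in both stabilizers. The remaining factor of $4$ should come from a reversal of the cycle combined with side swaps $\sigma_e=(01)$ on all equations, and from the color swap $\digamma$, possibly also combined with side swaps. To confirm the stabilizer is no larger, I will check that an element is determined by $\epsilon$ and $\digamma$ (at most $6\cdot 2=12$ possibilities) and that each such pair admits exactly one compatible choice of $\sigma$'s.

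\textbf{Main obstacle.} I expect the hardest step to be the exhaustive enumeration of the blue sign/alignment candidates up to symmetry, followed by the echelon-form test for each. The risk is either missing an orbit or double-counting one. I will guard against this by cross-checking with a count of $\CCon(3)$: the total number of partitions, summed over orbits, should equal $16$.
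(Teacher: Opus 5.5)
Your plan is essentially the paper's proof. Both reduce each color to $J-I$, use row negations to make the red part the cyclic matrix $L_0$, find that transitively oriented blue parts fail the echelon-form test while the two cyclic ones ($[L_0|L_0]$ and $[L_0|-L_0]$, representing $\cP_0$ and $\cP_1$) pass, and get $\card{\Stab_{\Grthree}(\cP_i)}=12$ because each pair $(\epsilon,\digamma)$ admits exactly one compatible choice of the $\sigma_e$.

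One correction: rank cannot separate $\cP_0$ from $\cP_1$, since $[L_0|L_0]$ and $[L_0|-L_0]$ both have rank $2$. You must use the red/blue alignment (clone) invariant, as the paper does, or else your independent count $\card{\CCon(3)}=16$ combined with the two orbit sizes of $8$.
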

\begin{proof}
We use our procedure summarized at the end of \cref{Theresa} to find display matrices for a set of representatives of the classes in $\CIsom(3)$.
First we construct the set $A$ of representatives modulo column permutation of all matrices with $3$ rows and entries from $\{0,1\}$ where each row sum is $2$ and each column sum is a positive even integer (necessarily $2$).
This means that the sum of all entries is $6$, and every row and column sum is $2$, and up to column permutation there is only one such matrix, so we let $A=\{L\}$ where 
\[
L=\begin{bmatrix} 0 & 1 & 1 \\ 1 & 0 & 1 \\ 1 & 1 & 0 \end{bmatrix}.
\]

Then we construct the set $S$ of matrices obtained by changing one entry in each row of $L$ from a $+1$ to a $-1$ and we choose representatives modulo column permutation, and so we obtain $S=\{L_0,L_1,\ldots,L_7\}$, where
\begin{center}
\begin{tabular}{ll}
$L_0=\begin{bmatrix}
 0 & -1 & +1 \\
+1 &  0 & -1 \\
-1 & +1 &  0 
\end{bmatrix}$, &
$L_1=\begin{bmatrix}
 0 & +1 & -1 \\
-1 &  0 & +1 \\
+1 & -1 &  0
\end{bmatrix}$, \\[20pt]
$L_2=\begin{bmatrix}
 0 & +1 & -1 \\
+1 &  0 & -1 \\
+1 & -1 &  0
\end{bmatrix}$, &
$L_3=\begin{bmatrix}
 0 & -1 & +1 \\
-1 &  0 & +1 \\
-1 & +1 &  0 
\end{bmatrix}$, \\[20pt]
$L_4=\begin{bmatrix}
 0 & +1 & -1 \\
-1 &  0 & +1 \\
-1 & +1 &  0
\end{bmatrix}$, &
$L_5=\begin{bmatrix}
 0 & -1 & +1 \\
+1 &  0 & -1 \\
+1 & -1 &  0 
\end{bmatrix}$, \\[20pt]
$L_6=\begin{bmatrix}
 0 & -1 & +1 \\
-1 &  0 & +1 \\
+1 & -1 &  0 
\end{bmatrix}$, &
$L_7=\begin{bmatrix}
 0 & +1 & -1 \\
+1 &  0 & -1 \\
-1 & +1 &  0
\end{bmatrix}$.
\end{tabular}
\end{center}

Now we form the set $T$ of all $64$ matrices of the form $[M_0|M_1]$ with $(M_0,M_1) \in S\times S$ and with the nonzero entries in $M_0$ colored red and the nonzero entries in $M_1$ colored blue.
Then we form a set $U$ of representatives of matrices in $T$ modulo the group action that allows us to arbitrarily permute columns, permute rows, negate any subset of rows, and globally swap red and blue colors for all entries of the matrix.
In the rest of the proof, we use the convention that if $A$ and $B$ are two $3\times 3$ matrices, then $[A|B]$ is the $3\times 6$ matrix whose nonzero entries in the three leftmost columns are red and the whose nonzero entries in the three rightmost columns are blue.
Also note that we index rows and columns of matrices starting from $0$.
We may choose our $U$ to be a subset of $\{[L_0|L_j]: j \in [8]\}$ because every matrix in $S$ may be changed to $L_0$ by negating selected rows.
If we begin with $[L_0|L_2]$ and apply row transposition $(12)$ followed by column permutation $(12)(45)$ and negate every row, we obtain $[L_0|L_4]$.
And if we begin with $[L_0|L_2]$ and apply row transposition $(01)$ followed by column permutation $(01)(34)$ and negate every row, we obtain $[L_0|L_6]$.
Since $L_3=-L_2$, $L_5=-L_4$, and $L_7=-L_6$, we see that $[L_0|L_3]$ can be changed into $[L_0|L_5]$ and into $[L_0|L_7]$ using our group operations in the same way we did to get $[L_0|L_4]$ and $[L_0|L_6]$ from $[L_0|L_2]$.
So in fact, we may choose our $U$ to be a subset of $\{[L_0|L_0],[L_0|L_1],[L_0|L_2],[L_0|L_3]\}$.

Note that permuting rows and columns, negating selected rows, and changing colors of entries does not change the rank of a matrix (we ignore color when determining rank).
Now recall that we say that a column of a matrix is {\it red} (resp., {\it blue}) to mean that all its nonzero entries are red (resp., blue), and so according to our conventions for this proof, if $A$ and $B$ are $3\times 3$ matrices, then the first three columns of $[A|B]$ are red and the last three are blue.
We say that a red (resp., blue) column of a matrix is {\it the clone} of a blue (resp., red) column of the same matrix to mean that the only difference between the two columns is the color of the nonzero entries.
We say that a matrix has the {\it clone property} if it has two columns that are clones of each other.
Note that for a matrix of the form $[A|B]$ with $A$ and $B$ being $3\times 3$ matrices, the clone property (or lack of it) is preserved by swapping all colors of nonzero entries, by row and column permutations, and by negation of any selection of rows.
Using the two invariants of rank and clone property, we note that the four matrices $[L_0|L_0]$, $[L_0|L_1]$, $[L_0|L_2]$, and $[L_0|L_3]$ all lie in separate orbits modulo the group action that allows us to arbitrarily permute columns, permute rows, negate any subset of rows, and globally swap red and blue colors for all entries of the matrix.
This is because $[L_0|L_0]$ and $[L_0|L_1]$ have rank $2$ while $[L_0|L_2]$ and $[L_0|L_3]$ have rank $3$, while $[L_0|L_0]$ and $[L_0|L_3]$ have the clone property while $[L_0|L_1]$ and $[L_0|L_2]$ do not.
So in fact, we can take $U=\{M_0,M_1,M_2,M_3\}$ where $M_j=[L_0|L_j]$ for $j\in\{0,1,2,3\}$.

When we compute the reduced row echelon form $M_j'$ for each $M_j \in U$ (with the entries of any given column in $M_j'$ given the same color as the corresponding column of $M_j$), we obtain
\begin{center}
\begin{tabular}{ll}
$M_0' =
\left[
\begin{matrix}
+1_r &  0_r & -1_r \\
 0_r & +1_r & -1_r \\
 0_r &  0_r &  0_r 
\end{matrix}
\,\, \left| \,\,
\begin{matrix}
+1_b &  0_b & -1_b \\ 
 0_b & +1_b & -1_b \\ 
 0_b &  0_b &  0_b
\end{matrix}
\right.
\right]$
&
$M_1' =
\left[
\begin{matrix}
+1_r &  0_r & -1_r \\
 0_r & +1_r & -1_r \\
 0_r &  0_r &  0_r 
\end{matrix}
\,\, \left| \,\,
\begin{matrix}
-1_b &  0_b & +1_b \\ 
 0_b & -1_b & +1_b \\ 
 0_b &  0_b &  0_b
\end{matrix}
\right.
\right]$
\\[20pt]
$M_2'=
\left[
\begin{matrix}
+1_r &  0_r & -1_r \\
 0_r & +1_r & -1_r \\
 0_r &  0_r &  0_r 
\end{matrix}
\,\, \left| \,\,
\begin{matrix}
 0_b &  0_b &  0_b \\
 0_b & -1_b & +1_b \\ 
+1_b &  0_b & -1_b
\end{matrix}
\right.
\right]$
&
$M_3' =
\left[
\begin{matrix}
+1_r &  0_r & -1_r \\
 0_r & +1_r & -1_r \\
 0_r &  0_r &  0_r 
\end{matrix}
\,\, \left| \,\,
\begin{matrix}
 0_b &  0_b &  0_b \\
 0_b & +1_b & -1_b \\ 
+1_b &  0_b & -1_b
\end{matrix}
\right.
\right]$
\end{tabular}
\end{center}
Now we notice that $M_2'$ and $M_3'$ have rows whose colored weight is $(2,0)$.
But $M_0'$ and $M_1'$ have no row nor any difference of distinct rows whole colored weight is $(2,0)$ and $(0,2)$.
Therefore, if we let $D=\{[L_0|L_0],[L_0|L_1]\}$, then $D$ has one and only one display matrix for each class of $\CIsom(3)$, where each display matrix represents a partition from the corresponding class.
Note that $[L_0|L_0]$ represents the partition $\cP_0$ and $[L_0|L_1]$ represents the partition $\cP_1$, so we conclude that $\CIsom(3)$ has precisely two classes with representatives $\cP_0$ and $\cP_1$.

Now we consider the stabilizer of $\cP_0$ under the action of $\Grthree=S_2\times \Wrthree$.  In the following, $\sgn$ maps a permutation to its sign (which is $+1$ for even permutations and $-1$ for odd permutations).  It is easy enough to check that $\Stab_{\Grthree}(\cP_0)=\{(\digamma,((\sigma_0,\sigma_1,\sigma_2),\epsilon))\in\Grthree : \epsilon \in S_3, \digamma \in S_2, \sigma_0=\sigma_1=\sigma_2=(01)^{(1-\sgn(\epsilon))/2}\}$, so $\card{\Stab_{\Grthree}(\cP_0)}=12$ and note that $|\Grthree|=96$, so then the coherent isomorphism class of $\cP_0$ has $96/12=8$ partitions in it.
Likewise, one can check that $\Stab_{\Grthree}(\cP_1)=\{(\digamma,((\sigma_0,\sigma_1,\sigma_2),\epsilon))\in\Grthree : \epsilon \in S_3, \digamma \in S_2, \sigma_0=\sigma_1=\sigma_2=(01)^{(1-\sgn(\epsilon))/2}\digamma\}$, so $\card{\Stab_{\Grthree}(\cP_1)}=12$, so then the coherent isomorphism class of $\cP_1$ also has $96/12=8$ partitions in it.
\end{proof}
Now we determine the solution count function for each isomorphism class of coherently contributory partitions.
\begin{lemma}\label{Bob}
For the coherent isomorphism classes $\fC_0$ and $\fC_1$ of $\CIsom(3)$ described in \cref{Alexander}, we have
\[
\CSols(\fC_0,\ell)=\CSols(\fC_1,\ell)=\frac{\ell^4-4\ell^3+5\ell^2-2\ell}{2}.
\]
\end{lemma}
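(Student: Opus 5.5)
By \cref{Eric}, $\CSols(\fC_j,\ell)=\card{\CAs(\cP_j,=,\ell)}$, and by \cref{Timothy} this is the number of properly colored solutions $x\in[\ell]^6$ of $Mx=0$, where $M$ is a display matrix of $\cP_j$. From the proof of \cref{Alexander} I take $M=[L_0|L_0]$ for $\cP_0$ and $M=[L_0|L_1]$ for $\cP_1$. I write the three red coordinates of $x$ as $x_0,x_1,x_2$ and the three blue ones as $y_0,y_1,y_2$. Proper coloring then says that the $x_i$ are pairwise distinct and the $y_i$ are pairwise distinct.

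For $[L_0|L_0]$, row $0$ reads $x_2-x_1=y_1-y_2$ and row $1$ reads $x_0-x_2=y_2-y_0$; row $2$ is minus their sum, so it is redundant. I will show these equations are equivalent to $y_i=c-x_i$ for a single integer $c$. Fix $y_2=c-x_2$, which defines $c$; the two equations then force $y_1=c-x_1$ and $y_0=c-x_0$.

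For $[L_0|L_1]$ we have $L_1=-L_0$, so the blue differences change sign. The same argument gives $y_i=x_i+c$.

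In both cases the $y_i$ are automatically distinct once the $x_i$ are. So the count equals the number of pairs consisting of an ordered triple of distinct $x_i\in[\ell]$ and a shift $c$ keeping all $y_i$ in $[\ell]$. This immediately explains why the two classes have the same solution count function. For a triple with span $s=\max x_i-\min x_i$, the number of admissible $c$ is $\ell-s$ in either case.

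The number of $3$-element subsets of $[\ell]$ with span $s$ is $(\ell-s)(s-1)$, for $2\le s\le \ell-1$. Each subset gives $6$ orderings, so
\[
\CSols(\fC_j,\ell)=6\sum_{s=2}^{\ell-1}(s-1)(\ell-s)^2 .
\]
The last step is to evaluate this sum with standard power-sum formulas, substituting $t=\ell-s$, and to check that it equals $(\ell^4-4\ell^3+5\ell^2-2\ell)/2=\ell(\ell-1)^2(\ell-2)/2$. As a sanity check, at $\ell=3$ both sides give $6$.

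The only real obstacle is the algebraic identification of the solution set, namely $y=c-x$ or $y=x+c$. Once that is in hand, the rest is a routine counting computation.
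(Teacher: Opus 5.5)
Your proof is correct. It double counts the same set as the paper, but in the opposite order.

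The paper also reduces $[L_0|L_0]x=0$ to $x_0+x_3=x_1+x_4=x_2+x_5$, but it then slices by the common value $h\in[2\ell-1]$. With $f(h)$ ordered pairs in $[\ell]^2$ summing to $h$, it counts $f(h)(f(h)-1)(f(h)-2)$ ordered triples of distinct pairs. It sums these using the profile $1,2,\ldots,\ell,\ldots,2,1$ of $f$ and the hockey-stick identity, obtaining $12\binom{\ell}{4}+6\binom{\ell}{3}$. For $\cP_1$ it substitutes $\ell-1-x_j$ for the blue coordinates to reduce to the first count.

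You instead fix the red triple and count admissible values of the common parameter $c$. This gives $\ell-s$ in both the reflected case ($y=c-x$) and the translated case ($y=x+c$) at once. So the equality $\CSols(\fC_0,\ell)=\CSols(\fC_1,\ell)$ follows without a separate change of variables. The cost is a power-sum evaluation in place of a binomial identity.

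The step you left to check does close. With $n=\ell-1$,
\[
\sum_{t=1}^{n-1}(n-t)t^2=n\cdot\frac{(n-1)n(2n-1)}{6}-\frac{(n-1)^2n^2}{4}=\frac{n^2(n-1)(n+1)}{12},
\]
and six times this is $\ell(\ell-1)^2(\ell-2)/2$, as required.
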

\begin{proof}
In \cref{Alexander}, we give a partition $\cP_0$ that is a representative of the coherent isomorphism class $\fC_0$, and so $\CSols(\fC_0,\ell)=\card{\CAs(\cP_0,=,\ell)}$ for all $\ell \in \N$.
Since the proof of \cref{Alexander} gives the matrix
\[
[L_0|L_0]=\left[
\begin{matrix}
 0   & -1_r & +1_r \\
+1_r &  0   & -1_r \\
-1_r & +1_r &  0
\end{matrix}
\,\, \left| \,\,
\begin{matrix}
 0   & -1_b & +1_b \\   
+1_b &  0   & -1_b \\   
-1_b & +1_b &  0
\end{matrix}
\right.
\right]
\]
as a display matrix for $\cP_0$, \cref{Timothy} says that $\card{\CAs(\cP_0,=,\ell)}$ is equal to the number of solutions $x \in [\ell]^6$ of the homogeneous system $[L_0|L_0] x = 0$ where the first three coordinates of $x$ are colored red, the last three are colored blue, and no two entries of the same color have the same numerical value (i.e., we have proper coloring).
If we write
\[
x=\begin{bmatrix} x_0 \\ x_1 \\ x_2 \\ x_3 \\ x_4 \\ x_5 \end{bmatrix}
\]
Then $[L_0|L_0] x=0$ is equivalent to the system $x_0+x_3=x_1+x_4=x_2+x_5$.
So we want $x \in [\ell]^6$ satisfying this system with $x_0,x_1,x_2$ distinct (which forces $x_3,x_4,x_5$ to also be distinct).
Since $x_0,\ldots,x_5 \in [\ell]$, the three sums $x_0+x_3$, $x_1+x_4$, and $x_2+x_5$ have a common value $h$ that can range from $0$ to $2(\ell-1)$.
Given a value of $h \in [2\ell-1]$, the number of pairs $(a,b) \in [\ell]^2$ with $a+b=h$ is
\begin{align*}
f(h)=\begin{cases}
h+1 & \text{if $h \leq \ell-1$} \\
2 \ell-1-h & \text{if $h \geq \ell-1$}.
\end{cases}
\end{align*}
For each $h \in [2\ell-1]$, we need to count how many triples $((a_0,b_0),(a_1,b_1),(a_2,b_2))$ of such pairs there are with the three pairs distinct, which is
\[
\sum_{h=0}^{2(\ell-1)} f(h)(f(h)-1)(f(h)-2) = 6 \sum_{h=0}^{2(\ell-1)} \binom{f(h)}{3},
\]
and since $f(h)$ takes values $1,2,\ldots,\ell-1,\ell,\ell-1,\ldots,2,1$ as $h$ runs from $0$ to $2(\ell-1)$, our sum is the same as
\begin{align*}
12 \sum_{j=1}^{\ell-1} \binom{j}{3} + 6 \binom{\ell}{3}
& = 12 \sum_{j=0}^{\ell-1} \binom{j}{3} + 6 \binom{\ell}{3} \\
& = 12 \binom{\ell}{4} + 6 \binom{\ell}{3} \\
& = \frac{\ell(\ell-1)^2(\ell-2)}{2},
\end{align*}
where the second equality uses the analogue of integration in the theory of finite differences (which is essentially repeated application of Pascal's rule).  When one multiplies out the terms in the final expression, one obtains the claimed formula for $\CSols(\fC_0,\ell)$.

In \cref{Alexander}, we give a partition $\cP_1$ that is a representative of the coherent isomorphism class $\fC_1$, and so $\CSols(\fC_1,\ell)=\card{\CAs(\cP_1,=,\ell)}$ for all $\ell \in \N$.
Since the proof of \cref{Alexander} gives the matrix
\[
[L_0|L_1]=\left[
\begin{matrix}
 0   & -1_r & +1_r \\
+1_r &  0   & -1_r \\
-1_r & +1_r &  0
\end{matrix}
\,\, \left| \,\,
\begin{matrix}
 0   & +1_b & -1_b \\ 
-1_b &  0   & +1_b \\ 
+1_b & -1_b &  0      
\end{matrix}
\right.
\right]
\]
as a display matrix for $\cP_1$, \cref{Timothy} says that $\card{\CAs(\cP_1,=,\ell)}$ is equal to the number of solutions $x \in [\ell]^6$ of the homogeneous system $[L_0|L_1] x = 0$ where the first three coordinates of $x$ are colored red, the last three are colored blue, and no two entries of the same color have the same numerical value (proper coloring).
If we write
\[
x=\begin{bmatrix} x_0 \\ x_1 \\ x_2 \\ x_3 \\ x_4 \\ x_5 \end{bmatrix}
\]
Then $[L_0|L_1] x=0$ is equivalent to the system $x_0-x_3=x_1-x_4=x_2-x_5$, and we must find how many solutions this has in $[\ell]^6$ with $x_0,x_1,x_2$ distinct (which forces $x_3,x_4,x_5$ to be distinct).
If we set $y_j=x_j$ for $j \in \{0,1,2\}$ and $y_j=\ell-1-x_j$ for $j \in \{3,4,5\}$, then this is the same as counting how many $y=(y_0,\ldots,y_5) \in [\ell]^6$ with $y_0,y_1,y_2$ distinct make $y_0+y_3=y_1+y_4=y_2+y_5$, which is exactly what we counted when we computed $\CSols(\fC_0,\ell)$, so $\CSols(\fC_1,\ell)=\CSols(\fC_0,\ell)$.
\end{proof}
Now that we have determined the isomorphism classes of coherently contributory partitions and their solution count functions, we can calculate the third central moment of $\SSCC$.
\begin{theorem}
If $\ell$ is a positive integer, then
\[
\cmomv{3} \SSCC(f,g) = 8\ell^4-32\ell^3+40\ell^2-16\ell.
\]
\end{theorem}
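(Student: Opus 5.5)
The plan mirrors the variance computation: apply \cref{Vito} with $p=3$, which gives
\[
\cmomv{3} \SSCC(f,g) = \sum_{\fP \in \CIsom(3)} \card{\fP}\CSols(\fP,\ell).
\]
By \cref{Alexander}, $\CIsom(3)$ consists of exactly the two classes $\fC_0$ and $\fC_1$. Each contains $8$ partitions; this follows from the stabilizer computation in that proof, $96/12=8$ for both $\cP_0$ and $\cP_1$, and the statement's ``$\card{\fC_1}=\card{\fC_2}$'' is evidently a typo for $\card{\fC_0}=\card{\fC_1}$. By \cref{Bob}, both classes share the solution count function $(\ell^4-4\ell^3+5\ell^2-2\ell)/2$.

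Substituting these values, the sum becomes
\[
8\cdot\frac{\ell^4-4\ell^3+5\ell^2-2\ell}{2}+8\cdot\frac{\ell^4-4\ell^3+5\ell^2-2\ell}{2}=8(\ell^4-4\ell^3+5\ell^2-2\ell),
\]
which is exactly $8\ell^4-32\ell^3+40\ell^2-16\ell$, as claimed.

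The one point needing care is the range of $\ell$. \cref{Vito} holds for all $\ell\in\N$, so there is no issue there. As a sanity check, the formula factors as $8\ell(\ell-1)^2(\ell-2)$, which vanishes for $\ell\le 2$, consistent with \cref{Priscilla}.

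There is no real obstacle: all the substantive work (classifying $\CIsom(3)$, computing the orbit sizes, and counting solutions) has already been done in \cref{Alexander} and \cref{Bob}. What remains is bookkeeping, together with confirming that the orbit sizes are $8$ and not some other number affected by the typo. Dividing by $\ell^6$ via \eqref{Celeste} then yields \cref{Scott}.
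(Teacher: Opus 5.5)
Your proposal is correct and is essentially the paper's own proof: apply \cref{Vito} with $p=3$, using the two classes of size $8$ from \cref{Alexander} and the common solution count $(\ell^4-4\ell^3+5\ell^2-2\ell)/2$ from \cref{Bob}, which gives $8(\ell^4-4\ell^3+5\ell^2-2\ell)$. You are also right that ``$\card{\fC_1}=\card{\fC_2}$'' in \cref{Alexander} is a typo for $\card{\fC_0}=\card{\fC_1}$, as the stabilizer computation $96/12=8$ for both representatives confirms.
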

\begin{proof}
By \cref{Vito}, we have
\[
\cmomv{3} \SSCC(f,g) = \sum_{\fP \in \CIsom(3)} \card{\fP} \CSols(\fP,\ell),
\]
and since \cref{Alexander} tells us that we have two classes $\fC_0$ and $\fC_1$ in $\CIsom(3)$ with $\card{\fC_0}=\card{\fC_1}=8$, while \cref{Bob} tells us that $\CSols(\fC_0,\ell)=\CSols(\fC_1,\ell)=(\ell^4-4\ell^3+5\ell^2-2\ell)/2$, we obtain the desired result.
\end{proof}
By \eqref{Celeste}, we can divide the third central moment of $\SSCC$ by $\ell^6$ to get the third central moment of $\CDF$, and thus obtain \cref{Scott}.

\section{Computer-assisted calculation of higher moments}\label{Clarence}

We used a computer program to find the fourth through sixth central moments of $\SSCC$.  To determine the $p$th moment, the program first finds one display matrix $M$ for each coherent isomorphism class $\fC$ in $\CIsom(p)$ using the scheme outlined at the end of \cref{Theresa}.
For each such matrix $M$, the program computes the size of the $\Grp$-stabilizer of the partition $\cP$ represented by $M$; from this it computes the size of the $\Grp$-orbit of $\cP$, i.e., it computes $\card{\fC}$.
For each $\fC\in\CIsom(p)$ corresponding to display matrix $M$ and the partition $\cP$ represented by $M$, the program computes $\CSols(\fC,\ell)=\card{\CAs(\cP,=,\ell)}$ (see \cref{Baluga}) by using \cref{Timothy} to convert this to a problem of counting solutions of a homogeneous system that lie within a hypercube and satisfy a constraint that certain coordinates of a solution cannot take equal values.
Ehrhart theory \cite[Ch.~3]{Beck-Robins} is used to count solutions in hypercubes and M\"obius inversion techniques are used to deduct those that conflict with this constraint, so that we obtain $\CSols(\fC,\ell)$ as a quasipolynomial function of $\ell$.
Then the program uses \cref{Vito} to compute the formula for the $p$th central moment of $\SSCC$ as a function of $\ell$ from the sizes of the coherent isomorphism classes $\fC$ and the values of $\CSols(\fC,\ell)$ attached to those classes.

The program verifies our calculations (see Sections \ref{Anne} and \ref{Sidney}) of the $p$th central moment of $\SSCC$ for $p \in \{2$, $3\}$.
From the $p$th central moment of $\SSCC$ computed by the program for $p \in \{4$, $5$, $6\}$, we computed the $p$th central moment of $\CDF$ by dividing by $\ell^{2 p}$ (see \eqref{Celeste}), yielding Theorems \ref{Kurt}---\ref{Methuselah}.

Recall from the hand calculations in Sections \ref{Anne} and \ref{Sidney} that $\CIsom(2)$ has a single coherent isomorphism class of size $2$; that $\CIsom(3)$ has two coherent isomorphism classes, each of size $8$; and that all the counting functions associated with these classes are polynomials (i.e., quasipolynomials of period $1$).
The fourth central moment calculation took a single CPU on a desktop computer about one second to compute, and it found that $\CIsom(4)$ contains $17$ coherent isomorphism classes of partitions, with the set of class sizes being $\{8$, $12$, $48$, $96$, $192$, $384\}$.
The functions $\CSols(\fC,\ell)$ for $\fC\in\CIsom(4)$ are quasipolynomials of periods $1$ and $2$.
The fifth central moment calculation took a single CPU on a desktop computer only a few minutes to compute, and it found that $\CIsom(5)$ contains $80$ coherent isomorphism classes of partitions, with the set of class sizes being $\{160$, $320$, $384$, $768$, $960$, $1920$, $3840$, $7680\}$.
The functions $\CSols(\fC,\ell)$ for $\fC\in\CIsom(5)$ are quasipolynomials with the set of periods being $\{1$, $2$, $3$, $6\}$.
The sixth central moment calculation took a single CPU on a desktop computer about two months, and it found that $\CIsom(6)$ contains $1952$ coherent isomorphism classes of partitions, with the set of class sizes being $\{32$, $120$, $240$, $640$, $960$, $1280$, $1440$, $1920$, $2880$, $3840$, $5760$, $7680$, $11520$, $15360$, $23040$, $30720$, $46080$, $92160\}$.
The functions $\CSols(\fC,\ell)$ for $\fC\in\CIsom(6)$ are quasipolynomials with the set of periods being $\{1$, $2$, $3$, $4$, $5$, $6$, $12$, $20$, $30$, $60\}$.

\appendix

\section{Lemmas used to prove \cref{Francis}} \label{Cat}
This section proves some algebraic and probabilistic lemmas used in the proof of \cref{Francis}.
Throughout this section, if $f=(\ldots,f_0,f_1,f_2,\ldots)$ and $g=(\ldots,g_0,g_1,g_2,\ldots)$ are sequences of elements in a commutative ring, if $h=(f,g)$, if $E\subseteq\N$, and if $\tau\in\As(E)$, then we use the notation $h^{\tau}$ as a shorthand for $\left(\prod_{\alpha \in \leindexset} f_{\tau_\alpha}\right)\left(\prod_{\beta \in \reindexset} g_{\tau_\beta}\right)$.
The first result is essentially a distributive law.
\begin{lemma}\label{Gerald}
Let $k,\ell \in \N$, let $E_1,\ldots,E_k$ be pairwise disjoint subsets of $\N$, and let $E=\bigcup_{j=1}^k E_j$.  Suppose that $R$ is a commutative ring and that $h=(f,g)$ with $f=(\ldots,f_0,f_1,f_2,\ldots)$ and $g=(\ldots,g_0,g_1,g_2,\ldots)$ sequences of elements of $R$.  Then
\[
\sum_{\tau \in \As(E,=,\ell)} h^\tau = \prod_{j=1}^k \,\, \sum_{\tau^{(j)} \in \As(E_j,=,\ell)} h^{\tau^{(j)}}.
\]
\end{lemma}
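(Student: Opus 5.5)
The plan is to exhibit an explicit bijection between $\As(E,=,\ell)$ and the Cartesian product $\prod_{j=1}^k \As(E_j,=,\ell)$, and then use it to expand the right-hand side by the ordinary distributive law in the commutative ring $R$.

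\textbf{Step 1: the bijection.} Define
\[
\rho\colon \As(E,=,\ell) \to \prod_{j=1}^k \As(E_j,=,\ell), \qquad \rho(\tau)=(\tau\vert_{E_1},\ldots,\tau\vert_{E_k}).
\]
This map is well defined by \cref{Daphne}\ref{Mary}. Since the sets $E_j\times\bindexset$ are pairwise disjoint and their union is $E\times\bindexset$, a tuple $(\tau^{(1)},\ldots,\tau^{(k)})$ glues to a unique function $\tau$ on $E\times\bindexset$ with $\tau\vert_{E_j}=\tau^{(j)}$ for every $j$. This glued $\tau$ takes values in $[\ell]$. It also lies in $\As(E,=)$, because the defining condition $\tau_{e00}+\tau_{e01}=\tau_{e10}+\tau_{e11}$ involves only the single index $e$, and that condition holds within whichever $E_j$ contains $e$. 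So gluing is a two-sided inverse of $\rho$, and $\rho$ is a bijection.

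\textbf{Step 2: multiplicativity of $h^\tau$.} The products defining $h^\tau$ run over $\leindexset$ and $\reindexset$. Each of these index sets is the disjoint union over $j$ of the corresponding sets with $E_j$ in place of $E$. Because $R$ is commutative, we may regroup the factors to obtain
\[
h^\tau=\prod_{j=1}^k h^{\tau\vert_{E_j}}.
\]

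\textbf{Step 3: conclusion.} Expand the right-hand side of the lemma by the distributive law for a finite product of finite sums. This turns it into the sum over all tuples in $\prod_{j}\As(E_j,=,\ell)$ of $\prod_j h^{\tau^{(j)}}$. All these sets are finite since values lie in $[\ell]$; if some set is empty, both sides vanish consistently. Reindexing this sum through $\rho^{-1}$ and applying Step 2 gives the left-hand side.

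The edge cases need a brief check rather than a real argument. If $k=0$, then $E=\emptyset$, $\As(\emptyset,=,\ell)$ has exactly one element, the empty function, with $h^\tau=1$, and the empty product on the right is also $1$. Likewise, if some $E_j$ is empty, its factor is $1$ and it contributes trivially to the bijection. The only point requiring genuine attention is verifying that the gluing preserves the $=$ condition, and this rests on that condition being local to each equation index $e$.
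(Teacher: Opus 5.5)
Your proof is correct and is essentially the paper's argument. The paper gets the restriction/gluing bijection and the distributive-law expansion in one step by citing \cite[Lemma A.1]{Katz-Ramirez} with $m_j(\upsilon)=h^\upsilon$, and then regroups the factors of $h^\tau$ over the disjoint pieces $E_j\times[2]\times\{v\}$ exactly as in your Step 2.

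One small point, tacit in the paper as well: the $E_j$ must be finite for the sums and products to make sense, as they are in the application where $E_j\subseteq[p]$. So your claim that all the sets $\As(E_j,=,\ell)$ are finite should rest on that hypothesis, not only on the values lying in $[\ell]$.
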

\begin{proof}
For each $j$ in $\{1,2,\ldots,k\}$, we let $m_j\colon \As(E_j,=,\ell) \to R$ be the map with $m_j(\upsilon)=h^\upsilon$.  Then apply \cite[Lemma A.1]{Katz-Ramirez} with our $m_j$ in place of their $g_j$ to obtain
\begin{align*}
\prod_{j=1}^k \,\, \sum_{\tau^{(j)} \in \As(E_j,=,\ell)} h^{\tau^{(j)}}
& = \sum_{\tau \in \As(E,=,\ell)} \,\, \prod_{j=1}^k m_j(\tau\vert_{E_j})\\
& = \sum_{\tau \in \As(E,=,\ell)} \,\, \prod_{j=1}^k h^{\tau\vert_{E_j}}\\
& = \sum_{\tau \in \As(E,=,\ell)} \,\, \prod_{j=1}^k \left(\left(\prod_{\alpha \in E_j\times [2]\times\{0\}} f_{\tau_\alpha}\right) \left(\prod_{\beta \in E_j\times [2]\times\{1\}} g_{\tau_\beta}\right)\right) \\
& = \sum_{\tau \in \As(E,=,\ell)} \left(\prod_{\alpha \in E \times [2]\times\{0\}} f_{\tau_\alpha}\right) \left(\prod_{\beta \in E \times [2]\times\{1\}} g_{\tau_\beta}\right)  \\
& = \sum_{\tau \in \As(E,=,\ell)} h^\tau. \qedhere
\end{align*}
\end{proof}
The following applies expectations to a distributive law.
\begin{lemma}\label{Hortense}
Let $k,\ell \in \N$, let $E_1,\ldots,E_k$ be pairwise disjoint subsets of $\N$, and let $E=\bigcup_{j=1}^k E_j$.  Then
\[
\sum_{\tau \in \As(E,=,\ell)} \,\, \prod_{j=1}^k \evh(h^{\tau\vert_{E_j}}) = \prod_{j=1}^k \,\, \sum_{\tau^{(j)} \in \As(E_j,=,\ell)} \evh(h^{\tau^{(j)}}).
\]
\end{lemma}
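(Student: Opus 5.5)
The plan is to reduce \cref{Hortense} to the same distributive law already used for \cref{Gerald}, namely \cite[Lemma A.1]{Katz-Ramirez}, applied to ordinary real-valued functions instead of ring-valued monomials. For each $j \in \{1,\ldots,k\}$ we define $m_j\colon \As(E_j,=,\ell) \to \mathbb{R}$ by $m_j(\upsilon)=\evh(h^{\upsilon})$. Here $h=(f,g)$ ranges over $\Pair(\ell)$, so each $h^\upsilon$ is a $\{-1,0,1\}$-valued random variable and $m_j(\upsilon)$ is simply a real number. The right-hand side of the claimed identity is then $\prod_{j=1}^k \sum_{\tau^{(j)}\in\As(E_j,=,\ell)} m_j(\tau^{(j)})$.

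The key step is the bijection underlying \cite[Lemma A.1]{Katz-Ramirez}. Since the $E_j$ are pairwise disjoint with union $E$, the restriction map
\[
\tau \mapsto (\tau\vert_{E_1},\ldots,\tau\vert_{E_k})
\]
is a bijection from $\As(E,=,\ell)$ onto $\prod_{j=1}^k \As(E_j,=,\ell)$. Membership in $\As(E,=,\ell)$ is a conjunction of conditions that are local to each equation index $e$ (the balance equation $\tau_{e00}+\tau_{e01}=\tau_{e10}+\tau_{e11}$ and values lying in $[\ell]$). Consequently:
\begin{enumerate}[label=(\roman*)]
\item the restrictions land in the right sets, by \cref{Daphne}\ref{Mary};
\item conversely, gluing arbitrary $\tau^{(j)}\in\As(E_j,=,\ell)$ yields an element of $\As(E,=,\ell)$.
\end{enumerate}

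Expanding the product of sums then gives a sum over tuples $(\tau^{(1)},\ldots,\tau^{(k)})$ of $\prod_j m_j(\tau^{(j)})$. Reindexing through the bijection turns this into $\sum_{\tau\in\As(E,=,\ell)}\prod_j \evh(h^{\tau\vert_{E_j}})$, which is the left-hand side.

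I expect no real obstacle. The only point needing care is that $\evh$ sits inside the product on the left, so no independence or multiplicativity of expectation is being claimed. This is purely a finite distributive law over real numbers, and the one verification required is the bijectivity of the restriction map. If one prefers, one may simply cite \cite[Lemma A.1]{Katz-Ramirez} with $m_j$ in place of its $g_j$ and $R=\mathbb{R}$, exactly as in the proof of \cref{Gerald}.
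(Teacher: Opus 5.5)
Your proposal is correct and is essentially the paper's proof: the paper also defines $m_j(\upsilon)=\evh(h^{\upsilon})$ on $\As(E_j,=,\ell)$ and applies \cite[Lemma A.1]{Katz-Ramirez} with $m_j$ in place of its $g_j$. The only difference is that you also spell out the restriction bijection $\As(E,=,\ell)\to\prod_{j}\As(E_j,=,\ell)$ that underlies that lemma, and you take $m_j$ to be $\mathbb{R}$-valued where the paper uses $\C$, which makes no difference.
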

\begin{proof}
For each $j$ in $\{1,2,\ldots,k\}$, we let $m_j \colon \As(E_j,=,\ell) \to \C$ be the map with $m_j(\upsilon)= \evh(h^\upsilon)$.
Then this becomes a special case of \cite[Lemma A.1]{Katz-Ramirez} where one uses our $m_j$ in place of their $g_j$.
\end{proof}
Our final lemma on the expectation value of a product of sequence terms is what allows us to change our probability calculation in the proof of \cref{Francis} to a counting problem.
\begin{lemma}\label{Irene}
Let $p,\ell \in \N$, let $\cP \in \Coh(p)$, let $\tau \in \CAs(\cP,=,\ell)$, and let $E \subseteq [p]$.  Then
\[
\evh(h^{\tau\vert_E}) = \begin{cases}
1 & \text{if $\cP_E$ is even} \\
0 & \text{otherwise.}
\end{cases}
\]
\end{lemma}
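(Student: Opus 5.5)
The plan is to rewrite $h^{\tau\vert_E}$ as a product of distinct independent uniform $\pm 1$ variables, each raised to the power of the size of a class of $\cP_E$, and then evaluate the expectation factor by factor. Write $h=(f,g)$ with $f,g$ drawn uniformly, so that $f_0,\ldots,f_{\ell-1},g_0,\ldots,g_{\ell-1}$ are $2\ell$ independent random variables, each uniform on $\{-1,1\}$. Since $\tau\in\CAs(\cP,=,\ell)$, all values of $\tau$ lie in $[\ell]$. Hence every factor $f_{\tau_\alpha}$ or $g_{\tau_\beta}$ appearing in
\[
h^{\tau\vert_E}=\Bigl(\prod_{\alpha\in E\times[2]\times\{0\}} f_{\tau_\alpha}\Bigr)\Bigl(\prod_{\beta\in E\times[2]\times\{1\}} g_{\tau_\beta}\Bigr)
\]
is one of these random variables.

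First, group the factors by the classes of $\cP_E$. By \cref{Daphne}\ref{Quentin} (or just \ref{Nancy}), $\tau\vert_E\in\CAs(\cP_E)$, and $\cP_E$ is coherent. Consider a class $P\in\cP_E$ that is $0$-coherent. Then $\tau$ takes a single common value $j_P$ on $P$, and the factors indexed by $P$ contribute $f_{j_P}^{\card{P}}$. Similarly, a $1$-coherent class $P$ contributes $g_{j_P}^{\card{P}}$.

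Next, check that distinct classes use distinct random variables. Take two distinct $0$-coherent classes $P,Q$ of $\cP_E$. By the definition of coherent induction, applied to $\tau\vert_E$ (only elements with the same third coordinate are compared), we get $j_P\neq j_Q$, so they involve different $f$'s. The same argument applies to two $1$-coherent classes and different $g$'s. Classes of different coherence types involve an $f$ versus a $g$, which are independent of each other. This injectivity is the main point requiring care, and it is exactly what the "only if" direction in the definition of $\CAs$ supplies.

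Therefore
\[
h^{\tau\vert_E}=\prod_{P\in\cP_E} X_P^{\card{P}},
\]
where the $X_P$ are distinct, mutually independent uniform $\pm1$ variables. Independence gives
\[
\evh\bigl(h^{\tau\vert_E}\bigr)=\prod_{P}\evh\bigl(X_P^{\card{P}}\bigr).
\]
Each factor equals $1$ if $\card{P}$ is even and $0$ if $\card{P}$ is odd, because $\E X=0$ and $X^2=1$. So the product is $1$ exactly when every class of $\cP_E$ has even cardinality, i.e., when $\cP_E$ is even, and it is $0$ otherwise. All classes are finite, so evenness in the paper's sense reduces to even cardinality. When $E=\emptyset$, the product is empty, $\cP_E$ is the empty partition (vacuously even), and the value is $1$, consistent with the claim.
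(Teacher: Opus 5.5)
Your proof is correct and follows essentially the same route as the paper. Both arguments use \cref{Daphne}\ref{Nancy} to get $\tau\vert_E\in\CAs(\cP_E)$, so that each distinct $f_j$ or $g_k$ in $h^{\tau\vert_E}$ carries as its exponent exactly the size of a class of $\cP_E$, and then use independence of the uniform $\pm 1$ variables to finish. The only difference is bookkeeping: the paper groups the factors by value and identifies the nonempty coherent fibers with the classes, while you group by class and check that distinct same-color classes get distinct indices.
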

\begin{proof}
By our conventions
\begin{align*}
h^{\tau\vert_E}
& = \left(\prod_{\alpha \in E \times [2]\times\{0\}} f_{(\tau\vert_E)_\alpha} \right) \left(\prod_{\beta \in E \times [2]\times\{1\}} g_{(\tau\vert_E)_\beta}\right) \\
& = \left(\prod_{j \in [\ell]} f_j^{\card{(\tau\vert_E)^{-1}(\{j\})\cap(E\times[2]\times\{0\})}} \right) \left(\prod_{k \in [\ell]} g_k^{\card{(\tau\vert_E)^{-1}(\{k\})\cap (E\times[2]\times\{1\})}}\right).
\end{align*}
Since our expectation $\evh$ makes $h=(f,g)$ uniformly distributed over all $2^{2\ell}$ pairs of binary sequences of length $\ell$, we see that $f_0,f_1,\ldots,f_{\ell-1},g_0,g_1,\ldots,g_{\ell-1}$ are independent and uniformly distributed on $\{-1,1\}$.  Thus, $\evh(h^{\tau\vert_E})$ is $1$ if and only if $|(\tau\vert_E)^{-1}(\{j\})\cap(E\times[2]\times\{0\})|$ and $|(\tau\vert_E)^{-1}(\{k\})\cap(E\times[2]\times\{1\})|$ are even for all $j, k \in [\ell]$.
Otherwise  $\evh(h^{\tau\vert_E})$ is $0$.
But since $\tau\in\CAs(\cP)$, \cref{Daphne}\ref{Nancy} shows that $\tau\vert_E\in\CAs(\cP_E)$, so that the classes of $\cP_E$ are the nonempty sets among $|(\tau\vert_E)^{-1}(\{j\})\cap(E\times[2]\times\{0\})|$ and $|(\tau\vert_E)^{-1}(\{k\})\cap(E\times[2]\times\{1\})|$ for $j, k \in [\ell]$.
So $\evh(h^{\tau\vert_E})$ is $1$ if and only if all these classes are of even size, i.e., if and only if $\cP_E$ is an even partition.
\end{proof}

\end{document}